\documentclass{article}

\synctex=1
\pdfoutput=1

\usepackage[T1]{fontenc}
\usepackage{amsfonts}
\usepackage{amsmath}
\usepackage{mathrsfs}
\usepackage{amssymb}
\usepackage{amscd}
\usepackage{amsthm}
\usepackage{times}
\usepackage{multirow}
\usepackage{dsfont}
\usepackage{url}
\usepackage{bm}
\usepackage{pgf}
\usepackage{tikz}
\usepackage{slashed}
\usepackage[pdftex]{hyperref}
\usepackage{array}
\usepackage{natbib}
\usepackage{cancel}
\usepackage{stackrel}
\usepackage{graphicx}
\usepackage[normalem]{ulem}
\usetikzlibrary{arrows,automata,snakes}
\DeclareGraphicsExtensions{.ps,.eps,.pdf}
\usepackage{extarrows}

\newtheorem{subproperty}{Property}
\newtheorem{proposition}{Proposition}
\newtheorem{lemma}{Lemma}
\newtheorem{theorem}{Theorem}
\newtheorem{corollary}{Corollary}

\theoremstyle{definition}
\newtheorem{remark}{Remark}


\newtheorem{definition}{Definition}

\makeatletter
\def\Eqlfill@{\arrowfill@\Relbar\Relbar\Relbar}
\newcommand{\longmodels}[1][]{\,|\!\!\!\ext@arrow 0359\Eqlfill@{#1}}
\makeatother



\usepackage{authblk}

\newcommand{\pair}[2]{(#1, #2)}

\newcommand{\triple}[3]{( #1,#2,#3 )}

\newcommand{\Nat}{\mathbb{N}}
\newcommand{\Zed}{\mathbb{Z}}



\newcommand{\D}{\mathcal{D}}

\newcommand{\U}{\mathbf{U}}
\newcommand{\Snc}{\mathbf{S}}
\newcommand{\T}{\mathbf{T}}
\newcommand{\R}{\mathbf{R}}
\newcommand{\N}{\mathbb{N}}
\newcommand{\Z}{\mathbf{Z}}
\newcommand{\Real}{\mathbb{R}}
\newcommand{\Q}{\mathbb{Q}}

\newcommand{\X}{\mathbf{X}}
\newcommand{\Y}{\mathbf{Y}}

\newcommand{\Lng}{\mathscr {L}}
\newcommand{\iFF}{\text{ iff }}

\newcommand{\zot}{$\mathds{Z}$ot}
\newcommand{\G}{\mathbf{G}}
\newcommand{\F}{\mathbf{F}}
\newcommand{\siff}{\iFF}
\newcommand{\A}{\mathcal{A}}
\newcommand{\aX}{\mathrm{X}}
\newcommand{\aY}{\mathrm{Y}}


\newcommand{\step}[1]{\xrightarrow{#1}}

\newcommand{\pspace}{\textsc{PSPACE}}



\newcommand{\interval}[2]{[#1,#2]}

\newcommand{\PA}{{\rm PA}}
\newcommand{\QFP}{{\rm QFP}}
\newcommand{\DL}{{\rm DL}}
\newcommand{\IPC}{{\rm IPC}}

\newcommand{\cltl}[2]{{\rm CLTL_{#1}^{#2}}}
\newcommand{\LTL}{{\rm LTL}}

\newcommand{\symodels}{\longmodels{\mbox{\it{\tiny sym}}}}

\newcommand{\set}[1]{\{ #1 \}}

\newcommand{\weak}[1]{\overline{#1}}

\newcommand{\fresh}{\mathit{fresh}}
\newcommand{\terms}{\mathit{terms}}
\newcommand{\shift}{\mathit{shift}}
\newcommand{\shiftleft}{\mathit{sl}}  
\newcommand{\noprop}{\mathit{np}}

\newcommand{\sv}{\mathit{sv}}
\newcommand{\var}{\mathit{var}}
\newcommand{\const}{\mathit{const}}

\newcommand{\locfwd}{\preccurlyeq}
\newcommand{\locfwds}{\prec}
\newcommand{\locbwd}{\succcurlyeq}
\newcommand{\locbwds}{\succ}
\newcommand{\fwd}{\stackrel{\preccurlyeq}{\frown}}
\newcommand{\fwds}{\stackrel{\prec}{\frown}}
\newcommand{\bwd}{\stackrel{\succcurlyeq}{\frown}}
\newcommand{\bwds}{\stackrel{\succ}{\frown}}
\newcommand{\pfwd}[1]{\stackrel{#1}{\frown}}

\newcommand{\Predfwd}{\bm{F}^{\leq}}
\newcommand{\Predfwds}{\bm{F}^{<}}
\newcommand{\Predlocfwd}{\bm{f}^{\leq}}
\newcommand{\Predlocfwds}{\bm{f}^{<}}
\newcommand{\Predbwd}{\bm{B}^{\geq}}
\newcommand{\Predbwds}{\bm{B}^{>}}
\newcommand{\Predlocbwd}{\bm{b}^{\geq}}
\newcommand{\Predlocbwds}{\bm{b}^{>}}


\begin{document}

\tableofcontents


\title{Constraint LTL Satisfiability Checking without Automata\footnote{This research was partially supported by Programme IDEAS-ERC and Project 227977-SMScom.}}

\author[1]{Marcello M. Bersani}
\author[1]{Achille Frigeri}
\author[1]{Angelo Morzenti}
\author[1]{Matteo Pradella}
\author[1]{Matteo Rossi}
\author[1]{Pierluigi San Pietro}
\affil[1]{Politecnico di Milano - DEIB}

\maketitle


\begin{abstract}
This paper introduces a novel technique to decide the satisfiability of formulae written in the language of
Linear Temporal Logic with both future and past operators and atomic formulae belonging to constraint system $\D$ (CLTLB($\D$) for short).
The technique is based on the concept of \emph{bounded satisfiability}, and hinges on an encoding
of CLTLB($\D$) formulae into QF-EU$\D$, the theory of quantifier-free equality and
uninterpreted functions combined with $\D$.
Similarly to standard LTL, where bounded model-checking and SAT-solvers can be
used as an alternative to automata-theoretic approaches to
model-checking, our approach allows users to solve the
satisfiability problem for CLTLB($\D$) formulae through SMT-solving techniques,
rather than by checking the emptiness of the language of a suitable automaton.
The technique is effective, and it has been implemented in our \zot{} formal verification tool.
\end{abstract}

%
%


\section{Introduction}

Finite-state system verification has attained great successes, both using automata-based and logic-based techniques. 
Examples of the former are the so-called explicit-state model checkers~\cite{Holzmann97} and symbolic model checkers~\cite{CMCH96}. However, some of the best results in practice
have been obtained by logic-based techniques, such as Bounded Model Checking (BMC)~\cite{BCCZ99}.
In BMC, a finite-state machine $A$ (typically, a version of B\"uchi automata) and a desired property $P$ expressed in Propositional Linear Temporal Logic (PLTL) 
are translated into a Boolean formula  $\phi$ 
to be fed 
to a SAT solver. The translation is made finite by bounding the number of time instants.
However,   infinite behaviors, which are crucial in proving, e.g., liveness properties, are also
considered by using the well-known property that a B\"uchi automaton accepts an infinite behavior
if, and only if, it accepts an infinite periodic behavior. 
Hence, chosen a bound $k>0$, a Boolean formula $\phi_k$ is built, such that $\phi_k$ is satisfiable if and
only if there exists an infinite periodic behavior 
of the form $\alpha \beta^\omega$, with $|\alpha \beta| \le k$, that is compatible with system $A$ while violating property $P$. 
This procedure allows counterexample detection, but it is not complete, since the violations of property $P$ requiring ``longer`` behaviors,
i.e., of the form $\alpha \beta^\omega$ with $|\alpha \beta| > k$, are not detected. 
However, in many practical cases it is possible to find bounds large enough for representing counterexamples, but small enough so that the SAT solver can actually find them in a reasonable 
time. 

Clearly, the BMC procedure can be used to check satisfiability of a PLTL formula, without
considering a finite state system $A$.  This has practical applications, since a PLTL formula can represent both the system
and the property to be checked (see, e.g., \cite{PMS12}, where 
the translation into Boolean formulae is made more specific for dealing with satisfiability checking
and metric temporal operators). 
We call this case {\em Bounded Satisfiability Checking} (BSC), which consists
in solving a so-called Bounded Satisfiability Problem:
Given a PLTL formula $P$, and chosen a bound $k>0$, define a Boolean formula $\phi_k$ such
that $\phi_k$ is satisfiable if, and only if, there exists an infinite periodic behavior 
of the form $\alpha \beta^\omega$, with $|\alpha \beta| \le k$, that satisfies $P$.

More recently, great attention has been given to the automated verification of {\em infinite}-state systems.  
In particular, many extensions of temporal logic and automata have been proposed, typically by adding integer variables and arithmetic constraints.
For instance, 
PLTL has been extended to allow formulae with various kinds of arithmetic
constraints \cite{CC00,DD02}. This has led to the study of CLTL($\D$), a general framework extending the future-only fragment of PLTL
by allowing arithmetic constraints belonging to a generic constraint system $\D$. 
The resulting logics are expressive and well-suited to define infinite-state systems and their properties, 
but, even for the bounded case, their satisfiability is typically undecidable \cite{DG06}, since
they can simulate general two-counter machines when $\D$ is powerful enough (e.g., Difference Logic).

However, there are some decidability results, which allow in principle for some kind of automatic
verification.
Most notably,  satisfiability of CLTL($\D$)
is decidable (in $\pspace$) when $\D$ is the class of Integer Periodic Constraints (IPC$^*$) \cite{DG07}, or when 
it is the 
structure $\triple{D}{<}{=}$ with $D\in\{\N,
\Zed, \Q, \Real\}$ \cite{DD07}.
In these cases, decidability is shown by using an automata-based approach similar to the standard case for LTL, by reducing satisfiability checking to the verification of the emptiness of B\"uchi automata.
Given a CLTL($\D$) formula $\phi$, with $\D$ as in the above cases, it is possible to define an automaton $\A_\phi$ such that $\phi$ is satisfiable   
if, and only if, the language recognized by $\A_\phi$ is not empty. 

These results, although of great theoretical interest, are of limited practical relevance for what concerns a possible implementation, since the involved constructions are very inefficient, as they rely on the complementation of B\"{u}chi automata.

In this paper, we extend the above results to a more general logic, called CLTLB($\D$),
which is an extension of PLTLB (PLTL with Both future and past
operators) with arithmetic constraints in constraint system $\D$, and define a
procedure for satisfiability checking that does not rely on automata constructions. 

The idea of the procedure is to determine satisfiability 
by checking a finite number of $k$-satisfiability problems.
Informally, $k$-satisfiability amounts to looking for ultimately
periodic {\em symbolic}  models of the form $\alpha\beta^\omega$, i.e., such that
prefix $\alpha\beta$ of length $k$ admits a bounded arithmetic model (up to instant $k$).
Although the $k$-bounded problem is defined with respect to a bounded
arithmetical model, it provides a representation of
infinite symbolic models by means of ultimately periodic words.
When CLTLB($\D$) has the property that its ultimately periodic symbolic
models, of the form $\alpha\beta^\omega$, always admit an arithmetic model, 
then the $k$-satisfiability problem can be reduced to
satisfiability of QF-EU$\D$ (the theory of quantifier-free
equality and uninterpreted functions combined with $\D$). In this case, $k$-satisfiability is equivalent to satisfiability over infinite models.

There are important examples of constraint systems $\D$, such as for example IPC$^*$, in which determining the existence of arithmetical models is achieved by complementing a B\"{u}chi automaton $\A_C$.
In this paper we define a novel condition, tailored to ultimately periodic models of the form $\alpha\beta^\omega$, which is proved to be equivalent to the one captured by automaton $\A_C$.
Thanks to this condition, checking for the existence of arithmetical models can be done in a bounded way, without resorting to the construction (and the complementation) of B\"{u}chi automata.
This is the key result that makes our decision procedure applicable in practice.

Symmetrically to standard LTL, where bounded model-checking and
SAT-solvers can be
used as an alternative to automata-theoretic approaches to
model-checking, reducing satisfiability to
$k$-satisfiability allows us to determine the satisfiability of CLTLB($\D$) formulae through Satisfiability Modulo Theories (SMT) solvers, instead of checking the emptiness of a B\"uchi automaton.
Moreover, when the length of all prefixes
$\alpha\beta$ to be tested is bounded by some $K \in \Nat$, 
then the number of bounded problems to be solved is finite.
Therefore, we also
prove that $k$-satisfiability is {\em complete} with respect to
the satisfiability problem, i.e., by checking at most $K$ 
bounded problems the satisfiability of CLTLB($\D$) formulae can always be determined.

To the best of our knowledge, our results provide the first effective implementation of a procedure for solving the CLTLB($\D$) satisfiability problem: we show that the encoding into QF-EU$\D$ is linear in the size of the formula to be checked and quadratic in the length $k$. The procedure is implemented in the \zot{} toolkit\footnote{\url{http://zot.googlecode.com}}, which relies on standard SMT-solvers, such as Z3 \cite{z3}.

The paper is organized as follows. 
Section~\ref{sec:cltlb} describes CLTL($\D$) and CLTLB($\D$), and their main known decidability results and techniques.
Section~\ref{sec:SATnoaut} defines the $k$-satisfiability problem, introduces the bounded encoding of CLTLB($\D$) formulae, and shows its correctness.
Section~\ref{subsection-A_C} introduces a novel, bounded condition for checking the satisfiability of CLTLB($\D$) formulae when $\D$ is IPC$^*$, and discusses some cases under which the encoding can be simplified.
Section~\ref{sec:ComplAndCompl} studies the complexity of the defined encoding and proves that, provided that $\D$ satisfies suitable conditions, there exists a completeness threshold.
Section~\ref{sec-applications} illustrates an application of the CLTLB logic and the \zot{} toolkit to specify and verify a system behavior.
Section~\ref{section-related-works} describes relevant related works.
Finally, Section~\ref{section-conclusions} concludes the paper highlighting some possible applications of the implemented decision procedure for CLTLB($\D$). 

\section{Preliminaries}\label{sec:cltlb}

This section presents an extension to Kamp's \cite{Kam68} PLTLB,
by allowing formulae over a constraint system.
As suggested in \cite{CC00}, and unlike the approach of \cite{D04}, the propositional variables
of this logic are 
Boolean terms or atomic arithmetic constraints.

\subsection{Language of constraints}

Let $V$ be a finite set of variables; a {\em constraint system} is a pair
$\D=(D, \mathcal{R})$ where $D$ is a specific domain of interpretation for variables and
constants and
$\mathcal{R}$ is a family of relations on $D$.
An {\em atomic $\D$-constraint} is a term of the form
$R(x_1,\dots,x_n)$, where $R$
is an $n$-ary relation of $\mathcal{R}$ on domain $D$ and $x_1, \dots, x_n$ are variables.
A $\D$-valuation is a mapping $v: V \to D$, i.e., an assignment of a value in $D$
to each variable.
A constraint is {\em satisfied} by a $D$-valuation $v$,
written $v \models_\D R(x_1,\dots,x_n)$, if $( v(x_1),\dots,v(x_n) )
\in R$.

In Section \ref{subsection-A_C} we consider $\D$ to be
Integer Periodic Constraints (IPC$^*$) or its fragments (e.g.,
$\triple{\Zed}{<}{=}$ or $\triple{\N}{<}{=}$) and
$\triple{D}{<}{=}$ when $<$ is a dense order without endpoints, e.g.,
  $D \in \{\mathbb{R}, \Q\}$. 
The language IPC$^*$ is defined by the following
grammar, where $\xi$ is the axiom:
\begin{gather*}
\xi := \theta \mid x < y \mid \xi \wedge \xi \mid \neg \xi \\
\theta := x \equiv_c d \mid x \equiv_c y + d \mid x = y \mid x <
d\mid x=d \mid \theta \wedge \theta \mid \neg \theta
\end{gather*}
where $x,y \in V$, $c \in \N^+$ and $d \in \Zed$.
The first definition of IPC$^*$ can be found in \cite{DG05}; it
is different from ours since it allows existentially
quantified formulae (i.e., $\theta := \exists x \: \theta$) to be part of
the language.
However, since IPC$^*$ is a fragment of Presburger arithmetic, it
has the same expressivity as the above quantifier-free version (but with an exponential blow-up to remove quantifiers).

Given a valuation $v$, the satisfaction relation $\models_\D$ is defined:
\begin{itemize}
\item $v \models_\D x \sim y$ iff $v(x) \sim v(y)$;
\item $v \models_\D x \sim d$ iff $v(x) \sim d$;
\item $v \models_\D x \equiv_c d$ iff $v(x) - d = kc$ for some $k \in \Zed$;
\item $v \models_\D x  \equiv_c y + d$ iff $v(x) - v(y) - d = kc$ for some
  $k \in \Zed$;
\item $v \models_\D \xi_1 \wedge \xi_2$ iff $v \models_\D \xi_1$ and
  $v \models_\D \xi_2$;
\item $v \models_\D \neg\xi$ iff $v \not\models_\D \xi$;
\end{itemize}
where $\sim$ is either $=$ or $<$.
A constraint is \emph{satisfiable} if there exists a valuation $v$ such
that $v \models_\D \xi$.
Given a set of IPC$^*$ constraints $C$, we write $v \models_\D C$
when $v \models_\D \xi$ for every $\xi \in C$.

\subsection{Syntax of CLTLB}\label{subsection-syntax}

CLTLB($\D$) is defined as an extension of PLTLB,
where atomic formulae are relations from $\mathcal{R}$ over arithmetic temporal terms defined in $\D$.
The resulting logic is actually equivalent to the quantifier-free fragment of
first-order LTL over signature $\mathcal{R}$.
Let $x$ be a variable; \emph{arithmetic temporal terms} (a.t.t.) are
defined as:
\[
   \alpha := c \mid x \mid \aX \alpha \mid \aY \alpha.
\]
where $c$ is a constant in $D$ and $x$ is a variable over $D$.
The syntax of (well formed) formulae of CLTLB($\D$) is recursively defined as follows:
\begin{equation*}
  \phi :=
  \begin{gathered}
    R(\alpha_1, \dots, \alpha_n) \mid \phi \wedge \phi \mid \neg \phi \mid
   \X\phi \mid \Y\phi 
\mid \phi\U\phi \mid \phi\Snc\phi
  \end{gathered}
\end{equation*}
where $\alpha_i$'s are a.t.t.'s, $R \in \mathcal{R}$; $\X$, $\Y$, $\U$, and $\Snc$  are the
usual ``next'',  ``previous'', ``until'',  and ``since'' 
operators from LTL.

Note that $\aX$ and $\X$ are two distinct operators;
if $\phi$ is a formula, $\X \phi$ has the standard PLTL meaning,
while $\aX \alpha$ denotes the \emph{value}
of a.t.t. $\alpha$ in the next time instant.
The same holds for $\aY$ and $\Y$, which refer to the previous
time instant.
Thanks to the obvious property that
$\aX \aY x \equiv \aY \aX x \equiv x$,  
we will assume, with no loss of generality, that a.t.t.'s do not
contain any nested alternated occurrences of the operators $\aX$ and $\aY$. 
Each relation symbol is associated with a natural number denoting
its arity.
As we will see in Section \ref{section-correctness}, we can treat
separately $0$-ary relations, i.e., propositional letters, whose set is denoted by
$\mathcal{R}_0$. We also write  CLTLB$(\D,\mathcal{R}_0)$ to denote
the language CLTLB over 
the constraint system $\D$ whose $0$-ary relations are exactly those in $\mathcal{R}_0$.
CLTL$(\D)$ is the future-only fragment of CLTLB$(\D)$.

The \emph{depth} $|\alpha|$ of an a.t.t. is
the total amount of temporal shift needed in evaluating $\alpha$: 
\begin{equation*}
|x| = 0, \quad |\aX\alpha| = |\alpha| + 1, \quad |\aY\alpha| = |\alpha| - 1.
\end{equation*}

Let $\phi$ be a CLTLB($\D,\mathcal{R}_0$) formula, $x$ a variable of $V$ and
$\Gamma_x(\phi)$ the set of all a.t.t.'s occurring in $\phi$ in which $x$ appears. We define the ``look-forwards''
$\lceil \phi \rceil_x$ and
``look-backwards''
$\lfloor \phi \rfloor_x$ of $\phi$ relatively to $x$ as:
\[
\lceil \phi \rceil_x = \max_{{\alpha_i} \in \Gamma_x(\phi)}\{0,
|\alpha_i|\}, \quad
\lfloor \phi \rfloor_x = \min_{{\alpha_i} \in \Gamma_x(\phi)}\{0, |\alpha_i|\}.
\]
The definitions above naturally extend to $V$ by letting
$\lceil \phi \rceil = \max_{x\in
V} \{\lceil \phi \rceil_x\}$,
$\lfloor \phi \rfloor =  \min_{x\in V}\{\lfloor \phi \rfloor_x\}$.
Hence, $\lceil \phi \rceil$ ($\lfloor \phi
\rfloor$) is the largest (smallest) depth of all the a.t.t.'s of $\phi$, representing
the length of the future (past) segment needed to evaluate 
$\phi$ in the current instant.

\subsection{Semantics}\label{subsection-semantics}

The semantics of CLTLB($\D,\mathcal{R}_0$) formulae is defined with respect to
a strict linear order representing time $\pair{\Zed}{<}$.
Truth values of propositions in $\mathcal{R}_0$, and values of variables belonging
to $V$ are defined by a pair
$\pair{\pi}{\sigma}$ where $\sigma : \Zed \times V \to D$ is a function
which defines the value of variables at each position in $\Zed$ and
$\pi :\Zed \to \wp(\mathcal{R}_0)$ is a function associating a subset of the set of
propositions with each element of $\Zed$.
Function $\sigma$ is extended to terms as follows:
\[
\sigma(i, \alpha) = \sigma(i+|\alpha|, x_{\alpha})
\]
where $x_\alpha$ is the variable in $V$ occurring in term $\alpha$, if any; otherwise $x_\alpha = \alpha$.
The semantics of a CLTLB($\D,\mathcal{R}_0$) formula $\phi$ at instant $i\geq 0$
over a linear structure $\pair{\pi}{\sigma}$ is recursively defined by
means of a satisfaction relation
$\models$ as follows, for every formulae $\phi, \psi$ and
for every a.t.t. $\alpha$:
\begin{equation*}
\begin{aligned}
\pair{\pi}{\sigma}, i \models p  &\iFF  p \in \pi(i) \text{ for } p \in \mathcal{R}_0 \\
\pair{\pi}{\sigma}, i \models R(\alpha_1, \dots, \alpha_n) &\iFF
  (\sigma(i+|\alpha_1|, x_{\alpha_1}), \dots, \sigma(i+|\alpha_n|, x_{\alpha_n}) ) \in R
  \text{ for } R\in\mathcal{R}\setminus\mathcal{R}_0 \\
\pair{\pi}{\sigma}, i \models \neg \phi &\iFF  \pair{\pi}{\sigma},i \not\models \phi \\
\pair{\pi}{\sigma}, i \models \phi \wedge \psi &\iFF  \pair{\pi}{\sigma}, i \models \phi
\, \text{and} \, \pair{\pi}{\sigma}, i \models \psi\\
\pair{\pi}{\sigma}, i \models \X \phi &\iFF \pair{\pi}{\sigma},i+1 \models \phi \\
\pair{\pi}{\sigma}, i \models \Y \phi &\iFF \pair{\pi}{\sigma},i-1 \models
\phi \wedge i>0\\
\pair{\pi}{\sigma}, i \models \phi\U\psi &\iFF
\exists \, j\geq i: \pair{\pi}{\sigma},j \models \psi \ \wedge
\pair{\pi}{\sigma},n \models \phi \ \forall n: \ i\leq n < j \\
\pair{\pi}{\sigma}, i \models \phi\Snc\psi &\iFF
\exists \, 0\leq j \leq i: \pair{\pi}{\sigma},j \models \psi \, \wedge
\pair{\pi}{\sigma},n \models \phi \ \forall n: \ j < n \leq i. \\
\end{aligned}
\end{equation*}

A formula $\phi \in$ CLTLB($\D,\mathcal{R}_0$) is \emph{satisfiable} if there exists
a pair $\pair{\pi}{\sigma}$ such that $\pair{\pi}{\sigma},0 \models
\phi$; in this case, we say that $\pair{\pi}{\sigma}$ is a \emph{model} of $\phi$, $\pi$
is a \emph{propositional model} and $\sigma$ is an \emph{arithmetic model}.
By introducing as primitive the connective $\vee$, the dual operators
``release'' $\R$, ``trigger'' $\T$ and ``previous'' $\Z$
are defined as: 
$\phi\R \psi
\equiv \neg(\neg \phi \U \neg\psi)$, $\phi\T \psi
\equiv \neg(\neg \phi \Snc \neg\psi)$ and $\Z \phi \equiv \neg \Y
\neg \phi$;
by applying De Morgan's rules,
we may assume every CLTLB formula to be in {\em positive normal form}, i.e.,
negation may only occur in front of atomic propositions and relations.

\subsection{CLTL with automata}
\label{section-symbolic}

The {\em satisfiability} problem for CLTL formula $\phi$ consists in determining whether there exists a model $\pair{\pi}{\sigma}$ for $\phi$ such that  $\pair{\pi}{\sigma},0 \models \phi$.
In this section, we recall some known results where the propositional part $\pi$ of $\pair{\pi}{\sigma}$ is either missing or can be eliminated (hence, with a slight abuse of notation we will write $\sigma,0 \models \phi$ instead of $\pair{\pi}{\sigma},0 \models \phi$).

Hereafter, we restrict $\D$ to be the structure defined by
$\IPC^*$, or by $\triple{D}{<}{=}$, where $D \in \set{\Nat, \Zed, \Q,\Real}$.
For such constraint systems a decision procedure based on B\"uchi automata is studied in \cite{DD07}.
The presented notions are essential to develop our decision procedure without automata construction.

Let $\phi$ be a  CLTLB($\D$) formula and $\terms(\phi)$ be the set of
arithmetic terms of the form $\aX^i x$ for all $0\leq i\leq \lceil
\phi\rceil$ or of the form $\aY^i x$ for all $1\leq i\leq - \lfloor
\phi\rfloor$ and for all $x \in V$.
If domain $D$ is discrete, let $const'(\phi)=\set{m, \dots, M}$ be the set of
constants occurring in $\phi$, where $m, M \in D$ are the minimum and
maximum constants.
We extend $const'(\phi)$ to the set $\const(\phi) = \interval{m}{M}$ of all values 
between $m$ and $M$.

A set of  $\D$-constraints over $\terms(\phi)$ is {\em maximally consistent} if for every
$\D$-constraint $\theta$ over $\terms(\phi) \cup \const(\phi)$, either $\theta$ or
$\neg\theta$ is in the set.

\begin{definition}\label{def-sv}
A \emph{symbolic valuation} $sv$ for $\phi$ is a maximally consistent
set of $\D$-constraints over $\terms(\phi)$ and $\const(\phi)$.
\end{definition}

The original definition of symbolic valuation for IPC$^*$ constraint
systems in \cite{DG05} is slightly different.
Our definition does not consider explicitly relation $x=d$ and periodic relation $x\equiv_c d$, with $c,d \in D$,
because they  are inherently represented in the $k$-bounded arithmetical models defined in Section
\ref{section-bsp}.
Equality between variables and constants do not require to be symbolically represented by a symbolic constraint of the form $x=d$ as  $k$-bounded arithmetical models associate each variable with an ``explicit'' value from $D$.
Moreover, given $x$ a value from $D$, relation $x\equiv_c d$ is inherently defined.

The satisfiability of a symbolic valuation is defined as follows, by considering each a.t.t. as a new fresh variable.

\begin{definition}
\label{def:fresh}
The set of all symbolic
valuations for $\phi$ is denoted by $SV(\phi)$.
Let $A$ be a set of variables and $\fresh: \terms(\phi) \rightarrow A$ be an injective function mapping each
a.t.t of $\phi$ to a fresh variable in set $A$. Function $\fresh$ is naturally
extended to every symbolic valuation $sv$ for $\phi$, by replacing
each a.t.t. $\alpha \in \terms(\phi)$ in $sv$ with $\fresh(\alpha)$.
Symbolic valuations for $\phi$ are now defined over the set $\fresh(\terms(\phi))$.
A symbolic valuation $sv$ for $\phi$ is \emph{satisfiable} if there exists a
$\D$-valuation $v': A \rightarrow D$,
such that $v' \models_ \D \fresh(sv)$, i.e., satisfiability of $sv$ considers all
a.t.t.'s as fresh variables.
\end{definition}

Given a symbolic valuation $sv$ and a $\D$-constraint $\xi$ over a.t.t.'s, we write $sv
\symodels \xi$ if for every $\D$-valuation $v'$ such that $v' \models_\D \fresh(sv)$ then we have $v' \models_\D \fresh(\xi)$.
We assume that the problem of checking $sv \symodels \xi$ is decidable.
The satisfaction relation $\symodels$ can also be extended  to
infinite sequences $\rho: \N \rightarrow SV(\phi)$ (or, equivalently, $\rho \in SV(\phi)^\omega$) of symbolic
valuations; it is the same as $\models$ for all temporal operators except for atomic formulae:
\[
\rho, i \symodels\xi \siff \rho(i) \symodels \xi.
\]
Then, given a CLTLB($\D$) formula $\phi$, we say that a symbolic model $\rho$ \emph{symbolically
satisfies} $\phi$ (or $\rho$ is a \emph{symbolic model} for $\phi$) when $\rho,0 \symodels \phi$.

In the rest of this section we consider CLTLB($\D$) formulae that do not include arithmetic temporal operator $\aY$.
This is without loss of generality, as Property \ref{theorem-removeY} will show.

\begin{definition}
A pair of symbolic valuations $\pair{sv_1}{sv_2}$ for $\phi$ is
\emph{locally consistent} if, for all $R$ in $\D$:
%
\[
R(\aX^{i_1} x_1, \dots, \aX^{i_n} x_n) \in sv_1 \text{ implies }
R(\aX^{i_1-1} x_1, \dots, \aX^{i_n-1} x_n) \in sv_2
\]

with $i_j\geq 1$ for all $j \in [1, n]$.
A sequence of symbolic valuations $sv_0 sv_1 \dots$ is \emph{locally consistent} if
all pairs $\pair{sv_i}{sv_{i+1}}$, $i\ge 0$, are locally consistent.
\end{definition}
A locally consistent infinite sequence $\rho$ of symbolic valuations
\emph{admits an arithmetic model},
if there exists a $\D$-valuation sequence $\sigma$
such that $\sigma, i \models \rho(i)$, for all $i\geq 0$.
In this case, we write $\sigma,0 \models \rho$. 

The following fundamental proposition draws a link between
the satisfiability by sequences of symbolic valuations and by sequences of $\D$-valuations.
\begin{proposition}[\cite{DD07}]\label{prop-arith-model}
A CLTL($\D$) formula $\phi$ is satisfiable if, and only if, there
exists a symbolic model for $\phi$ which admits an arithmetical model,
i.e., there exist $\rho$ and $\sigma$ such that $\rho,0\symodels\phi$
and $\sigma,0\models\rho$. 
\end{proposition}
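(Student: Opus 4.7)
The plan is to prove both directions by exhibiting a canonical correspondence between a $\D$-valuation sequence $\sigma$ and a symbolic sequence $\rho$, and then to show by structural induction on $\phi$ that the two satisfaction relations $\models$ and $\symodels$ agree on $\phi$. Since $\symodels$ and $\models$ differ only on atomic formulae (the clauses for Boolean connectives and for $\X, \U$ are identical), the only interesting inductive case is the atomic one.

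For the ``only if'' direction I would start from a model $\sigma$ such that $\sigma, 0 \models \phi$ and build the canonical symbolic sequence $\rho$ position by position: for each $i \geq 0$, let
\[
\rho(i) \;=\; \set{\theta \mid \theta \text{ is a $\D$-constraint over } \terms(\phi) \cup \const(\phi),\ \sigma, i \models \theta},
\]
where $\sigma, i \models R(\aX^{i_1}x_1,\ldots,\aX^{i_n}x_n)$ is interpreted by the standard CLTLB clause via $\sigma(i+i_j, x_j)$. By construction $\rho(i)$ is maximally consistent (for any $\theta$, exactly one of $\theta, \neg\theta$ is satisfied by $\sigma$ at $i$), hence $\rho(i) \in SV(\phi)$, and the canonical assignment $v_i'(\fresh(\aX^{j}x)) = \sigma(i+j, x)$ witnesses $\sigma, i \models \rho(i)$, i.e.\ $\sigma, 0 \models \rho$. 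To conclude $\rho, 0 \symodels \phi$, I would induct on $\phi$ (in positive normal form). The atomic case is immediate: if $\sigma, i \models R(\bar{\alpha})$ then $R(\bar{\alpha}) \in \rho(i)$, so $\rho(i) \symodels R(\bar\alpha)$; the negated-atomic case is symmetric, and the cases for $\wedge, \vee, \X, \Y, \U, \Snc$ follow directly because the clauses defining the two semantics coincide.

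For the ``if'' direction I would fix $\rho$ and $\sigma$ with $\rho, 0 \symodels \phi$ and $\sigma, 0 \models \rho$, and prove by structural induction on every subformula $\psi$ of $\phi$ and every position $i$ the biconditional $\rho, i \symodels \psi \iff \sigma, i \models \psi$. Again the only genuinely new case is atomic: suppose $\rho, i \symodels R(\bar\alpha)$, i.e.\ $\rho(i) \symodels R(\bar\alpha)$. The canonical $\D$-valuation $v_i'$ extracted from $\sigma$ as above satisfies $\fresh(\rho(i))$ precisely because $\sigma, i \models \rho(i)$; the definition of $\symodels$ then forces $v_i' \models_\D \fresh(R(\bar\alpha))$, which unpacks to $\sigma, i \models R(\bar\alpha)$. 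The converse uses maximal consistency of $\rho(i)$: if $\sigma, i \models R(\bar\alpha)$ then $\neg R(\bar\alpha) \notin \rho(i)$ (else $\rho(i)$ would be unsatisfied by $\sigma$), so by maximality $R(\bar\alpha) \in \rho(i)$, whence $\rho(i) \symodels R(\bar\alpha)$.

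The main obstacle, and the place where one has to be careful, is the atomic-formula step: one must keep track of the fact that $\symodels$ is defined with respect to the $\fresh$-renaming of a.t.t.'s to independent variables, so the equivalence with the standard semantics (which ties $\aX^j x$ to the actual value $\sigma(i+j, x)$) only goes through because the particular ``diagonal'' valuation $v_i'$ extracted from $\sigma$ respects all temporal-shift correlations. Local consistency of $\rho$ is not required in the statement of the proposition, but it is exactly the property guaranteeing that such a coherent $v_i'$ can be assembled position by position into a single $\sigma$; in the proof as presented this is immediate because $\sigma$ is given upfront, so no separate gluing argument is needed. All the remaining steps of the induction (Boolean connectives, $\X, \Y, \U, \Snc$) are routine since the defining clauses of $\models$ and $\symodels$ coincide syntactically.
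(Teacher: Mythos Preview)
The paper does not prove this proposition: it is cited from \cite{DD07} and stated without proof as a foundational result on which the later development rests. So there is no ``paper's own proof'' to compare against.

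That said, your argument is correct and is exactly the standard one. The only inductive work is at the atomic level, and you handle it properly: the key point is that for a maximally consistent $\rho(i)$ that is actually satisfied by some valuation (here the one extracted from $\sigma$), $\rho(i)\symodels R(\bar\alpha)$ is equivalent to $R(\bar\alpha)\in\rho(i)$, which is in turn equivalent to $\sigma,i\models R(\bar\alpha)$. One small remark: the proposition as stated in the paper is about CLTL($\D$), the future-only fragment, so the clauses for $\Y$ and $\Snc$ are not needed (though including them does no harm). Your observation that local consistency of $\rho$ comes for free here because $\sigma$ is given up front is also correct and worth stating, since in the converse situation (building $\sigma$ from an abstract $\rho$) this is precisely where the difficulty lies and where the later machinery of the paper (automaton $\A_\ell$, Property~\ref{property-C}, etc.) becomes necessary.
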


Following \cite{DD07}, for constraint systems of the form $(D,<,=)$,
where $<$ is a strict total ordering on $D$, it is possible to
represent a symbolic valuation $sv$ by its labeled directed graph
$G_{sv}=(V,\tau)$, $\tau \subseteq V\times\{<,=\}\times V$, such that $(x,\sim,
y)\in \tau$ if, and only if, $x\sim y\in sv$. 
This construction extends also to sequences: given a sequence $\rho$
of symbolic valuations, it is possible to represent $\rho$ via the
graph $G_\rho$ obtained by superimposition of the graphs corresponding
to the symbolic evaluations $\rho(i)$. More formally $G_\rho=(V\times
\N,\tau_\rho)$, where $((x,i),\sim,(y,j))\in \tau_\rho$ if, and only if, either $i\leq j$ and $(x\sim \aX^{j-i}y)\in \rho(i)$, or $i>j$ and $(\aX^{i-j} x\sim y)\in \rho(j)$.

An infinite path $d:\N\rightarrow V\times \N$ in $G_\rho$, is called a \emph{forward} (resp. \emph{backward}) path if:
\begin{enumerate}
\item for all $i\in \N$, there is an edge from $d(i)$ to $d(i + 1)$ (resp., an edge from $d(i + 1)$ to $d(i)$);
\item for all $i\in \N$, if $d(i)=(x,j)$ and $d(i+1)=(x',j')$, then
  $j\leq j'$.
\end{enumerate}
A forward (resp. backward) path is \emph{strict} if there exist infinitely many $i$ for which there is a <-labeled edge from $d(i)$ to $d(i+1)$ (resp., from $d(i+1)$ to $d(i)$).
Intuitively, a (strict) forward path represents a sequence of (strict) monotonic
increasing values whereas a (strict) backward path represents a
sequence of (strict) monotonic decreasing values.

Given a CLTL($\D$) formula $\phi$, it is possible \cite{DD07} to define a B\"{u}chi automaton $\A_\phi$ recognizing symbolic models of $\phi$, and then reducing the satisfiability of $\phi$ to the emptiness of $\A_\phi$.
The idea is that automaton $\A_\phi$ should accept the intersection of the following languages, which defines exactly the language of symbolic models of $\phi$:
\begin{itemize}
\item[(1)] the language of  LTL models $\rho$;
\item[(2)] the language of sequences of locally
  consistent symbolic valuations;
\item[(3)] the language of sequences of
  symbolic valuations which admit an arithmetic model.
\end{itemize}

Language (1) is accepted by the Vardi-Wolper automaton $\A_s$ of
$\phi$ \cite{vw}, while language (2) is recognized by the automaton
$\A_\ell=(SV(\phi),sv_0,\step{}, SV(\phi))$, where
the states are $SV(\phi)$, all accepting; $sv_0$ is the initial state; and the transition relation is such that
$sv_i\step{sv_{i}}sv_{i+1}$ if, and only if,
all pairs $(sv_{i},sv_{i+1})$ are locally consistent \cite{DD07}.

If the constraint system we are considering has the \emph{completion property} (defined next), then all sequences of locally
consistent symbolic valuations admit an arithmetic model, and
condition (3) reduces to (2).

\subsubsection{Completion property}
\label{subsec:completion}
Each automaton involved in the definition of
$\A_\phi$ has the function of \lq\lq filtering\rq\rq\ sequences of symbolic valuations
so that 1) they are locally consistent, 2) they satisfy an LTL
property and 3) they admit a (arithmetic) model.
For some constraint systems, admitting a model is
a consequence of local consistency.
A set of relations over $D$ has the \emph{completion} property if, given:
\begin{description}
\item[(i)] a symbolic valuation $sv$ over a finite set of variables $H
  \subseteq V$,
\item[(ii)] a subset $H' \subseteq H$,
\item[(iii)] a valuation $v'$ over $H'$ such that $v' \models sv'$,
  where $sv'$ is the subset of atomic formulae in $sv$ which uses only
  variables in $H'$
\end{description}
then there exists a valuation $v$ over $V$ extending $v'$ such
that $v \models sv$.
An example of such a relational structure is
$\triple{\mathbb{R}}{<}{=}$.
Let $\triple{D}{<}{=}$ be a relational structure defining the language of
atomic formulae.
We say that $D$ is \emph{dense}, with respect to the order $<$, if for
each $d,d'\in D$ such that $d<d'$, there exists $d''\in D$ such that
$d<d''<d'$,
whereas $D$ is said to be \emph{open} when for each $d \in D$, there
exist two elements $d',d'' \in D$ such that $d'<d<d''$.
\begin{lemma}[Lemma 5.3, \cite{DD07}]
\label{lem:completion}
Let $\triple{D}{<}{=}$ be a relational structure where $D$ is
infinite and $<$ is a total order. Then, it satisfies the completion
property if, and only if, domain $D$ is dense and open.
\end{lemma}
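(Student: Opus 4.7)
The plan is to prove the two implications separately, using small-witness constructions for the forward direction and a one-variable-at-a-time extension argument for the backward direction.

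For the forward direction ($\Rightarrow$), I would show the contrapositive by building a small symbolic valuation that witnesses failure of completion when $D$ fails to be dense or open. For density: suppose $d, d' \in D$ with $d < d'$ and no element strictly between them. Take $H = \{x, y, z\}$, $H' = \{x, z\}$, a valuation $v'$ with $v'(x) = d$, $v'(z) = d'$, and a symbolic valuation $sv$ on $H$ containing $x < y$, $y < z$, $x < z$ (completed maximally and consistently, which is possible since the partial constraints are satisfiable over a dense order). The restriction $sv'$ to $H'$ contains only $x < z$, which $v'$ satisfies. Any extension $v$ would need $v(y) \in D$ with $d < v(y) < d'$, contradicting non-density. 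For openness: if $d \in D$ has no element above it, take $H = \{x, y\}$, $H' = \{x\}$, $v'(x) = d$, and $sv$ containing $x < y$; the restriction is trivially satisfied but $v$ cannot be extended. The symmetric case handles the lack of a lower bound.

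For the backward direction ($\Leftarrow$), I would proceed by induction on $|H \setminus H'|$. The base case $H = H'$ is trivial. For the inductive step, pick some $x \in H \setminus H'$ and show we can extend $v'$ to a valuation on $H' \cup \{x\}$ consistent with the restriction of $sv$ to $H' \cup \{x\}$; then apply the induction hypothesis. By maximal consistency of $sv$, for every $y \in H'$ exactly one of $x = y$, $x < y$, $y < x$ lies in $sv$. If some equality $x = y$ appears, set $v(x) = v'(y)$; the consistency of $sv$ and the fact that $v' \models sv'$ ensure all the ordering constraints between $x$ and the remaining variables in $H'$ are then satisfied (since $v'$ already satisfies the corresponding constraints for $y$, which by transitivity inside $sv$ coincide with those for $x$). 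Otherwise, let $L = \{v'(y) : y \in H', (y < x) \in sv\}$ and $U = \{v'(y) : y \in H', (x < y) \in sv\}$. Consistency of $sv$ and $v' \models sv'$ ensure $\max L < \min U$ whenever both are nonempty; density then yields a value $v(x)$ strictly between them. If $L = \emptyset$ (resp. $U = \emptyset$), openness supplies a value below $\min U$ (resp. above $\max L$); if both are empty, any element of $D$ works since $D$ is nonempty.

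The main obstacle I expect is the careful bookkeeping in the inductive step: one must argue that, after one extension, the new valuation on $H' \cup \{x\}$ indeed satisfies the full restriction of $sv$ to that set (not merely the constraints directly involving $x$), and that maximal consistency propagates any equality $x = y$ to the ordering relationships between $x$ and the other variables. Once this is in place, the density/openness dichotomy exactly covers the three placements---strictly between two assigned values, below all of them, or above all of them---which is why these are precisely the conditions that must appear on the right-hand side of the equivalence.
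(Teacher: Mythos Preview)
The paper does not supply its own proof of this lemma; it simply quotes the result as Lemma~5.3 of \cite{DD07}. Your argument is correct and is the standard one: small explicit obstructions in the forward direction, and a one-variable-at-a-time extension using maximal consistency to get $\max L < \min U$ in the backward direction. One minor remark on phrasing: in the non-density case you justify the satisfiability of the three-variable $sv$ by appeal to ``a dense order,'' but density is exactly what you are assuming fails there; the correct justification is that $D$ is infinite and totally ordered, hence contains at least three distinct comparable elements, which is all that is needed to make $\{x<y,\,y<z,\,x<z\}$ satisfiable in $D$.
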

The following result relies on the fact that every
locally consistent sequence of symbolic valuations with respect to
the relational structure $\D$ admits a model.
\begin{proposition}\label{prop-completion}
Let $\D$ be a relational structure satisfying the completion property and
$\phi$ be a CLTL($\D$) formula. Then, the language of sequences of
symbolic valuations which admit a model is $\omega$-regular.
\end{proposition}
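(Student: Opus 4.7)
The plan is to show that, under the completion property, the set of symbolic sequences admitting an arithmetic model coincides with the set of locally consistent sequences, and then to exhibit a finite Büchi automaton recognizing the latter.

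One direction is immediate: if $\sigma, 0 \models \rho$, then the consecutive pairs $(\rho(i), \rho(i+1))$ are locally consistent because the same shifted valuation witnesses the relations that the definition of local consistency requires to be shared between $sv_i$ and $sv_{i+1}$. The substantive direction is the converse, where completion is used. I would build an arithmetic model $\sigma$ inductively. At stage $i$, maintain a $\D$-valuation $v_i$ on the finite set of position-indexed variables $\{(x,j) : x \in V, \, 0 \leq j \leq i + \lceil\phi\rceil\}$ that satisfies $\fresh(sv_j)$ for every $j \leq i$, under the identification of $\aX^k x$ appearing in $sv_j$ with the fresh variable $(x, j+k)$. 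The base case uses the satisfiability of the single symbolic valuation $sv_0$, which produces $v_0$ on positions $\{0, \dots, \lceil\phi\rceil\}$. For the inductive step, local consistency of $(sv_i, sv_{i+1})$ guarantees that $v_i$, restricted to positions $\{i+1, \dots, i+\lceil\phi\rceil\}$, already satisfies exactly the fragment of $\fresh(sv_{i+1})$ whose variables lie in that overlap. The completion property, applied with $H$ the set of position-indexed variables mentioned by $sv_{i+1}$ and $H'$ the overlap, then yields an extension of $v_i$ that additionally assigns position $i + 1 + \lceil\phi\rceil$ and satisfies all of $\fresh(sv_{i+1})$. The desired arithmetic model is $\sigma = \bigcup_i v_i$.

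With this equivalence in hand, $\omega$-regularity is straightforward: the sets $\terms(\phi)$ and $\const(\phi)$ are finite (the variables are fixed, the temporal depths $\lceil\phi\rceil$ and $\lfloor\phi\rfloor$ are finite, and only finitely many constants appear in $\phi$), so a maximally consistent set is determined by its finitely many ordering choices, and $SV(\phi)$ is finite. The automaton $\A_\ell$ from the excerpt, with state set $SV(\phi)$ and every state accepting, then recognizes exactly the locally consistent sequences as an $\omega$-regular language. The main obstacle is the bookkeeping of the inductive step: one must carefully translate between the fresh variables mentioned in each $sv_i$ and the shared positional indices of the evolving $\sigma$, and check that local consistency really leaves only a single ``new'' position $i + 1 + \lceil\phi\rceil$ to be assigned --- a gap whose filling is exactly the shape of situation the completion property was introduced to handle.
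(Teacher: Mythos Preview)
Your proposal is correct and follows the same line the paper takes: the paper states (without detailed proof, deferring to \cite{DD07}) that under the completion property every locally consistent sequence admits an arithmetic model, whence the language in question coincides with $\Lng(\A_\ell)$ and is $\omega$-regular. Your contribution is to spell out the inductive construction of $\sigma$ that the paper leaves implicit; the approach is the same, and your bookkeeping is sound---the one point worth tightening is the claim that local consistency makes $v_i$ satisfy the overlap fragment of $sv_{i+1}$, which as stated uses only the forward implication in the definition of local consistency, but follows once you invoke maximal consistency of both $sv_i$ and $sv_{i+1}$ on the overlap terms.
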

In this case the automaton $\A_\phi$ that recognizes exactly all the sequences of
symbolic valuations which are symbolic models of $\phi$ is defined by
the intersection (\emph{\`a la} B\"uchi) $\A_\phi = \A_s \cap \A_\ell$.

In general, however, language (3) may {\em not} be $\omega$-regular. Nevertheless, if the constraint system is of the form $(D,<,=)$, it is possible to define
an automaton $\A_C$ that accepts a superset of language (3), but such that
all its {\em ultimately periodic} words are sequences of symbolic
valuations that admit an arithmetic model. Actually, $\A_C$ recognizes
a sequence $\rho$ of symbolic valuations that satisfies the following property:

\begin{subproperty}\label{property-C}
There do not exist vertices $u$ and $v$ in the same symbolic valuation in $G_\rho$ satisfying 
all the following conditions:
\begin{enumerate}
  \item there is an infinite forward path $d$ from $u$;
  \item there is an infinite backward path $e$ from $v$;
  \item $d$ or $e$ are strict;
  \item for each $i, j \in \N$, whenever $d(i)$ and $e(j)$ belong to the same symbolic valuation, there exists an edge, labeled by $<$, from $d(i)$ to $e(j)$.
\end{enumerate}
\end{subproperty}
Informally, Property~\ref{property-C} guarantees that in the model there does not exist an
infinite forward path whose values are
infinitely often less than values of an infinite backward path; in other
words, an infinite strict/non-strict monotonic increasing sequence of values can not be
infinitely often less than an infinite non-strict/strict monotonic decreasing sequence of values.

The proposed method is general and it can be used whenever
it is possible to build an automaton $\A_C$ which defines
a condition $C$ guaranteeing the existence of a sequence $\sigma$ such that $\sigma,0 \models \rho$.
In particular, for constraint systems IPC$^*$, $(\N, <, =)$, and $(\Zed, <, =)$, $\A_C$ can be effectively built.
Let $\A_\phi$ be defined as the (B\"uchi) product of $\A_\ell, \A_s, \A_C$; since emptiness of B\"{u}chi automata can be checked just on ultimately periodic words,
the language of $A_\phi$ is empty if, and only if, $\phi$ does not have a symbolic model (which is equivalent to not having an arithmetical model).

When the condition $C$ is sufficient and necessary for the existence
of models $\sigma$ such that $\sigma,0 \models \rho$, then automaton
$\A_\phi$ represents all the sequences of symbolic valuations which admit
a model $\sigma$.
A fundamental lemma, on which Proposition \ref{prop-sat-CLTL} below relies, draws a sufficient and necessary condition for the existence of models of sequences of symbolic valuations.
\begin{lemma}[\cite{DD07}]\label{lemma-condC}
Let $\rho$ be an ultimately periodic sequence of symbolic valuations of the form $\rho= \alpha \beta^\omega \in SV(\phi)^\omega$ that is locally consistent.
Then, $\sigma,0 \models \rho$ (i.e., $\rho$ admits a model $\sigma$) if, and only if, $\rho$ satisfies $C$.
\end{lemma}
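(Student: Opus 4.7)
The plan is to prove the equivalence by handling both directions separately, with the arithmetic values along the offending paths encoding the contradiction in one direction and the graph structure of $G_\rho$ driving a model construction in the other. Throughout I would exploit that $\rho = \alpha\beta^\omega$ has only finitely many distinct symbolic valuations, so that pigeonhole arguments on the period $\beta$ turn infinitary requirements into finitary combinatorial ones.

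For necessity ($\Rightarrow$), I would assume $\sigma, 0 \models \rho$ and suppose toward contradiction that Property~\ref{property-C} fails, witnessed by $u, v$ in some $\rho(n)$, a forward path $d$ from $u$ and a backward path $e$ from $v$. Set $a_i = \sigma(d(i))$ and $b_j = \sigma(e(j))$: the edge semantics gives $a_i \leq a_{i+1}$ and $b_{j+1} \leq b_j$, with one of the two chains strictly monotone infinitely often. Because $\rho$ is ultimately periodic and the temporal components of $d$ and $e$ cannot stagnate (a stagnating strict path would close a $<$-cycle inside a single $\rho(k)$, contradicting maximal consistency), both paths traverse infinitely many positions. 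A pigeonhole argument on the period then yields infinitely many pairs $(i_k, j_k)$ with $d(i_k)$ and $e(j_k)$ in a common symbolic valuation, so condition~(4) of the property forces $a_{i_k} < b_{j_k}$. The eventual strict monotonicity of one sequence combined with the monotonicity of the other makes this inequality fail for large $k$, yielding the contradiction (in $\N$ or $\Zed$ this is direct since $a_{i_k} \to \infty$ or $b_{j_k} \to -\infty$; for dense $D$ one argues via a boundedness/convergence analysis on the finite pattern induced by the period).

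For sufficiency ($\Leftarrow$), I would construct $\sigma$ explicitly. First, quotient $G_\rho$ by the equivalence relation generated by $=$-labeled edges; local consistency and maximal consistency ensure the quotient is a well-defined directed graph on classes with only $<$-labeled edges among them. Using ultimate periodicity, fold the periodic part into a finite graph $H$ whose nodes represent classes modulo the period $L$, keeping track of which edges correspond to "lap-increments." Property~\ref{property-C} is precisely the condition that rules out the only obstruction to assigning monotone integer values along every strongly connected component of $H$: namely, a configuration where an unbounded increasing class meets an unbounded decreasing class at shared positions infinitely often. In the absence of such a configuration I would topologically stratify the classes, assign base integer values consistent with all $<$ edges within a period, and then add an appropriate linear drift per lap so that strict forward (resp.\ backward) paths receive strictly increasing (resp.\ decreasing) values; finally, pick the drift constants large enough in absolute value to dominate all constants in $\const(\phi)$ and satisfy the finitely many periodic $\equiv_c$ constraints by congruence shifts.

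The main obstacle is the model construction in the sufficiency direction: showing that Property~\ref{property-C} really is the finite certificate that suffices, i.e., that every potential conflict among the $<$-edges in $H$ reduces to the forbidden forward/backward crossing pattern. The ultimate-periodic form $\alpha\beta^\omega$ is essential here, because it is what allows the infinite graph $G_\rho$ to be analyzed as a finite graph $H$ with lap-counters and thereby admits a closed-form monotone labeling; for non-periodic $\rho$, Property $C$ alone would not suffice.
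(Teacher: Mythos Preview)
The paper does not prove this lemma: it is imported verbatim from \cite{DD07} and stated without proof, so there is no in-paper argument to compare against. Your proposal is therefore a reconstruction of a result the authors take as a black box.

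On the substance of your sketch: the necessity direction is essentially right for the discrete domains that matter here (IPC$^*$, $(\Nat,<,=)$, $(\Zed,<,=)$), and your argument via the monotone sequences $a_i, b_j$ and the infinitely many crossings $a_{i_k} < b_{j_k}$ is the standard one. Your aside about a ``boundedness/convergence analysis'' for dense $D$ is unnecessary and somewhat misleading: for dense open $D$ the completion property already holds (Lemma~\ref{lem:completion}), so Lemma~\ref{lemma-condC} is not the operative result there, and indeed the necessity direction as you state it can fail over $\Q$ or $\Real$ (a strictly increasing bounded sequence need not overtake a non-increasing one).

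The sufficiency direction is where the real content is, and you correctly flag this as the obstacle. Your plan --- quotient by $=$-edges, fold the period into a finite graph $H$, stratify, assign base values plus a linear drift per lap --- is the right shape, and is in spirit what \cite{DD07} does. But the step you gloss as ``Property~\ref{property-C} is precisely the condition that rules out the only obstruction'' is the entire lemma: one must show that any failure to embed $H$ monotonically into $\Zed$ produces a forward/backward pair of the forbidden type, and this requires a careful case analysis on the strongly connected components of $H$ and on how $<$-edges cross between them. Your sketch asserts this reduction rather than carrying it out, so as a proof it is incomplete at exactly the point where the difficulty lies.
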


Therefore, the satisfiability problem can be solved by checking the
emptiness of the language recognized by the automaton $\A_\phi$.

\begin{proposition}[\cite{DD07}]\label{prop-sat-CLTL}
A CLTL($\D$) formula $\phi$ is satisfiable if, and only if, the language $\Lng(\A_\phi)$ is not empty.
\end{proposition}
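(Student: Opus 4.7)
The plan is to combine Proposition \ref{prop-arith-model} (which reduces satisfiability to the existence of a symbolic model admitting an arithmetic model) with Lemma \ref{lemma-condC} (which, on ultimately periodic sequences, characterizes the existence of an arithmetic model via condition $C$), exploiting the fact that emptiness of a B\"uchi automaton is witnessed by an ultimately periodic accepting run.

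For the forward direction, assume $\phi$ is satisfiable. By Proposition \ref{prop-arith-model} there exist $\rho$ and $\sigma$ with $\rho,0\symodels\phi$ and $\sigma,0\models\rho$. From $\rho,0\symodels\phi$ it follows that $\rho$ is accepted by the Vardi--Wolper automaton $\A_s$ (language (1)). From $\sigma,0\models\rho$ one checks immediately that $\rho$ must be locally consistent, since the semantics forces $R(\aX^{i_1}x_1,\dots,\aX^{i_n}x_n)\in\rho(i)$ to be reflected by $R(\aX^{i_1-1}x_1,\dots,\aX^{i_n-1}x_n)\in\rho(i+1)$; hence $\rho$ is accepted by $\A_\ell$. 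Finally, because $\rho$ admits an arithmetic model, it belongs to language (3), and therefore also to the superset of (3) recognized by $\A_C$. Consequently $\rho$ is accepted by the product automaton $\A_\phi=\A_s\cap\A_\ell\cap\A_C$, so $\Lng(\A_\phi)\neq\emptyset$.

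For the converse, assume $\Lng(\A_\phi)\neq\emptyset$. Since $\A_\phi$ is a B\"uchi automaton, non-emptiness is equivalent to the existence of an ultimately periodic word $\rho=\alpha\beta^\omega$ in $\Lng(\A_\phi)$. Being accepted by $\A_s$, $\A_\ell$, and $\A_C$, the sequence $\rho$ is a symbolic model of $\phi$, is locally consistent, and satisfies property $C$. Applying Lemma \ref{lemma-condC} to the ultimately periodic, locally consistent $\rho$ satisfying $C$ yields an arithmetic model $\sigma$ with $\sigma,0\models\rho$. By Proposition \ref{prop-arith-model} this implies that $\phi$ is satisfiable.

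The main delicate point is handling the mismatch between $\A_C$ and language (3): $\A_C$ only recognizes a superset of (3), not (3) itself, so $\Lng(\A_\phi)$ may contain sequences of symbolic valuations that do not admit an arithmetic model in general. The argument survives precisely because B\"uchi emptiness is witnessed by an ultimately periodic word, and on such words Lemma \ref{lemma-condC} restores the equivalence between property $C$ and admitting an arithmetic model. Both directions therefore hinge on passing to an ultimately periodic representative, so the forward direction uses that any model is accepted by $\A_C$, while the backward direction uses that any ultimately periodic accepted word satisfies $C$ and so yields a genuine arithmetic model.
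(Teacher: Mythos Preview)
Your proof is correct and follows essentially the same approach that the paper sketches in the discussion immediately preceding the proposition (which itself is only cited from \cite{DD07} without a formal proof). In particular, your handling of the key subtlety---that $\A_C$ only overapproximates language (3), so the backward direction must pass through an ultimately periodic witness to invoke Lemma~\ref{lemma-condC}---matches exactly what the paper emphasizes when it writes ``since emptiness of B\"uchi automata can be checked just on ultimately periodic words, the language of $\A_\phi$ is empty if, and only if, $\phi$ does not have a symbolic model.''
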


In the next section, we provide a way for checking the satisfiability of CLTLB($\D$) formulae that does not require the construction of automata $\A_s$, $\A_\ell$ and $\A_C$.
Our approach takes advantage of the semantics of CLTLB($\D$) for building models of formulae through a semi-symbolic construction.
We use a reduction to a Satisfiability Modulo Theories (SMT) problem which extends the one proposed for Bounded Model Checking \cite{BCCSZ03}.
In the automata-based construction the definition of automaton $\A_\phi$ may be prohibitive in practice and requires to devise alternative ways that avoid the exhaustive enumeration of all the states in $\A_\phi$.
In fact, the size of $\A_s$ is exponential with respect to the size of the formula and condition $C$, which needs to be checked when the constraint system does not have the completion property, as in the case of $\triple{\Zed}{<}{=}$, is defined by complementing, through Safra's algorithm, automaton $\A_{\neg C}$ which recognizes symbolic sequences satisfying  the negated condition $C$ \cite{DD07}.
Since to show the satisfiability of a formula one can exhibit an ultimately periodic model whose length may be much smaller than the size of automaton $\A_\phi$, in many cases the whole construction of $\A_\phi$ is useless.
However, proving unsatisfiability is comparable in complexity to defining the whole automaton $\A_\phi$ because it requires to verify that no ultimately periodic model $\alpha\beta^\omega$ can be constructed for size $|\alpha\beta|$ equal to the size of automaton $\A_\phi$.
Motivated by the arguments above, we define the bounded satisfiability problem which
consists in looking for ultimately periodic symbolic models
$\alpha\beta^\omega$ such that prefix $\alpha\beta$ is of fixed
length (which is an input of the problem) and which admits a \emph{finite} arithmetical model
$\sigma_k$.
Since symbolic valuations partition the space of variable valuations, an assignment of values to terms uniquely identifies a symbolic valuation (see next Lemma \ref{lemma-A}).
For this reason, we do not need to precompute the set $SV(\phi)$ and instead we enforce the periodicity between a pair of sets of relations, those defining the first and last symbolic valuations in $\beta$.
We show that, when a formula $\phi$ is boundedly satisfiable, then it is
also satisfiable.
We provide a (polynomial-space) reduction from the bounded satisfiability
problem to the satisfiability of
formulae in the quantifier-free theory of equality and uninterpreted
functions QF-EUF combined with $\D$.

\section{Satisfiability of CLTLB($\D$) without automata}
\label{sec:SATnoaut}

In this section, we introduce our novel technique to solve the satisfiability problem of CLTLB($\D$) formulae without resorting to an automata-theoretic construction.

First, we provide the definition of the $k$-satisfiability problem 
for CLTLB$(\D)$ formulae in terms of the existence of a so-called $k$-bounded 
arithmetical model
$\sigma_k$, which provides a finite representation of
infinite symbolic models by means of ultimately periodic words.
This allows us to prove that $k$-satisfiability is still
representative of the satisfiability problem as defined in Section
\ref{subsection-semantics}. 
In fact, for some constraint systems, a bounded solution can be used
to build the infinite model $\sigma$ for the formula from the $k$-bounded one
$\sigma_k$ and from its symbolic model.
We show in Section \ref{section-correctness} that a formula $\phi$ is satisfiable if, and only if, it is $k$-satisfiable and its bounded solution $\sigma_k$ can be used to derive its
infinite model $\sigma$.
In case of negative answer to a $k$-bounded instance, we can not
immediately entail the unsatisfiability of the formula. 
However, we prove in Section \ref{section-completeness} that for every formula $\phi$ there exists an upper bound $K$,  which 
can  effectively be determined, such that if $\phi$ is not $k$-satisfiable for all $k$ in $\interval{1}{K}$ then 
$\phi$ is unsatisfiable.

\subsection{Bounded Satisfiability Problem}
\label{section-bsp}

We first define the Bounded Satisfiability Problem (BSP), by considering
bounded symbolic models of CLTLB($\D$) formulae.
For simplicity, we consider the set $\mathcal{R}_0$ of propositional letters to be empty; this is without loss of generality, as Property \ref{theorem-removeAP} of Section \ref{section-correctness} attests.
A bounded symbolic model is, informally, a finite representation of infinite
CLTLB($\D$) models over the alphabet of
symbolic valuations $SV(\phi)$.
We restrict the analysis to ultimately periodic symbolic models, i.e., of the form
$\rho = \alpha \beta^\omega$.
Without loss of generality, we consider models where $\alpha = \alpha' s$ and $\beta = \beta' s$ for some symbolic valuation $s$.
BSP is defined with respect to a $k$-bounded
model $\sigma_k : \{\lfloor\phi\rfloor, \dots, k+\lceil\phi\rceil\}\times V\rightarrow
D$, a finite sequence $\rho'$ (with $|\rho'|=k+1$)  of symbolic valuations and a $k$-bounded satisfaction relation $\models_k$ defined as follows:
\[
\sigma_k,0 \models_k \rho' \text{ iff } \sigma_k, i \models \rho'(i)
\text{ for all } 0\leq i \leq k.
\]

The {\em $k$-satisfiability problem} of formula $\phi$ is defined as follows:
\begin{description}
\item[Input] A CLTLB($\D$) formula $\phi$, a constant $k\in \N$
\item[Problem] Is there an ultimately periodic sequence of symbolic
  valuations $\rho = \alpha\beta^\omega$ with $|\alpha\beta| = k+1$, $\alpha=\alpha's$ and $\beta=\beta's$, such that:
\begin{itemize}
\item $\rho, 0 \symodels \phi$ and
\item there is a $k$-bounded model $\sigma_k$ for which $\sigma_k,0 \models_k \alpha \beta$?
\end{itemize}
\end{description}
Since $k$ is fixed, the procedure for determining the satisfiability of CLTLB($\D$) formulae over
bounded models is not complete:
even if there is no accepting run of automaton
$\A_\phi$ when $\rho'$ as above has length $k$, there may be accepting runs for a larger $\rho'$.

\begin{definition}\label{completeness}
Given a CLTLB($\D$) formula $\phi$, its \emph{completeness threshold} $K_\phi$, if it exists, is the smallest number such that $\phi$ is satisfiable if and only if $\phi$ is $K_\phi$-satisfiable.

\end{definition}

\subsection{Avoiding explicit symbolic valuations}

The next, fundamental Lemma~\ref{lemma-A} and Lemma~\ref{lemma-B} state how $k$-bounded
models $\sigma_k$ are representative of ultimately periodic sequences
of symbolic valuations, i.e., of symbolic models of the formula.
More precisely, Lemma \ref{lemma-B} allows for building a sequence of
symbolic valuations from $\sigma_k$. 
Hence, the encoding described in the following Section~\ref{section-encoding} can 
consider only atomic
subformulae 
occurring in CLTLB($\mathcal{D}$) formula $\phi$, even though the BSP for $\phi$ is defined with respect to sequences of
symbolic valuations. The encoding also introduces additional constraints,
to enforce periodicity of relations in $\mathcal{R}$, thus allowing us to derive
an ultimately periodic symbolic model from $\sigma_k$.


To exploit this property, we adopt a special requirement on the constraint system.
In fact, Lemma \ref{lemma-A} and Lemma \ref{lemma-B} rely on
the following assumption, which guarantees the uniqueness of the symbolic valuation given an assignment to variables.
\begin{definition} 
A constraint system $\D= (D,\mathcal{R})$ is \textit{value-determined} if, for all $m$ and for all $v \in D^m$, there exists at most one $m$-ary relation $R \in \mathcal{R}$ s.t. $v \models_{\D} R$.
\end{definition}

For value-determined constraint systems we avoid the definition of set $SV(\phi)$ as we are allowed to derive symbolic models for $\phi$ through $\sigma_k$.
Therefore, our approach is general and it can be used to solve CLTLB$(\D)$ for a value-determined constraint system $\D$,
which is the case of the constraint systems presented in Section~\ref{section-symbolic}.

\begin{lemma}\label{lemma-A}
Let $\D= (D,\mathcal{R})$ be a value-determined constraint system, $\phi$
be a CLTLB$(\D)$ formula and $v$ be a $\D$-valuation extended
to terms appearing in symbolic valuations of $SV(\phi)$.
Then, there is a unique symbolic valuation $sv$ such that
$
v \models_{\D} sv.
$
\end{lemma}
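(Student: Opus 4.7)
The plan is to establish the two claims of the lemma (existence and uniqueness of $sv$) separately, both of which unwind from the maximal-consistency condition of Definition \ref{def-sv}.

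For existence, I would construct $sv$ directly from $v$. For every atomic $\D$-constraint $\theta = R(t_1,\ldots,t_n)$ built from a relation $R\in\mathcal{R}$ and terms $t_1,\ldots,t_n\in\terms(\phi)\cup\const(\phi)$, I place $\theta$ into $sv$ whenever $(v(t_1),\ldots,v(t_n))\in R$, and I place $\neg\theta$ into $sv$ otherwise. This procedure decides every atomic constraint over $\terms(\phi)\cup\const(\phi)$, so the resulting set is maximally consistent in the sense of Definition \ref{def-sv}. Value-determination enters here to make the membership test well-posed on each tuple: since at most one $m$-ary relation can pick out a given $v\in D^m$, the classification of each $\theta$ as ``in $sv$'' or ``with $\neg\theta$ in $sv$'' is unambiguously determined by $v$ alone. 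By construction, $v\models_{\D}\theta$ for every $\theta\in sv$, so $v\models_{\D} sv$.

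For uniqueness, I would argue by contradiction. Suppose two distinct symbolic valuations $sv_1,sv_2\in SV(\phi)$ both satisfied $v\models_{\D} sv_1$ and $v\models_{\D} sv_2$. Pick some $\theta$ in their symmetric difference, say $\theta\in sv_1\setminus sv_2$. By maximal consistency of $sv_2$, one has $\neg\theta\in sv_2$. Then $v\models_{\D}\theta$ (because $\theta\in sv_1$ and $v\models_{\D} sv_1$) while simultaneously $v\models_{\D}\neg\theta$ (because $\neg\theta\in sv_2$ and $v\models_{\D} sv_2$), contradicting the definition of $\models_{\D}$ on negation. Hence $sv_1=sv_2$.

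The argument is essentially a short unwinding of definitions and I do not anticipate any technical obstacle beyond being careful about what ``$\D$-constraint'' ranges over, namely relations in $\mathcal{R}$ applied to tuples drawn from $\terms(\phi)\cup\const(\phi)$, together with their negations. The nontrivial content of the lemma lies in the value-determined assumption, which legitimates the pointwise construction and ensures that the map $v\mapsto sv$ is well defined at the semantic level; uniqueness itself then falls out of maximal consistency without further work.
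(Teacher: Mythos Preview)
Your proof is correct and follows the same direct-construction approach as the paper. One point worth flagging: your invocation of value-determinedness in the existence step is misplaced. Whether $(v(t_1),\ldots,v(t_n))\in R$ holds is always well-defined regardless of value-determinedness, so both your construction of $sv$ and your uniqueness argument go through without ever using that hypothesis. The paper deploys the assumption differently: it defines $sv$ to contain only the \emph{positive} atomic constraints $R(\alpha_1,\ldots,\alpha_n)$ satisfied by $v$ (no explicit negations), and then appeals to value-determinedness in the maximality argument --- if some $R'\notin sv$ could be consistently added, there would already be a distinct $R''\in sv$ on the same tuple of terms with $v$ satisfying both, contradicting the hypothesis. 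So in the paper the assumption certifies that the positive-only set is already maximal, whereas your version builds in the negations explicitly and thereby bypasses any need for it.
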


\begin{proof}

Let $sv$ be the symbolic valuation, defined from the values in $v$, such that, for any $R \in \mathcal{R}$,
if $v \models_{\D}
\fresh(R(\alpha_1,\dots,\alpha_n))$ then $R(\alpha_1,\dots,\alpha_n) \in sv$ (where $\fresh$ is the mapping introduced in Definition~\ref{def:fresh}
to replace arithmetic temporal terms with fresh variables).
We show that $sv$ is maximally consistent.
Consistency is immediate, since if $v \models_{\D}
\fresh(R(\alpha_1,\dots,\alpha_n))$ then $v \models_{\D}
\fresh(\neg R(\alpha_1,\dots,\alpha_n))$ cannot hold.
By contradiction, assume that $sv$ is not maximal, 
i.e., 
there is a relation $R' \not\in sv$ such that $v \models_{\D} \fresh(R')$, and $sv \cup \set{R'}$ is consistent. 
Hence,
$v \models_{\D}  sv \cup \set{R'}$.
By definition, a symbolic valuation $sv$ includes all relations among the terms of $\phi$, hence there is a relation $R'' \in sv$, with $R''\neq R'$,  over the same set
of terms of $R'$.
Hence, in constraint
system $\D$ we have two different relations, $R'$ and $R''$, over the same set of
terms and such that $v\models_{\D} \fresh(R')$ and $v \models_{\D} \fresh(R'')$. But this contradicts the assumption that $\D$ is value-determined.
\end{proof}

\begin{corollary}\label{corollary-lemma-A}
  Let $\phi$ be a CLTLB$(\D)$ formula, $v$ a $\D$-valuation extended
  to terms of symbolic valuations and $sv$ a symbolic valuation in
  $SV(\phi)$.  Then, for $v \models_{\D} sv$ and for all relations $R
  \in \mathcal{R}$
$$
sv\symodels R(\alpha_1,\dots,\alpha_n) \text{ iff } v \models_{\D} \fresh(R(\alpha_1,\dots,\alpha_n)).
$$
\end{corollary}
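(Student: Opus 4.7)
The plan is to establish the biconditional by arguing the two directions separately; both reduce, modulo a routine bookkeeping identification between $v$ (a $\D$-valuation extended to terms) and the induced valuation on the fresh variables $\fresh(\terms(\phi))$ (cf.\ Definition \ref{def:fresh}), to the definition of $\symodels$ and the maximal consistency of $sv$.

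For the forward direction, I would assume $sv \symodels R(\alpha_1,\dots,\alpha_n)$ and appeal directly to the definition of $\symodels$: every $\D$-valuation $v'$ with $v' \models_\D \fresh(sv)$ also satisfies $v' \models_\D \fresh(R(\alpha_1,\dots,\alpha_n))$. Since by hypothesis $v \models_\D sv$, the valuation $v$, viewed on $\fresh(\terms(\phi))$ via $v'(\fresh(\alpha)) = v(\alpha)$, is itself such a witness, and instantiating the universal statement with $v' = v$ yields $v \models_\D \fresh(R(\alpha_1,\dots,\alpha_n))$.

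For the backward direction, I would argue by cases on maximal consistency of $sv$ (Definition \ref{def-sv}): either $R(\alpha_1,\dots,\alpha_n) \in sv$ or $\neg R(\alpha_1,\dots,\alpha_n) \in sv$. The negative case is immediately ruled out, since $v \models_\D sv$ would force $v \models_\D \fresh(\neg R(\alpha_1,\dots,\alpha_n))$, contradicting the hypothesis $v \models_\D \fresh(R(\alpha_1,\dots,\alpha_n))$. Hence $R(\alpha_1,\dots,\alpha_n) \in sv$, and then any $v'$ with $v' \models_\D \fresh(sv)$ satisfies every element of $sv$, in particular $\fresh(R(\alpha_1,\dots,\alpha_n))$; this is precisely the definition of $sv \symodels R(\alpha_1,\dots,\alpha_n)$.

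The only mildly delicate point is the notational identification between $v$ and its pushforward along $\fresh$, but this is already built into Definition \ref{def:fresh} and presents no real obstacle. It is worth noting that the value-determinedness of $\D$, central to Lemma \ref{lemma-A}, is not directly invoked here: its role has already been discharged in guaranteeing that a unique symbolic valuation consistent with $v$ exists, and the corollary then extracts the pointwise consequence of $v \models_\D sv$ through maximal consistency alone.
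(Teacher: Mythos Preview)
Your proof is correct. The forward direction is identical to the paper's: instantiate the universal in the definition of $\symodels$ with $v$ itself. For the backward direction, the paper simply writes ``the converse is an immediate consequence of Lemma~\ref{lemma-A}'', relying on the fact that the unique $sv$ with $v\models_\D sv$ is, by the construction in that lemma's proof, exactly the set of constraints satisfied by $v$; hence $v\models_\D\fresh(R)$ forces $R\in sv$. You instead obtain $R\in sv$ directly from maximal consistency via a case split, bypassing Lemma~\ref{lemma-A} and, as you rightly observe, avoiding any appeal to value-determinedness. This is a slightly more elementary route to the same intermediate step; the paper's phrasing has the virtue of tying the corollary explicitly to the lemma it follows, while yours makes clear that the statement holds for any maximally consistent $sv$ with $v\models_\D sv$, regardless of whether such an $sv$ is unique.
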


\begin{proof}
Suppose that $sv \symodels R(\alpha_1,\dots,\alpha_n)$.
By definition, $sv \symodels
R(\alpha_1,\dots,\alpha_n)$ holds if, for every $\D$-valuation $v'$ (over the
set of terms in $sv$) such that $v' \models_{\D} sv$,  $v'\models_{\D}
\fresh(R(\alpha_1,\dots,\alpha_n))$ holds.
Therefore, also $v \models_{\D} \fresh(R(\alpha_1,\dots,\alpha_n))$.
%
The converse is an immediate consequence of Lemma \ref{lemma-A}.%
%
\end{proof}

\begin{lemma}\label{lemma-B}
Let $\phi$ be a CLTLB$(\D)$ formula and $\sigma_k$
be a finite sequence of $\D$-valuations.
Then, there exists a unique locally consistent sequence $\rho \in SV(\phi)^{k+1}$ such that
$\sigma_k,i \models \rho(i)$, for all $i \in \interval{0}{k}$.
\end{lemma}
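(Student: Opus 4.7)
The plan is to build $\rho$ pointwise and then verify both local consistency and uniqueness by reducing everything to Lemma~\ref{lemma-A} and its corollary.

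First I would construct the candidate sequence. For each $i \in \interval{0}{k}$, I use $\sigma_k$ to define a $\D$-valuation $v_i$ on the fresh variables associated with $\terms(\phi)$: for every term $\aX^j x \in \terms(\phi)$ (recall that we have assumed, without loss of generality via Property~\ref{theorem-removeY}, that $\aY$ does not occur, and that terms have no nested alternations), set $v_i(\fresh(\aX^j x)) = \sigma_k(i+j, x)$. The indices $i+j$ stay within the domain $\{\lfloor\phi\rfloor,\dots,k+\lceil\phi\rceil\}$ of $\sigma_k$ by construction of the bounds. Since $\D$ is assumed value-determined, Lemma~\ref{lemma-A} yields a unique symbolic valuation $\rho(i) \in SV(\phi)$ with $v_i \models_\D \rho(i)$; equivalently, $\sigma_k,i \models \rho(i)$.

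Next I would verify local consistency of the pair $\pair{\rho(i)}{\rho(i+1)}$. Suppose $R(\aX^{i_1}x_1,\dots,\aX^{i_n}x_n) \in \rho(i)$ with every $i_j \geq 1$. By Corollary~\ref{corollary-lemma-A} this is equivalent to $v_i \models_\D \fresh(R(\aX^{i_1}x_1,\dots,\aX^{i_n}x_n))$, which unfolds to $(\sigma_k(i+i_1,x_1),\dots,\sigma_k(i+i_n,x_n)) \in R$. Rewriting the indices as $(i+1)+(i_j-1)$ gives exactly $v_{i+1} \models_\D \fresh(R(\aX^{i_1-1}x_1,\dots,\aX^{i_n-1}x_n))$, and a second application of Corollary~\ref{corollary-lemma-A} (this time to $\rho(i+1)$) places the shifted relation in $\rho(i+1)$, as required. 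This step is essentially a bookkeeping argument on the indices; the semantic content is entirely delegated to Lemma~\ref{lemma-A}.

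Finally, uniqueness follows immediately: if $\rho' \in SV(\phi)^{k+1}$ is any sequence with $\sigma_k, i \models \rho'(i)$ for all $i \in \interval{0}{k}$, then $v_i \models_\D \rho'(i)$ for the same valuation $v_i$ defined above, so Lemma~\ref{lemma-A} forces $\rho'(i) = \rho(i)$ at every position. I do not anticipate a real obstacle in this proof; the only delicate point is making sure the extension from $\sigma_k$ to a valuation on the fresh variables for $\terms(\phi)$ is well-defined at every position $i \in \interval{0}{k}$, which is precisely why $\sigma_k$ was given on the extended domain $\{\lfloor\phi\rfloor,\dots,k+\lceil\phi\rceil\}$.
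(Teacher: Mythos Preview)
Your proposal is correct and follows essentially the same approach as the paper: both construct $\rho(i)$ pointwise via Lemma~\ref{lemma-A}, derive uniqueness from that lemma, and establish local consistency by the index-shift observation that evaluating $\aX^{i_j}x_j$ at position $i$ coincides with evaluating $\aX^{i_j-1}x_j$ at position $i+1$, invoking Corollary~\ref{corollary-lemma-A} to pass between membership in the symbolic valuation and satisfaction by the valuation. Your write-up is in fact more explicit about the well-definedness of $v_i$ on the extended domain, but the underlying argument is the same.
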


\begin{proof}
By Lemma~\ref{lemma-A} it follows that, for all $i \in \interval{0}{k}$,
the assignment of variables defined by $\sigma_k$ is such that
$\sigma_k,i \models \rho(i)$ and $\rho(i)$ is unique.
By Corollary~\ref{corollary-lemma-A}, values in $\sigma_k$ from
position $i$ satisfy a relation
$R$ at position $i$ if, and only if, $R$ belongs to symbolic
valuation $\rho(i)$, i.e., $\rho(i) \symodels R \text{ iff } \sigma_k,i
\models \fresh(R)$.
In addition, any two adjacent symbolic
valuations $\rho(i)$ and $\rho(i+1)$ are locally consistent, i.e., both 
$R(\aX^{i_1}x_1, \dots, \aX^{i_n}x_n) \in
\rho(i)$ and $R(\aX^{i_1-1}x_1, \dots, \aX^{i_n-1}x_n) \in \rho(i+1)$.
In fact, the evaluation in $\sigma_k$  of an arithmetic term $\aX^{i_j}x_j$ in position $i$
is the same as the evaluation of $\aX^{i_j-1}x_j$ in
position $i+1$. 
%
\end{proof}

\subsection{An encoding for BSP without automata}
\label{section-encoding}

We now show how to encode a CLTLB($\D$) formula
into a quantifier-free formula in the theory
$\text{EUF}\cup \mathcal{D}$ (QF-EU$\mathcal{D}$), where EUF is the theory of Equality and Uninterpreted
Functions. 
This is the basis for reducing the 
BSP for CLTLB($\D$) to the satisfiability of  QF-EU$\mathcal{D}$, as proved in Section~\ref{section-correctness}.
Satisfiability of  QF-EU$\mathcal{D}$ is decidable, provided that $\D$ includes a copy of $\N$ with the successor relation and that
$\text{EUF}\cup \mathcal{D}$ is consistent, as in our case. The latter condition is
easily verified in the case of the union of two consistent, disjoint,
stably infinite theories (as is the case for EUF and arithmetic). 
\cite{BCFPR10} describes a similar approach for the case of
Integer Difference Logic (DL) constraints.
It is worth noting that standard LTL can be encoded by a formula in
QF-EU$\mathcal{D}$ with $\D = \pair{\Nat}{<}$, rather than in Boolean logic \cite{BHJLS06}, resulting 
in a more succinct encoding.

The encoding presented below represents ultimately periodic sequences of symbolic
valuations $\rho$ of the form $sv_0 sv_1\dots
sv_{loop-1}(sv_{loop}\dots sv_k)^\omega$.
To do this, we use a positive integer variable $loop$ for which we
require $sv_{loop-1}=sv_k$.
Therefore, we look for a finite word $\rho'=sv_0 sv_1\dots
sv_{loop-1}(sv_{loop}\dots sv_k) sv_{loop}$ of length $k+2$ representing the
ultimately periodic model above.
Instant $k+1$ in the encoding is used to correctly represent
the periodicity of $\rho$ by constraining atomic formulae (propositions
and relations) at positions $loop$ and $k+1$.
Moreover, all subformulae of $\phi$ that hold at position $loop-1$ must also hold in $k$.

\paragraph{Encoding terms}\label{subsection-encoding-terms}

We introduce \textit{arithmetic formula functions} to encode the
terms in set $\terms(\phi)$. 
Let $\alpha$ be a term in $\terms(\phi)$, then the arithmetic
formula function $\bm{\alpha}: \Zed \to D$ associated with it (denoted by the same name but written
in boldface), is recursively
defined with respect to a finite sequence of valuations $\sigma_k$ as:
\[
\begin{array}{c|c|c}
  \alpha & 0 \leq i < k & i = k
    \\
    \hline
    x & \bm{x}(i) = \sigma_k(i,x) & \bm{x}(k) = \sigma_k(k,x)\\
    \aX\alpha' & \,\,\bm{\alpha}(i) = \bm{\alpha'}(i+1) & \,\,\bm{\alpha}(k) = \sigma_k(k+|\alpha'|+1,x_{\alpha'})  \\
\end{array}
\]
\[
\begin{array}{c|c|c}
  \alpha & 0 < i \leq k+1 & i = 0
    \\
    \hline
    \aY\alpha' & \,\,\bm{\alpha}(i) = \bm{\alpha'}(i-1)   & \,\,\bm{\alpha}(0) =\sigma_k(|\alpha'|-1,x_{\alpha'})\\
\end{array}
\]
The conjunction of the above subformulae gives formula $|ArithConstraints|_k$.
Implementing $|ArithConstraints|_k$ is straightforward.
In fact, the assignments of values to variables are defined by the
interpretation of the symbols of the QF-EU$\D$ formula.
The values of variables $x$ at positions before $0$ and $k$, i.e. in intervals
$\interval{\lfloor\phi\rfloor}{-1}$ and $\interval{k+1}{k+\lceil\phi\rceil}$, are defined by
means of the values of terms $\alpha=\aX^i x$ and $\alpha=\aY^i x$.
For instance, the value of $x$ at position $0 > i\geq \lfloor\phi\rfloor$ 
is $\sigma_k(i, x)$, but it is defined by the assignment for term
$\alpha=\aY^i x$ at position $0$.

\paragraph{Encoding relations}\label{subsection-encoding-relations}

Formula $|PropConstraints|_k$ encodes atomic subformulae $\theta$
containing relations over a.t.t.'s. 
Let $R$ be an $n$-ary relation of $\mathcal{R}$ that appears in $\phi$,
and $\alpha_1,\dots \alpha_n$ be a.t.t.'s. 
We introduce a \emph{formula predicate}
$\bm{\theta}: \Nat \rightarrow \set{true,false}$ --- that is, a unary uninterpreted predicate
denoted by the same name as the formula but written in boldface --- for all 
$R(\alpha_1,\dots,\alpha_n)$ in $\phi$:
\[
\begin{array}{c|c}
  \theta & 0 \leq i \leq k+1 \\
  \hline
  R(\alpha_1,\dots,\alpha_n) & \,\,\bm{\theta}(i) \Leftrightarrow R(\bm{\alpha_1}(i),\ldots ,\bm{\alpha_n}(i))\\
  \neg R(\alpha_1,\dots,\alpha_n) & \,\,\bm{\theta}(i) \Leftrightarrow \neg R(\bm{\alpha_1}(i),\ldots ,\bm{\alpha_n}(i))
\end{array}
\]

\paragraph{Encoding formulae}

The truth value of a CLTLB formula is defined with respect to the
truth value of its subformulae. 
We associate with each subformula $\theta$ a
formula predicate 
$\bm{\theta}: \Nat \rightarrow \set{true, false}$.
When the subformula $\theta$ holds at instant $i$ then $\bm{\theta}(i)$ holds.
As the length of paths is fixed to $k+1$ and all paths start from $0$,
formula predicates are actually subsets of $\{0, \dots, k+1\}$.
Let $\theta$ be a subformula of $\phi$ and $p$ a propositional letter, formula predicate
$\bm{\theta}$ is recursively defined as:
\[
\begin{array}{c|c}
  \theta & 0 \leq i \leq k+1 \\
  \hline
  p & \bm{p}(i)\\
  \neg p & \,\,\bm{\theta}(i) \Leftrightarrow \neg \bm{p}(i)\\
  \psi_1 \wedge \psi_2 & \,\,\bm{\theta}(i) \Leftrightarrow \bm{\psi_1}(i) \wedge \bm{\psi_2}(i)\\
  \psi_1 \vee \psi_2 &  \,\,\bm{\theta}(i) \Leftrightarrow \bm{\psi_1}(i) \vee \bm{\psi_2}(i)
\end{array}
\]
The conjunction of the formulae above is also part of formula $|PropConstraints|_k$.
The temporal behavior of future and past operators is encoded in formula \linebreak $|TempConstraints|_k$ by using their traditional
fixpoint characterizations.
More precisely, $|TempConstraints|_k$ is the conjunction of the following formulae, for each temporal subformula $\theta$:
\[
\begin{array}{c|c}
  \theta & 0 \leq i \leq k \\
  \hline
  \X\psi & \bm{\theta}(i) \Leftrightarrow \bm{\psi}(i+1) \\
  \psi_1\U\psi_2 & \bm{\theta}(i)\Leftrightarrow(\bm{\psi_2}(i) \vee (\bm{\psi_1}(i) \wedge
\bm{\theta}(i+1)))\\
  \psi_1\R\psi_2 & \bm{\theta}(i)\Leftrightarrow(\bm{\psi_2}(i) \wedge (\bm{\psi_1}(i) \vee
\bm{\theta}(i+1)))\\
\end{array}
\]

\[
\begin{array}{c|c|c}
  \theta& 0 < i \leq k+1 & i = 0\\
  \hline
  \Y\psi & \bm{\Y\psi}(i) \Leftrightarrow \bm{\psi}(i-1) & false\\
  \psi_1\Snc\psi_2 &
  \bm{\theta}(i) \Leftrightarrow (\bm{\psi_2}(i)\vee (\bm{\psi_1}(i) \wedge
  \bm{\theta}(i-1)))&
  \bm{\theta}(0) \Leftrightarrow \bm{\psi_2}(0)  \\
  \psi_1\T \psi_2 &
  \bm{\theta}(i) \Leftrightarrow (\bm{\psi_2}(i) \wedge  (\bm{\psi_1}(i) \vee
  \bm{\theta})(i-1))&
    \bm{\theta}(0) \Leftrightarrow  \bm{\psi_2}(0)
\end{array}
\]
%

\paragraph{Encoding periodicity}

To represent ultimately periodic sequences of symbolic valuations we
use a positive integer variable $\bm{loop} \in \interval{1}{k}$ that captures the position in which the loop starts in $sv_0 sv_1\dots
sv_{loop-1}(sv_{loop}\dots sv_k)^\omega$.
Informally, if the value of variable $\bm{loop}$ is $i$, then there exists a loop which starts at $i$.
To encode the loop we require $sv_{loop-1}=sv_k$; this is achieved through the following formula $|LoopConstraints|_k$, which ranges over all relations $R \in\mathcal{R}$ and all terms in $\terms(\phi)$:
\[
\bigwedge_{\begin{array}{c}
            \theta = R(\alpha_1,\dots,\alpha_n)
            \\
            R \in \mathcal{R}, \alpha_1,\dots,\alpha_n \in \terms(\phi)
           \end{array}}
\bm{\theta}(\bm{loop}-1) = \bm{\theta}(k).
\]

{\em Last state constraints} (captured by formula $|LastStateConstraints|_k$) define the equivalence between
the truth values of the subformulae of $\phi$ at position $k+1$ and those at the position indicated by the $\bm{loop}$ variable,
since the former position is representative of the latter along periodic paths.
These constraints have a similar structure as those in the Boolean encoding of \cite{BHJLS06};
for brevity, we consider only the case for infinite periodic words, as the case for finite words can be easily achieved.
Hence, last state constraints are introduced through the following formula (where $sub(\phi)$ indicates the set of subformulae of $\phi$) by adding only \emph{one} constraint for each
subformula $\theta$ of $\phi$.
\[
  \begin{array}{l}
      \bigwedge_{\theta \in sub(\phi)}\bm{\theta}(k+1) \iFF \bm{\theta}(\bm{loop}). 
  \end{array}
\]

\paragraph{Eventualities for $\U$ and $\R$}

To correctly define the semantics of $\U$ and $\R$, their
\emph{eventualities} have to be accounted for.
Briefly, if $\psi_1\U\psi_2$ holds at $i$, then $\psi_2$ eventually holds
in some $j\geq i$; if $\psi_1\R\psi_2$ does not hold at $i$, then $\psi_2$
eventually does not hold in some $j\geq i$.
\textit{Along finite paths of length $k$, eventualities must hold between $0$
and $k$.
Otherwise, if there is a loop, an eventuality may hold within the loop.}
The Boolean encoding of \cite{BHJLS06} introduces $k$ propositional variables
for each subformula $\theta$ of $\phi$ of the form $\psi_1\U\psi_2$ or $\psi_1\R\psi_2$ (one for each $1 \leq i \leq k$), which represent the eventuality of $\psi_2$
implicit in the formula. 
Instead, in the QF-EU$\mathcal{D}$ encoding, only \emph{one} variable $\bm{j_{\psi_2}} \in
D$
is introduced for each $\psi_2$ occurring in a subformula
$\psi_1\U\psi_2$ or $\psi_1\R\psi_2$. 
\[
\begin{array}{c|c}
    \theta & \\ 
    \hline
    \psi_1\U\psi_2 &
    \begin{array}{l}
    \bm{\theta}(k) \Rightarrow
     \bm{loop} \leq \bm{j_{\psi_2}} \leq k \wedge \bm{\psi_2}(\bm{j_{\psi_2}})
    \end{array}
    \\
    \psi_1\R\psi_2 &
    \begin{array}{l}
    \neg\bm{\theta}(k) \Rightarrow
     \bm{loop} \leq \bm{j_{\psi_2}} \leq k \wedge \neg \bm{\psi_2}(\bm{j_{\psi_2}})
    \end{array}
  \end{array}
\]
The conjunction of the constraints above for all subformulae $\theta$
of $\phi$ constitutes the formula $|Eventually|_k$.

The complete encoding $|\phi|_k$ of $\phi$ consists of the logical
conjunction of all above components, together with $\phi$ evaluated at the
first instant of time. 

\subsection{Correctness of the BSP encoding}
\label{section-correctness}

%
To prove the correctness of the encoding defined in Section~\ref{section-encoding},
we first introduce two properties, which reduce CLTLB($\D,\mathcal{R}_0$) to CLTLB($\D$) without $\aY$ operators.
This allows us to base our proof on the automata-based construction for CLTLB($\D$) of \cite{DD07}.
In particular, the two reductions are essential to take advantage of Proposition
\ref{prop-completion} and Lemma \ref{lemma-condC} of Section \ref{sec:cltlb}, to
define a decision procedure for the bounded satisfiability problem of
Section \ref{section-bsp}.
The properties are almost obvious, hence we only provide the intuition behind their proof (see \cite{BFMPRS12-arxiv} for full details).

%



\begin{subproperty}\label{theorem-removeAP}
CLTLB($\D,\mathcal{R}_0$) formulae can be equivalently rewritten into CLTLB($\D$) formulae.
\end{subproperty}

According to the definition given in Section \ref{subsection-syntax},
CLTLB($\D$) is the language CLTLB where atomic formulae belong to the
language of constraints in $\D$, which may contain also $0$-ary relations. In this case, atomic
formulae are propositions $p \in \mathcal{R}_0$ or relations 
$R(\alpha_1,\dots, \alpha_n)$.
Any positive occurrence of an atomic proposition $p \in \mathcal{R}_0$ in a
CLTLB formula can be replaced by an equality relation of the form
$x_p=1$.
Then, a formula of CLTLB$ (\D,\mathcal{R}_0)$ can be easily rewritten into a
formula of CLTLB$ (\D)$ preserving the equivalence between them
(modulo the rewriting of propositions in $\mathcal{R}_0$).
We define a rewriting function $\noprop$ over formulae such that
$\pair{\pi'}{\sigma'},0 \models \phi$ if, and only if, $\pair{\pi}{\sigma},0
\models \noprop(\phi)\wedge \psi$ where $\sigma$ is the same as $\sigma'$ except for
new fresh variables $x_p$ representing atomic propositions, and $\psi$ is
a formula restricting the values of variables $x_p$ to $\set{0,1}$.

For instance, let $\phi$ be the formula $\G(p \Rightarrow \F(\aX x < y \wedge q))$, where the ``eventually'' ($\F$) and ``globally'' ($\G$) operators are defined as usual.
The formula obtained by means of rewriting $\noprop$ is 
$$\G(x_p=1 \Rightarrow \F(\aX x < y \wedge x_q=1)) \wedge 
\G\left(\begin{array}{c}
(x_p=1\vee x_p=0) \\ \wedge \\
(x_q=1\vee x_q=0)
\end{array}\right).
$$

Note that formula $\noprop(\phi)$ does not contain any propositional letters, so in a model $\pair{\pi}{\sigma}$ component $\pi$ associates with each instant the empty set.
From now on we will consider only CLTLB$(\D)$ formulae without propositional letters; hence, given a propositional letter-free formula $\phi$,
we will write $\sigma, 0 \models \phi$ instead of $\pair{\pi}{\sigma},0 \models \phi$.




\begin{subproperty}\label{theorem-removeY}
CLTLB($\D$) formulae can be equivalently rewritten into CLTLB($\D$)
  formulae without $\aY$ operators.
\end{subproperty}

Let $\shiftleft:\text{CLTLB($\D$)} \rightarrow \text{CLTLB($\D$)}$ be the following mapping, which transforms (by ''shifting to the left'') every
formula $\phi$ into an equisatisfiable formula that does not contain any occurrence
of the $\aY$ operator. Formula $\shiftleft(\phi)$ is identical to $\phi$ except that all a.t.t.'s
of the form $\aX^ix$ in $\phi$ are replaced 
by $\aX^{i-\lfloor\phi \rfloor}x$, while all a.t.t.'s of the
form $\aY^ix$ are replaced by $\aX^{-i-\lfloor\phi \rfloor}x$. 
The latter replacement avoids 
negative indexes (since if $\phi$ contains a.t.t.'s of the form $\aY^i x$, 
then 
$\lfloor \phi \rfloor < 0$).
The $\shiftleft$ function can be naturally extended to symbolic valuations (i.e, sets of atomic constraints) and sequences $\rho$ thereof.

%
%
%
As a consequence, given a CLTLB($\D$) formula $\phi$, it is easy to see that $\aY$ does not occur in $\shiftleft(\phi)$.
The equisatisfiability of formulae $\phi$ and $\shiftleft(\phi)$ is guaranteed by moving the origin of $\phi$ by $-\lfloor\phi \rfloor$ instants in the past.
Since only $\aX$ occurs in $\shiftleft(\phi)$, then models for CLTLB($\D$) formulae without $\aY$ are now sequences of $\D$-valuations $\sigma: \N \times V \rightarrow D$.

\begin{proposition}\label{prop-removeY}
Let $\phi$ be a CLTLB($\D$) formula, then $\sigma,0 \models \phi$ iff $\sigma, \lfloor\phi \rfloor \models \shiftleft(\phi)$.
\end{proposition}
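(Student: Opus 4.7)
My plan is to prove the proposition by structural induction on $\phi$ with a strengthened induction hypothesis: for every subformula $\psi$ of $\phi$ and every instant $j$, $\sigma, j \models \psi$ iff $\sigma, j + \lfloor\phi\rfloor \models \shiftleft(\psi)$, where $\shiftleft$ is implicitly parameterized by the global offset $\lfloor\phi\rfloor$ inherited from the whole formula (so a subformula is rewritten using the same shift as $\phi$, not its own local $\lfloor\psi\rfloor$). The proposition then follows by instantiating $\psi = \phi$ and $j = 0$.

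For the base case I take an atomic constraint $R(\alpha_1, \ldots, \alpha_n)$. By the definition of $\shiftleft$, each a.t.t. is rewritten so that its depth changes by exactly $-\lfloor\phi\rfloor$: $\aX^i x$ of depth $i$ becomes $\aX^{i-\lfloor\phi\rfloor} x$ of depth $i - \lfloor\phi\rfloor$, and $\aY^i x$ of depth $-i$ becomes $\aX^{-i-\lfloor\phi\rfloor} x$ of depth $-i - \lfloor\phi\rfloor$. This depth shift exactly cancels the shift of the evaluation point from $j$ to $j + \lfloor\phi\rfloor$: for every original a.t.t. $\alpha$ one has $\sigma(j + |\alpha|, x_\alpha) = \sigma((j + \lfloor\phi\rfloor) + |\shiftleft(\alpha)|, x_\alpha)$, so each a.t.t. yields the same value in either setting and the relation $R$ holds in both or in neither. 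The Boolean connectives and the future-temporal operators $\X$, $\U$, $\R$ are then routine, since they shift evaluation time by a constant amount on both sides of the equivalence, and the IH applied to the immediate subformulae closes each case directly.

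The main obstacle will be the past-temporal operators $\Y$, $\Snc$, and $\T$, because their semantics contain an explicit boundary condition at $0$ (the clause $i > 0$ in the definition of $\Y$, and analogously in $\Snc$ and $\T$). Evaluating $\shiftleft(\psi)$ at the possibly non-positive position $j + \lfloor\phi\rfloor$ must therefore be read consistently with the informal intuition that $\shiftleft$ ``moves the origin of $\phi$ by $-\lfloor\phi\rfloor$ into the past'': either I let $\lfloor\phi\rfloor$ play the role of the new origin on the right-hand side (so the past-operator boundary becomes $i > \lfloor\phi\rfloor$), or, equivalently, I pass through the shifted model $\sigma'(i, x) \egdef \sigma(i + \lfloor\phi\rfloor, x)$ and prove $\sigma, 0 \models \phi$ iff $\sigma', 0 \models \shiftleft(\phi)$ under the standard semantics. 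Under either reading, the $\Y$, $\Snc$, and $\T$ cases close by the IH just as the future operators do. Reconciling the past-operator boundary condition with the position shift is the step that demands the most care, since a literal reading of the semantics would otherwise render these operators spuriously false at non-positive positions of the shifted evaluation.
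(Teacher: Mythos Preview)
Your proposal is essentially correct and follows the expected route: the paper does not actually prove this proposition in the body (it calls the result ``almost obvious'' and defers full details to a technical report), and structural induction with the globally-parameterized shift is precisely the intended argument. Your base case computation is right --- $|\shiftleft(\alpha)| = |\alpha| - \lfloor\phi\rfloor$, so $(j+\lfloor\phi\rfloor)+|\shiftleft(\alpha)| = j+|\alpha|$ --- and the future and Boolean cases are indeed routine.

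You have also put your finger on the only genuine subtlety: the statement asks to evaluate $\shiftleft(\phi)$ at position $\lfloor\phi\rfloor \leq 0$, whereas the semantics clause for $\Y$ (and hence $\Snc$, $\T$) hard-codes the origin at $0$. Of your two proposed resolutions, the second --- passing to the shifted model $\sigma'(i,x) = \sigma(i+\lfloor\phi\rfloor,x)$ and proving $\sigma,0\models\phi$ iff $\sigma',0\models\shiftleft(\phi)$ --- is the cleaner one and is exactly what the paper means by ``moving the origin by $-\lfloor\phi\rfloor$ into the past''; it also matches how the result is used later (e.g., in the proof of Proposition~\ref{proposition-BSP-to-SAT-I}, where $\sigma$ is recovered from $\sigma'$ by an explicit translation). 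I would commit to that formulation rather than leaving both options open, since the first reading (relocating the boundary to $i>\lfloor\phi\rfloor$) silently changes the semantics and would need its own justification.
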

\begin{corollary}\label{corollary-shift-models}
Let $\rho \in SV(\phi)^\omega$ be a sequence of symbolic valuations.
Then,
\[
\begin{array}{c}
\sigma, 0 \models \rho\quad \textrm{ iff }\quad \sigma, \lfloor\phi \rfloor \models \shiftleft(\rho) \\
\rho, 0 \symodels \phi\quad \textrm{ iff }\quad \shiftleft(\rho), 0 \symodels \shiftleft(\phi).
\end{array}
\]
\end{corollary}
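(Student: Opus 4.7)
The plan is to observe that Corollary~\ref{corollary-shift-models} is essentially a lift of Proposition~\ref{prop-removeY} from the formula level to the level of symbolic valuations: the $\shiftleft$ operation is a uniform, depth-shifting renaming of a.t.t.'s that acts identically on $\phi$ and on every constraint occurring in every $\rho(i)$, so every calculation used to prove Proposition~\ref{prop-removeY} can be reused, either pointwise in~$i$ (for item~1) or by structural induction on $\phi$ (for item~2).

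For the first equivalence, I would unfold $\sigma,0\models\rho$ into the family of conditions $\sigma,i\models\rho(i)$ for all $i\ge 0$, and unfold $\sigma,\lfloor\phi\rfloor\models\shiftleft(\rho)$ analogously into $\sigma,\lfloor\phi\rfloor+i\models\shiftleft(\rho)(i)$. Fixing $i$, each $\rho(i)$ is a conjunction of atomic constraints $R(\alpha_1,\ldots,\alpha_n)$; by definition the valuation of $\aX^j x$ at position $i$ is $\sigma(i+j,x)$ and the valuation of $\aY^j x$ at position $i$ is $\sigma(i-j,x)$. After $\shiftleft$, the depths become $j-\lfloor\phi\rfloor$ and $-j-\lfloor\phi\rfloor$ respectively, so evaluating the shifted term at position $i+\lfloor\phi\rfloor$ gives $\sigma(i+\lfloor\phi\rfloor+j-\lfloor\phi\rfloor,x)=\sigma(i+j,x)$ and $\sigma(i+\lfloor\phi\rfloor-j-\lfloor\phi\rfloor,x)=\sigma(i-j,x)$. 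Hence the tuple of values passed to $R$ is the same in both cases, so the constraint holds on one side iff it holds on the other; conjoining over all atomic constraints in $\rho(i)$ and over all $i\ge 0$ yields the claim.

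For the second equivalence I would proceed by induction on the structure of $\phi$, always using the \emph{global} $\lfloor\phi\rfloor$ of the outermost formula as the shift parameter (so that the parameter stays fixed through the induction). The base case is $\phi=R(\alpha_1,\ldots,\alpha_n)$: by Definition~\ref{def:fresh}, $\rho(0)\symodels R(\alpha_1,\ldots,\alpha_n)$ iff $\fresh(R(\alpha_1,\ldots,\alpha_n))$ is entailed by $\fresh(\rho(0))$ when all a.t.t.'s are treated as fresh variables; since $\shiftleft$ merely renames the depth superscripts uniformly in $\rho$ and in the constraint, we can choose the $\fresh$ mapping so that $\fresh(\alpha_j)=\fresh(\shiftleft(\alpha_j))$, and the entailment is literally the same. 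The inductive step is immediate because $\shiftleft$ is the identity on the propositional connectives and on the temporal operators $\X,\Y,\U,\Snc$: for instance, $\rho,i\symodels\X\psi$ iff $\rho,i+1\symodels\psi$ iff (by the induction hypothesis applied pointwise to the suffix $\rho'(j)=\rho(j+1)$) $\shiftleft(\rho),i+1\symodels\shiftleft(\psi)$ iff $\shiftleft(\rho),i\symodels\X\shiftleft(\psi)=\shiftleft(\X\psi)$; the cases for $\Y$, $\U$, $\Snc$ are analogous.

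The main bookkeeping obstacle I expect is keeping the shift parameter coherent during the structural induction: a naive formulation of $\shiftleft$ on a subformula $\psi$ of $\phi$ would use $\lfloor\psi\rfloor$, which can differ from $\lfloor\phi\rfloor$ and would break the induction. Treating $\shiftleft$ throughout as parametrised by the fixed integer $-\lfloor\phi\rfloor$ (inherited from the outermost formula) sidesteps the issue; with that convention in place the two items reduce, as sketched, to reapplications of the same term-level computation already performed for Proposition~\ref{prop-removeY}.
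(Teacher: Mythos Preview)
The paper does not give an explicit proof of Corollary~\ref{corollary-shift-models}: it is stated as an immediate consequence of Proposition~\ref{prop-removeY} and of the definition of $\shiftleft$ extended to symbolic valuations, with full details deferred to the extended version. Your argument is correct and is exactly the natural unfolding one expects: a pointwise term-level calculation for the first item, and a structural induction for the second, both reusing the depth-shift identity $|\shiftleft(\alpha)|=|\alpha|-\lfloor\phi\rfloor$ that underlies Proposition~\ref{prop-removeY}. Your remark that the shift amount must be kept fixed at $-\lfloor\phi\rfloor$ throughout the induction (rather than recomputed as $-\lfloor\psi\rfloor$ on subformulae) is precisely the one nontrivial bookkeeping point, and you handle it correctly.
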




We now have all necessary elements to prove the correctness of our encoding.
We first provide the following three equivalences, which are proved by showing the implications depicted in Figure \ref{fig-sat-equivalences}, where $\A_s\times\A_{\ell}$ is the automaton recognizing symbolic models of $\shiftleft(\phi)$:
\begin{enumerate}
\item Satisfiability of $|\phi|_k$ is equivalent to the existence of ultimately periodic runs of automaton $\A_s\times\A_{\ell}$.
\item $k$-satisfiability is equivalent to the existence of ultimately periodic runs of automaton $\A_s\times\A_{\ell}$.
\item $k$-satisfiability is equivalent to the satisfiability of $|\phi|_k$.
\end{enumerate}
Then we draw, by Proposition \ref{proposition-BSP-to-SAT-I}, the connection between $k$-satisfiability and satisfiability for formulae over  constraint systems satisfying the completion property.
In Section \ref{subsection-A_C}, thanks to Proposition \ref{proposition-BSP-to-SAT-II}, we extend the result to constraint system IPC$^*$, which does not have the completion property. 
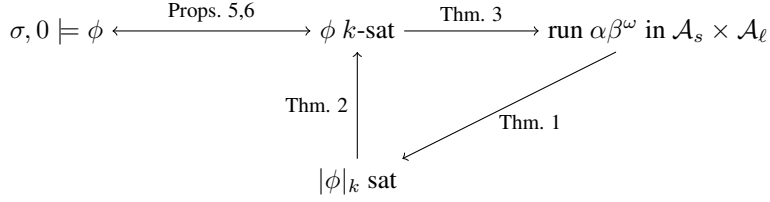
\begin{figure}[!ht]
\centering
\begin{tikzpicture}[auto]
   \node at (0,0) (A) {$|\phi|_k$ sat};
   \node at (0,2) (B) {$\phi$ $k$-sat};
   \node at (4,2) (C) {run $\alpha\beta^\omega$ in $\A_s\times\A_\ell$};
   \node at (-4,2) (D) {$\sigma,0\models \phi$};
   \path[->] (A) edge node {\footnotesize Thm. \ref{theorem-eq-enc-BSP}} (B);
   \path[<-] (A) edge node[below] {\footnotesize\quad \quad Thm. \ref{theorem-eq-encod-run}} (C);
   \path[->] (B) edge node {\footnotesize Thm. \ref{theorem-eq-BPS-run}} (C);
   \path[<->] (D) edge node {\footnotesize Props. \ref{proposition-BSP-to-SAT-I},\ref{proposition-BSP-to-SAT-II}} (B);
\end{tikzpicture}
\caption{Proof schema.}
\label{fig-sat-equivalences}
\end{figure}

Before tackling the theorems of Figure \ref{fig-sat-equivalences}, we provide the definition of models for QF-EU$\D$ formulae $|\phi|_k$ built according to the encoding of Section \ref{section-encoding}.
More precisely, a model $\mathcal{M}$ of $|\phi|_k$ is a pair $\pair{D}{{\mathcal{I}}}$ where $D$ is the domain of interpretation of $\D$, and $\mathcal{I}$ maps
\begin{itemize}
\item each function symbol $\bm{\alpha}$ onto a function associating, for
  each position of time, an element in domain $D$,
  $\mathcal{I}(\bm{\alpha}):\Nat\rightarrow D$;
\item each predicate symbol $\bm{\theta}$ onto a function associating, for
  each position of time, an element in $\set{true,false}$,
  $\mathcal{I}(\bm{\theta}):\Nat\rightarrow \set{true, false}$.
\end{itemize}
Note that mapping $\mathcal{I}$ trivially induces a finite sequence of $\D$-valuations $\sigma_k:\set{\lfloor\phi\rfloor,\dots,k+\lceil\phi\rceil}:V \rightarrow D$.


We start by showing that the existence of ultimately periodic runs of automaton $\A_s\times\A_{\ell}$ implies the satisfiability of $|\phi|_k$.

\begin{theorem}\label{theorem-eq-encod-run}
Let $\phi\in$ CLTLB($\D$) with $\N$ definable in $\D$ together with
the successor relation.
If there exists an ultimately periodic run $\rho=\alpha\beta^\omega$ ($|\alpha\beta|=k+1$) of
$\A_s\times\A_{\ell}$ accepting symbolic models of $\shiftleft(\phi)$, then $|\phi|_k$ is satisfiable with
respect to $k \in \N$ .
\end{theorem}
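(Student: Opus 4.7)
The plan is to construct a model $\mathcal{M} = \langle D, \mathcal{I}\rangle$ of $|\phi|_k$ directly from the given ultimately periodic run $\rho = \alpha\beta^\omega$ of $\A_s \times \A_\ell$, together with an arithmetic realization of $\rho$. Concretely, I would proceed in four stages: obtaining a concrete arithmetic model, defining $\mathcal{I}$ on the function/predicate symbols, choosing the witness for $\bm{loop}$ and for the eventualities, and finally checking each conjunct of $|\phi|_k$.

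First, since $\rho$ is accepted by $\A_s \times \A_\ell$, the sequence $\rho \in SV(\shiftleft(\phi))^\omega$ is locally consistent and is an LTL-symbolic model of $\shiftleft(\phi)$. By Proposition~\ref{prop-arith-model} (using either the completion property or Lemma~\ref{lemma-condC} on ultimately periodic symbolic sequences), there exists a $\D$-valuation sequence $\sigma$ such that $\sigma, 0 \models \rho$. By Corollary~\ref{corollary-shift-models} we can transport this back to a model $\sigma'$ of the original $\phi$, defined by a shift of $\lfloor\phi\rfloor$ positions. Let $\sigma_k$ be the restriction of $\sigma'$ to $\{\lfloor\phi\rfloor,\ldots,k+\lceil\phi\rceil\}$; the periodicity of $\rho$ allows us to read off the needed values at instants $\ge k$ by copying from the periodic suffix.

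Next, I would define the interpretation $\mathcal{I}$ on the symbols introduced by the encoding:
\begin{itemize}
\item For each a.t.t. $\alpha \in \terms(\phi)$, interpret $\bm{\alpha}(i)$ by the clauses in Section~\ref{subsection-encoding-terms}, which are forced by $\sigma_k$. This immediately satisfies $|ArithConstraints|_k$.
\item For each atomic relation-subformula $\theta = R(\alpha_1,\ldots,\alpha_n)$, set $\bm{\theta}(i)$ iff $R(\bm{\alpha_1}(i),\ldots,\bm{\alpha_n}(i))$ holds in $\D$. Local consistency of $\rho$ together with Corollary~\ref{corollary-lemma-A} (in the value-determined setting) ensures $|PropConstraints|_k$.
\item For each compound subformula $\theta$, set $\bm{\theta}(i) = true$ iff $\sigma',i \models \theta$ at the corresponding shifted position, extending naturally to positions up to $k+1$ via the periodicity of $\rho$.
\end{itemize}
The fixpoint-style clauses in $|TempConstraints|_k$ are then immediate from the semantics of $\X, \U, \R, \Y, \Snc, \T$ applied on the underlying infinite periodic word.

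The third stage is to fix the witnesses. Write $|\alpha| = \ell$ and $|\beta| = k+1-\ell$, and set $\bm{loop} = \ell+1$; by construction $sv_{\bm{loop}-1} = sv_\ell = sv_k$, which gives $|LoopConstraints|_k$. For $|LastStateConstraints|_k$, note that because the word is $\alpha\beta^\omega$, the infinite tail starting at position $\bm{loop}$ coincides with the tail starting at position $k+1$, so for every subformula $\theta$ we have $\bm{\theta}(k+1) \Leftrightarrow \bm{\theta}(\bm{loop})$ by straightforward induction on $\theta$. For $|Eventually|_k$, whenever $\bm{\theta}(k)$ holds with $\theta = \psi_1 \U \psi_2$, the until-semantics on the infinite word forces $\psi_2$ to hold at some position of the periodic segment $[\bm{loop},k]$; pick one such position as $\bm{j_{\psi_2}}$. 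The dual argument handles $\psi_1 \R \psi_2$.

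The main obstacle I anticipate is the bookkeeping around $\shiftleft$: the run is for $\shiftleft(\phi)$ while the encoding target is $|\phi|_k$, so past-term values $\aY^i x$ must be extracted from positions before $0$ in $\sigma_k$ as prescribed by the encoding, and one must verify that the shift does not disturb the loop witness or the eventuality witnesses. A careful invocation of Proposition~\ref{prop-removeY} and Corollary~\ref{corollary-shift-models} keeps the two viewpoints aligned. Once this bookkeeping is in place, verifying that $\mathcal{M} \models \phi$ evaluated at instant $0$ (the last conjunct of $|\phi|_k$) follows from $\sigma',0 \models \phi$, completing the proof.
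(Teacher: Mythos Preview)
Your overall structure is right, but there is a genuine gap in the first stage. You invoke Proposition~\ref{prop-arith-model} together with ``either the completion property or Lemma~\ref{lemma-condC}'' to obtain an \emph{infinite} arithmetic model $\sigma$ with $\sigma,0\models\rho$. Neither hypothesis is available here: the theorem assumes only a run of $\A_s\times\A_\ell$, not of $\A_s\times\A_\ell\times\A_C$, and $\D$ is not assumed to have the completion property. For $\D=\mathrm{IPC}^*$ (or $(\Zed,<,=)$), a locally consistent ultimately periodic $\rho$ need not admit any infinite arithmetic model (precisely the situation Property~\ref{property-C} is designed to rule out). So your construction of $\sigma_k$ as a restriction of $\sigma$ is unjustified.

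The paper sidesteps this by never producing an infinite $\sigma$. It works directly with the \emph{finite} prefix $\rho' = sv_0\ldots sv_k\,sv_{loop}$ and appeals to \cite[Lemma~5.2]{DD07} to build an edge-respecting assignment on the finite graph $G_{\shiftleft^{-1}(\rho')}$; a finite locally consistent sequence always admits such a valuation, with no need for condition $C$ or the completion property. This finite $\sigma_k$ is all that $|\phi|_k$ requires. Relatedly, the paper reads the truth values $\bm{\theta}(i)$ off the atoms $\Gamma_i$ of the (periodic) Vardi--Wolper run, rather than from the semantics over an infinite $\sigma'$; this guarantees periodicity and hence $|LastStateConstraints|_k$ directly. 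A minor point: with $\alpha = sv_0\ldots sv_{loop-1}$ one has $|\alpha|=loop$, so your choice $\bm{loop}=\ell+1$ is off by one; you want $\bm{loop}=\ell$ so that $sv_{\bm{loop}-1}=sv_{\ell-1}=sv_k$.
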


In the following proof, we use 
the generalized B\"uchi automaton obtained by the standard
construction of \cite{vw}, in the version of \cite{DD07}.
Let $\phi'$ be a CLTLB($\D$) formula (without the $\aY$ modality over
terms).
The closure of $\phi'$, denoted  $cl(\phi')$, is the smallest negation-closed set containing all subformulae of $\phi$.
An \emph{atom} $\Gamma \subseteq cl(\phi')$ is a subset of formulae of
$cl(\phi')$ that is maximally consistent, i.e., such that, for each
subformula $\xi$ of $\phi'$, either $\xi \in \Gamma$ or $\neg \xi \in
\Gamma$.
A pair $(\Gamma_1,\Gamma_2)$ of atoms is \emph{one-step temporally consistent}
when: 
\begin{itemize}
\item for every $\X\psi \in cl(\phi')$, then $\X\psi \in \Gamma_1 \siff \psi \in \Gamma_2$,

\item for every $\Y\psi \in cl(\phi')$, then $\Y\psi \in \Gamma_2 \siff \psi \in \Gamma_1$,

\item if $\psi_1\U \psi_2 \in \Gamma_1$, then $\psi_2 \in \Gamma_1 \text{ or both } \psi_1 \in \Gamma_1 \text{ and } \psi_1\U \psi_2 \in \Gamma_2$,

\item if $\psi_1\Snc \psi_2 \in \Gamma_2$, then $\psi_2 \in \Gamma_2 \text{ or both } \psi_1 \in \Gamma_2 \text{ and } \psi_1\Snc \psi_2 \in \Gamma_1$.
\end{itemize}

The automaton $\A_s = (SV(\phi'), Q, Q_0, \eta, F)$ is then
defined as follows:
\begin{itemize}
\item $Q$ is the set of atoms;

\item $Q_0 = \{\Gamma \in Q : \phi' \in \Gamma, \Y\psi \notin \Gamma \text{ for all } \psi \in cl(\phi'),  \psi_1\Snc \psi_2 \in \Gamma \text{ iff } \psi_2 \in \Gamma\}$;

\item $\Gamma_1 \xrightarrow{sv} \Gamma_2 \in \eta$ iff
  \begin{itemize}
  \item $sv \symodels \Gamma_1$
  \item $(\Gamma_1,\Gamma_2)$ is one-step consistent;
  \end{itemize}
\item $F=\{F_1, \dots ,   F_p\}$, where $F_i = \{\Gamma \in Q \mid
  \psi_i \U \zeta_i \notin \Gamma \; \text{ or } \; \zeta_i \in \Gamma\}$ and $\{\psi_1 \U \zeta_1, \dots ,\psi_p \U \zeta_p\}$ is the set of Until formulae occurring in $cl(\phi')$.
\end{itemize}

\begin{proof}
We prove that if there is a run in
$\A_s\times\A_\ell$ accepting $\shiftleft(\phi)$, then formula $|\phi|_k$ is
satisfiable (we assume the rewriting obtained through $\noprop$).
Suppose there exists an ultimately periodic
symbolic model of length $k+1$
which is accepted by $\A_s\times \A_\ell$.
It is a locally consistent sequence of symbolic valuations, $\rho =
\alpha\beta^\omega$ of the form:
\[
\rho = sv_0\dots sv_{loop-1}(sv_{loop} \dots sv_k)^\omega
\]
such that $\rho \in \Lng(\A_s\times\A_\ell)$ (for simplicity, and without loss of generality, we assume that $sv_{loop-1} = sv_k$).
$\rho$ is recognized by a periodic run of $\A_s\times \A_\ell$ of the
form\footnote{For reasons of clarity, we avoid some details of product automaton $\A_s\times \A_\ell$, which are however inessential in the proof.}:
\[
\upsilon=\pair{\Gamma_0}{sv_0}\dots
\pair{\Gamma_{loop-1}}{sv_{loop-1}}(\pair{\Gamma_{loop}}{sv_{loop}}\dots
\pair{\Gamma_{k}}{sv_{k}})^\omega.
\]
For each subformula $\psi_i \U \zeta_i$ occurring in $\phi$,
subrun $\pair{\Gamma_{loop-1}}{sv_{loop-1}}\pair{\Gamma_{loop}}{sv_{loop}}\dots
\pair{\Gamma_{k}}{sv_{k}}$ visits control states of the set $F_i$, thus
witnessing the acceptance condition of $\A_s$.
%
From $\upsilon$ we build
run $\gamma$ of $\A_s$:
\[
\gamma=\Gamma_0\dots \Gamma_{loop-1}(\Gamma_{loop}\dots \Gamma_k)^\omega.
\]
In particular, $\rho$ is defined by the projection on the alphabet of
$SV(\shiftleft(\phi))$ of the subformulae occurring in every $\Gamma_i$, for $0 \leq i
\leq k$.
Sequence $\rho$ and its accepting run $\gamma$ can be translated by
means of $\shiftleft^{-1}$ to obtain a symbolic model for $\phi$.
In particular, because $\rho,0 \symodels \shiftleft(\phi)$ then we obtain, by Corollary \ref{corollary-shift-models}, $\shiftleft^{-1}(\rho),0
\symodels \phi$.
Similarly, by shifting all formulae in atoms of $\gamma$, we obtain an
accepting run $\shiftleft^{-1}(\gamma)$ for $\phi$.
The model for $|\phi|_k$ is given by the
truth value of all the subformulae
in each $\shiftleft^{-1}(\Gamma_i)$ and the values of variables occurring in $\phi$ can be defined as explained later.
In particular, we need to complete interpretation $\mathcal{I}$ for
uninterpreted predicate and functions formulae: given a
position $0\leq i\leq k$, for all subformulae $\theta \in cl(\phi)$ we
define
\begin{itemize}
\item $\mathcal{I}(\bm{\theta})(i) = true$ iff $\theta \in \shiftleft^{-1}(\Gamma_i)$,
\item $\mathcal{I}(\bm{\theta})(i) = false$ iff $\neg \theta \in \shiftleft^{-1}(\Gamma_i)$.
\end{itemize}
The truth value of subformulae $\psi \R \zeta$ and $\psi \T \zeta$ is derived by duality.
To complete the interpretation of subformulae at position $k+1$ we can
use values from position $loop$: $\mathcal{I}(\bm{\theta})(k+1) =
\mathcal{I}(\bm{\theta})(loop)$.
Note that by taking truth values of subformulae $\theta \in
cl(\phi)$ from atoms $\shiftleft^{-1}(\Gamma_i)$, $|propConstraints|_k$ are trivially
satisfied (atoms are defined by using the same Boolean closure in $|propConstraints|_k$).
The sequence $\rho$ of symbolic valuations is consistent and all the a.t.t.'s
in the encoding of $|\phi|_k$ can be uniquely defined by considering
at each position $i$ a symbolic valuation $\shiftleft^{-1}(sv_i)$.
Consider the sequence $\rho' = sv_0\dots sv_{loop-1}(sv_{loop}
\dots sv_k)sv_{loop}$.
Following \cite[Lemma 5.2]{DD07}, we can build an edge-respecting assignment of values in $D$ for the finite graph $G_{\shiftleft^{-1}(\rho')}$, which associates, for each for each variable $x\in V$ and for each position
$\lfloor \phi \rfloor \leq i\leq k+1+\lceil \phi \rceil$, a value $\sigma_k(i,x)$.
We exploit assignment $\sigma_k(i,x)$ to define $\mathcal{I}(\bm{\alpha})$, with $\alpha \in \terms(\phi)$, in the following way (where $x_\alpha$ is the variable in $\alpha$):
\[
\mathcal{I}(\bm{\alpha})(i)=\sigma_k(i+|\alpha|, x_\alpha)
\]
for all $0\leq i\leq k+1$.
Then, formulae $|ArithConstraints|_k$ are satisfied.
Since run $\upsilon$ is ultimately periodic, then control state
$\pair{\Gamma_{loop}}{sv_{loop}}$ is visited at position $k+1$.
It witnesses the satisfaction of
$|LastStateConstraints|_k$ formulae, which prescribe that $\theta_{k+1} \iFF \theta_{loop}$
for all $\theta \in cl(\phi)$.
Finally, let us consider $|Eventually|_k$ formulae.
If subformula $\varphi = \psi\U\zeta$ belongs to atom $\Gamma_k$,
then there exists a position $j \geq k$ such that $\zeta$ holds in $j$.
Since the model is periodic, then $k \leq j
\leq k + |\beta|$, i.e., $\bm{j}_\zeta = j - |\beta|$ is a position such that $loop \leq \bm{j}_\zeta \leq k$.
Moreover, if $\neg (\psi\R\zeta) = \neg\psi \U \neg\zeta$ belongs to
$\Gamma_k$ then there exists a position $j\geq k$ such that $\neg \zeta$ holds in $j$.
As in the previous case $loop \leq \bm{j}_\zeta \leq k$.
Hence, the $|Eventually|_k$ formulae are satisfied.
The initial atom $\Gamma_0$ is such that $\Y\varphi \not\in \Gamma_0$ and if
$\psi \Snc\zeta \in \Gamma_0$ then $\zeta \in \Gamma_0$, which witnesses the
encoding of subformulae $\Y\psi$ and $\psi\Snc\zeta$
at 0, i.e., $\theta_0 \iFF \perp$ and $\theta_0
\iFF \zeta_0$, respectively.
\end{proof}

We now prove the second implication, which draws the connection between the encoding and the $k$-satisfiability problem.

\begin{theorem}\label{theorem-eq-enc-BSP}
Let $\phi\in$ CLTLB($\D$) with $\N$ definable in $\D$ together with
the successor relation. If $|\phi|_k$ is satisfiable, then formula $\phi$ is $k$-satisfiable with respect
to $k \in \N$.
\end{theorem}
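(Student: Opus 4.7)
The plan is to start from a model $\mathcal{M}=\pair{D}{\mathcal{I}}$ of $|\phi|_k$ and read off from $\mathcal{I}$ both the $k$-bounded arithmetical model $\sigma_k$ and the ultimately periodic symbolic sequence $\rho=\alpha\beta^\omega$ required by the definition of $k$-satisfiability, then verify clause by clause that these objects satisfy the bounded satisfaction relation. The reduction to CLTLB($\D$) without $\aY$ (Property~\ref{theorem-removeY}) and without propositional letters (Property~\ref{theorem-removeAP}) is applied at the outset, so we may argue on $\shiftleft(\phi)$, which contains only $\aX$ and relational atoms.

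Concretely, first I would define $\sigma_k(i,x) := \mathcal{I}(\bm{x})(i)$ for variables $x \in V$ and extend the range to $\lfloor \phi\rfloor \leq i \leq k+\lceil\phi\rceil$ by using the values $\mathcal{I}(\bm{\aX^j x})(\cdot)$ and $\mathcal{I}(\bm{\aY^j x})(\cdot)$ at the boundary positions, as dictated by $|ArithConstraints|_k$. Since $\D$ is value-determined, Lemma~\ref{lemma-B} yields a unique locally consistent sequence $\rho' = sv_0\ldots sv_k \in SV(\phi)^{k+1}$ satisfying $\sigma_k, i \models \rho'(i)$ for all $i \in \interval{0}{k}$. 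Let $loop := \mathcal{I}(\bm{loop}) \in \interval{1}{k}$. The formula $|LoopConstraints|_k$ forces $\bm{\theta}(\bm{loop}-1) = \bm{\theta}(k)$ for every atomic relation $\theta$ over $\terms(\phi)$, and since the atomic relations entirely determine the symbolic valuation (Definition~\ref{def-sv}, Corollary~\ref{corollary-lemma-A}), we get $sv_{loop-1} = sv_k$. Then $\rho := sv_0 \ldots sv_{loop-1}(sv_{loop}\ldots sv_k)^\omega$ is a well-defined ultimately periodic sequence with $|\alpha\beta|=k+1$, $\alpha = sv_0\ldots sv_{loop-1}$, $\beta = sv_{loop}\ldots sv_k$, fulfilling the syntactic shape required by the $k$-satisfiability problem.

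It remains to prove $\rho,0 \symodels \phi$. I would do this by induction on the structure of every subformula $\theta \in sub(\phi)$, proving the stronger statement that for all $0 \leq i \leq k$ we have $\rho,i \symodels \theta$ iff $\mathcal{I}(\bm{\theta})(i) = \mathrm{true}$. The atomic case is exactly Corollary~\ref{corollary-lemma-A} combined with $|PropConstraints|_k$; the Boolean cases are immediate from $|PropConstraints|_k$; the $\X$, $\Y$, $\Snc$, $\T$ cases reduce to one-step unrollings provided by $|TempConstraints|_k$, coupled, for $i=k$, with $|LastStateConstraints|_k$ to match position $k{+}1$ back to $\bm{loop}$. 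From $|\phi|_k \Rightarrow \bm{\phi}(0)$, the base claim yields $\rho,0 \symodels \phi$, and by construction $\sigma_k, 0 \models_k \alpha\beta$, concluding $k$-satisfiability.

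The main obstacle is the case of $\U$ (and dually $\R$): the fixpoint unrolling in $|TempConstraints|_k$ only propagates the eventuality along finitely many steps up to $k$, so at position $k$ one must use $|Eventually|_k$ to extract a witness $\bm{j_{\psi_2}}$ inside the loop $\interval{loop}{k}$, then, since $\rho$ is periodic with period $\beta$ and $sv_{loop-1}=sv_k$, lift this witness to an infinite-trace fulfillment of the until. Conversely, to show that $\bm{\theta}(i)=\mathrm{false}$ implies $\rho,i\not\symodels\psi_1\U\psi_2$, I would exploit $|LastStateConstraints|_k$ together with the periodicity of $\rho$ to reduce any putative witness at position $\geq k$ to a witness in $\interval{loop}{k-1}$, contradicting the local unrolling. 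The dual argument handles $\R$. Once these two cases are settled, the inductive step closes and the theorem follows.
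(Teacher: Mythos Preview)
Your proposal is correct and follows essentially the same route as the paper: extract $\sigma_k$ from the interpretation $\mathcal{I}$, invoke Lemma~\ref{lemma-B} to obtain the unique locally consistent bounded sequence $sv_0\ldots sv_k$, use $|LoopConstraints|_k$ (which ranges over all relations on $\terms(\phi)$) to conclude $sv_{loop-1}=sv_k$ and form the ultimately periodic $\rho$, then establish $\rho,0\symodels\phi$ by structural induction with the base case resting on Corollary~\ref{corollary-lemma-A}. One minor remark: the detour through $\shiftleft(\phi)$ is not needed in this particular theorem, since the encoding $|\phi|_k$ already handles $\aY$ directly via $|ArithConstraints|_k$, and the paper argues on $\phi$ itself; also, your treatment of the $\U$/$\R$ eventualities via $|Eventually|_k$ and periodicity is actually more explicit than the paper's, which defers that inductive step to the standard literature.
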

\begin{proof}
We prove the theorem by showing that formula $|\phi|_k$ defines ultimately periodic symbolic models $\rho=\alpha\beta^\omega$ for formula $\phi$ such that $\sigma_k, 0 \models_k \alpha\beta$ and $\rho,0 \symodels \phi$.
Note that the encoding of $|\phi|_k$ defines
precisely the truth value of all subformulae $\theta$ of $\phi$ in instants $i \in \interval{0}{k}$.
Then, if $|\phi|_k$ is satisfiable, given an $i \in \interval{0}{k}$, the set of all subformulae
\[
\Gamma_i=\{\varphi \in cl(\phi) \mid \text{ if } \bm{\theta}(i) \text{ holds then } \varphi = \theta \text{, else }
\varphi=\neg\theta\}
\]
is a maximal consistent set of subformulae of $\phi$.
We have $\bm{loop}\in \interval{1}{k}$.
The sequence of sets $\Gamma_i$ for $0\leq i\leq k$ is an ultimately
periodic sequence of 
maximal consistent sets
due to formulae $|LastStateConstraints|_k$ and $|LoopConstraints|_k$.
We write $\Gamma|_{A}$ to denote the projection of $\D$-constraints in $\Gamma$
on symbols of the set $A$; e.g., if
$A=\set{R_1,R_2}$ then $\set{R_1(x,y),R_2(\aX x, \aY
  x),\theta_1,\theta_2}|_A=\set{R_1(x,y),R_2(\aX x, \aY x)}$.
The sequence of atoms is
\[
\gamma=\Gamma_0 \dots \Gamma_{loop-1}
\left(\Gamma_{loop}  \dots,
\Gamma_k\right)^\omega
\]
and such that $\Gamma_{loop-1}|_{\mathcal{R}}$ is equal to the set of
relations of $\Gamma_{k}|_{\mathcal{R}}$ by $|LoopConstraints|_k$ formulae.
Moreover, by $|LastStateConstraints|_k$ we have
$\Gamma_{k+1}=\Gamma_{loop}$.

By Lemma \ref{lemma-B}, from the bounded sequence $\sigma_k$ of $\D$-valuations induced by $\mathcal{I}$, we
have a unique locally
consistent finite sequence of symbolic valuations $\alpha\beta$ such that
$\sigma_k,0\models_k\alpha\beta$.
Formula $|LoopConstraints|_k$ witnesses ultimately periodic sequences
of symbolic valuations $\rho$ because it is defined over the set of relations
in $\mathcal{R}$ and all terms of the set $\terms(\phi)$:
\[
\rho = \alpha\beta^\omega = sv_0\dots sv_{loop-1}(sv_{loop} \dots sv_k)^\omega
\]
such that $sv_{loop-1}=sv_k$.

We call $\rho^i$ the suffix of $\rho$ that starts from position $i \geq 0$.
By structural induction on $\phi$ one can prove that for all $0\leq i
\leq k+1$, for all subformulae $\theta$ of $\phi$, $\bm{\theta}(i)$ holds (i.e., $\theta \in \Gamma_i$) if, and only if,
\begin{itemize}
\item
$ \rho^i, 0 \symodels \theta$ for $\theta$ of the form $R, \X, \U, \R$;
\item $(sv_{0} \dots sv_i), i \symodels \theta$ for $\theta$ of the form $\Y, \Snc, \T$.
\end{itemize}
Then, since by hypothesis $\bm{\phi}(0)$ holds, we have that $\rho,0 \symodels \phi$.


The base case is the unique fundamental part of the proof because the inductive step over temporal modalities is rather standard.
Let us consider a relation formula $\theta$ of the form $R(\alpha_1,\dots,\alpha_n)$ where, for all $1 \leq j \leq n$, $\alpha_j \in \terms(\phi)$.
We have to show that $\bm{\theta}(i)$ holds if, and only if, $sv_i \symodels \theta$.
We have that $\bm{\theta}(i)$ holds if, and only if, $\sigma_k,i \models_k \theta$; since, by Lemma \ref{lemma-B}, $\sigma_k,i \models \theta$ if, and only if, the symbolic valuation $sv_i$ induced by $\sigma_k$ at $i$ includes $\theta$, we have by definition $sv_i \symodels \theta$.

%

We omit the inductive step, which is standard and is reported in \cite{BHJLS06} and \cite{PMS12}, since we use the same operators with the same encodings.
\end{proof}

Finally, the next theorem draws a link between $k$-satisfiability and the
existence of an ultimately periodic run in automaton $\A_s\times\A_\ell$.

\begin{theorem}\label{theorem-eq-BPS-run}
Let $\phi\in$ CLTLB($\D$) with $\N$ definable in $\D$ together with
the successor relation. If formula $\phi$ is $k$-satisfiable with respect
to $k \in \N$, then there exists an ultimately
periodic run $\rho=\alpha\beta^\omega$ of $\A_s\times\A_{\ell}$, with $|\alpha\beta| = k+1$, accepting symbolic
models of $\shiftleft(\phi)$.
\end{theorem}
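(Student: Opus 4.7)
The plan is to start from the symbolic witness $\rho^s = \alpha\beta^\omega$ provided by $k$-satisfiability and show that its $\shiftleft$-image lies in the language of $\A_s\times\A_\ell$. Unpacking the definition of $k$-satisfiability, we are given $|\alpha\beta|=k+1$, with $\alpha=\alpha' s$ and $\beta=\beta' s$ (so in particular $sv_{loop-1}=sv_k=s$), together with a finite $\D$-valuation sequence $\sigma_k$ such that $\sigma_k,0\models_k \alpha\beta$ and $\rho^s,0\symodels\phi$. Set $\rho := \shiftleft(\rho^s)$. Since $\shiftleft$ acts pointwise on symbolic valuations (relabelling a.t.t.'s), $\rho$ remains ultimately periodic of the same shape, with $|\alpha\beta|=k+1$.

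Next I would establish the two membership conditions separately. For membership in $\Lng(\A_s)$: the Vardi--Wolper-style construction used to define $\A_s$ accepts exactly the symbolic models of $\shiftleft(\phi)$, so it suffices to show $\rho,0\symodels\shiftleft(\phi)$. This is immediate from $\rho^s,0\symodels\phi$ by Corollary~\ref{corollary-shift-models}. For membership in $\Lng(\A_\ell)$: I need local consistency of $\rho$ at every pair of consecutive positions. Local consistency of the finite prefix $\alpha\beta$ of $\rho^s$ follows from the existence of $\sigma_k$ via Lemma~\ref{lemma-B}. For the periodic extension, the only genuinely new pair introduced by the $\omega$-repetition is the wrap-around $(sv_k, sv_{loop})$; because $sv_k = sv_{loop-1}$, this pair equals $(sv_{loop-1}, sv_{loop})$, whose local consistency we already have. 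Finally, $\shiftleft$ preserves local consistency: it merely shifts every a.t.t.\ index by the same constant and rewrites $\aY^i x$ as $\aX^{-i-\lfloor\phi\rfloor}x$, so the ``increment index by one when moving to the next position'' structure that defines local consistency is untouched.

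Combining these two parts yields $\rho\in \Lng(\A_s)\cap \Lng(\A_\ell)=\Lng(\A_s\times\A_\ell)$, which is exactly the ultimately periodic accepted word ($\alpha\beta^\omega$, $|\alpha\beta|=k+1$) that the theorem demands.

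The main obstacle, in my view, is not a deep one but a bookkeeping one: carefully checking that $\shiftleft$ commutes with local consistency, and that the wrap-around of the periodic extension inherits local consistency from the condition $sv_{loop-1}=sv_k$ built into the definition of $k$-satisfiability. Everything substantive --- transferring the symbolic satisfaction across $\shiftleft$ and extracting local consistency of the prefix from the bounded arithmetic model --- is already packaged in Corollary~\ref{corollary-shift-models} and Lemma~\ref{lemma-B}, so the proof is essentially a clean assembly of these two results plus the periodicity observation.
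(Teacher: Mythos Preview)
Your proposal is correct and follows essentially the same approach as the paper: invoke Lemma~\ref{lemma-B} on the bounded arithmetic model $\sigma_k$ to obtain local consistency of the prefix (hence membership in $\Lng(\A_\ell)$), and transfer symbolic satisfaction across $\shiftleft$ via Corollary~\ref{corollary-shift-models} (hence membership in $\Lng(\A_s)$). You are in fact more careful than the paper, which compresses the argument into three lines and leaves implicit the wrap-around consistency at $(sv_k,sv_{loop})$ and the preservation of local consistency under $\shiftleft$ that you spell out.
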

\begin{proof}
By definition, if $\phi$ is $k$-satisfiable so is $\shiftleft(\phi)$, and there is an ultimately
periodic symbolic model $\rho=\alpha\beta^\omega$ such that
$\rho,0\models \shiftleft(\phi)$.
By Lemma~\ref{lemma-B}, $\rho$ is locally consistent because there
exists a $k$-bounded model $\sigma_k$ such that $\sigma_k \models_k
\alpha\beta$.
Therefore, $\rho\in\Lng(\A_s\times \A_\ell)$.
\end{proof}

As explained in Section \ref{section-symbolic}, each automaton involved in the definition of
$\A_\phi$ has the function of \lq\lq filtering\rq\rq\ sequences of symbolic valuations
so that 1) they are locally consistent, 2) they satisfy an LTL
property and 3) they admit a (arithmetic) model.
As mentioned in Section \ref{sec:cltlb}, for constraint systems that have the completion property local consistency is a sufficient and necessary condition for admitting a model.
For these constraint systems $\A_\phi$ is exactly
automaton $\A_s \times \A_\ell$, and from
Proposition \ref{prop-completion} and Theorem \ref{theorem-eq-enc-BSP} we obtain the following result.

\begin{proposition}\label{proposition-BSP-to-SAT-I}
Let $\phi\in$ CLTLB($\D$) with $\N$ definable in $\D$ together with
the successor relation and satisfying the completion property. Formula $\phi$ is $k$-satisfiable with respect
to some $k \in \N$ if, and only if, 
there exists a model $\sigma$ such that $\sigma,0 \models \phi$.
\end{proposition}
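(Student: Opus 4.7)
The plan is to close the triangle of implications in Figure~\ref{fig-sat-equivalences} by exploiting the key consequence of the completion property: by Proposition~\ref{prop-completion}, when $\D$ has the completion property every locally consistent sequence of symbolic valuations admits an arithmetic model, so the automaton $\A_\phi$ that recognizes symbolic models of $\shiftleft(\phi)$ coincides with $\A_s\times\A_\ell$. This identification is what allows us to convert between ultimately periodic runs of $\A_s\times\A_\ell$ and genuine (arithmetic) models of $\phi$.

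For the forward direction, assume $\phi$ is $k$-satisfiable for some $k\in\N$. By Theorem~\ref{theorem-eq-BPS-run} there exists an ultimately periodic run $\rho=\alpha\beta^\omega$ of $\A_s\times\A_\ell$ with $|\alpha\beta|=k+1$ accepting a symbolic model of $\shiftleft(\phi)$. Since $\A_\phi=\A_s\times\A_\ell$, we have $\rho\in\Lng(\A_\phi)$, and by Proposition~\ref{prop-sat-CLTL} formula $\shiftleft(\phi)$ is satisfiable. Concretely, local consistency of $\rho$ together with the completion property gives, as in \cite{DD07}, a $\D$-valuation sequence $\sigma'$ with $\sigma',0\models\rho$, hence $\sigma',0\models\shiftleft(\phi)$. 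Translating back via Corollary~\ref{corollary-shift-models} (i.e., shifting the origin by $\lfloor\phi\rfloor$) yields a $\D$-valuation sequence $\sigma$ with $\sigma,0\models\phi$.

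For the reverse direction, suppose $\sigma,0\models\phi$. By Corollary~\ref{corollary-shift-models} one has $\sigma,\lfloor\phi\rfloor\models\shiftleft(\phi)$, so by Proposition~\ref{prop-arith-model} there is a symbolic model $\rho'$ of $\shiftleft(\phi)$ admitting an arithmetic model. Thus $\Lng(\A_\phi)=\Lng(\A_s\times\A_\ell)$ is non-empty, and standard non-emptiness of B\"uchi automata guarantees that this language contains an ultimately periodic word $\alpha\beta^\omega$. Setting $k=|\alpha\beta|-1$, Theorem~\ref{theorem-eq-encod-run} yields satisfiability of $|\phi|_k$, and Theorem~\ref{theorem-eq-enc-BSP} yields $k$-satisfiability of $\phi$.

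There is no real obstacle in the argument: the substantive work was already done in Theorems~\ref{theorem-eq-encod-run}, \ref{theorem-eq-enc-BSP}, \ref{theorem-eq-BPS-run} and in Proposition~\ref{prop-completion}. The only care needed is (i) to invoke the completion property at exactly the right spot, namely to justify $\A_\phi=\A_s\times\A_\ell$ so that acceptance of an ultimately periodic word is equivalent to the existence of an arithmetic model for it, and (ii) to handle the passage through $\shiftleft$ consistently, which is routine thanks to Corollary~\ref{corollary-shift-models}. In the next subsection, where only the weaker automaton $\A_C$ approximates the arithmetic-model language, this very step will require the stronger Lemma~\ref{lemma-condC} (ultimately periodic locally consistent sequences satisfying $C$ admit a model); under the completion property, however, that refinement is unnecessary.
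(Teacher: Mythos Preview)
Your proof is correct and follows essentially the same approach as the paper's: both directions hinge on the identification $\A_\phi=\A_s\times\A_\ell$ under the completion property, then use Theorem~\ref{theorem-eq-BPS-run} and Proposition~\ref{prop-completion} (plus the $\shiftleft$ translation via Corollary~\ref{corollary-shift-models}) for the forward direction, and non-emptiness of $\A_\phi$ yielding an ultimately periodic word for the converse. The only cosmetic difference is that for the converse the paper builds $\sigma_k$ directly as an edge-respecting labeling of $G_{\alpha\beta}$, whereas you route through Theorems~\ref{theorem-eq-encod-run} and~\ref{theorem-eq-enc-BSP}; both are equivalent closings of the triangle in Figure~\ref{fig-sat-equivalences}.
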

\begin{proof}
Suppose formula $\phi$ is $k$-satisfiable.
Then, by Theorem \ref{theorem-eq-BPS-run}, there is a symbolic model $\rho = \alpha\beta^\omega$ such that $\rho, 0 \symodels \shiftleft(\phi)$.
By Proposition \ref{prop-completion} $\rho$ admits a model $\sigma'$, i.e., such that $\sigma', 0 \models \shiftleft(\phi)$.
By Corollary \ref{corollary-shift-models}, we have $\sigma', -\lfloor\phi \rfloor \models \phi$, so the desired $\sigma$ is simply $\sigma'$ translated of $\lfloor\phi \rfloor$.


Conversely, if formula $\phi$ is satisfiable, then automaton $\A_{\shiftleft(\phi)}$
recognizes a nonempty language in $SV(\shiftleft(\phi))^\omega$.
Hence, there is an ultimately periodic, locally consistent, sequence of symbolic valuations $\rho = \alpha\beta^\omega$, with $|\alpha\beta| = k+1$, which is accepted by automaton $\A_{\shiftleft(\phi)}$.
Then, the model $\sigma_k$ that shows the $k$-satisfiability of $\phi$ is built considering prefix $\alpha\beta$, by defining an edge-respecting labeling of graph $G_{\alpha\beta}$.
\end{proof}

When constraint systems do not have the completion
property,  locally consistent symbolic models $\rho$ recognized by automaton $\A_s\times
\A_\ell$ may not admit arithmetical models $\sigma$ such that $\sigma\models \rho$.
However, as mentioned in Section \ref{subsec:completion}, for some constraint systems $\D$, it is possible to define a condition $C$ over symbolic models
such that $\rho \in \Lng(\A_s\times\A_\ell)$ satisfies $C$ if, and only if
$\rho$ admits a model.
We tackle this issue in the next section.

\section{Bounded Satisfiability of CLTLB$(\text{IPC}^*)$}
\label{subsection-A_C}

When $\D$ is IPC$^*$, Proposition \ref{proposition-BSP-to-SAT-I} does not apply since, by Lemma \ref{lem:completion}, IPC$^*$ does not have the completion property.
However, as shown by Lemma \ref{lemma-condC}, ultimately periodic symbolic models of CLTLB(IPC$^*$) formulae admit arithmetic model if, and only if, they obey the condition captured by Property \ref{property-C}.
In this section, we define a simplified condition of (non) existence of arithmetical models for ultimately periodic symbolic models of CLTLB(IPC$^*$) formulae, and we show its equivalence with Property \ref{property-C}.
Then, we provide a bounded encoding through QF-EU$\D$ formulae (where $\D$ embeds $\Nat$ and the successor function) for the new condition, and we define a specialized version of Proposition \ref{proposition-BSP-to-SAT-I} for $\D = \text{IPC}^*$.
Finally, we introduce simplifications to the encoding that can be applied in special cases.

Let $\rho$ be a symbolic model for CLTLB(IPC$^*$) formula $\phi$.
To devise the simplified condition equivalent to Property \ref{property-C}, we provide a specialized version of graph $G_\rho$ where points are identified by their relative position within symbolic valuations.
We introduce the notion of point $p=\triple{x}{j}{h}$ in $\rho$ which we use to
identify a variable or a constant $x \in V \cup \const(\phi)$ at position $h$ within  symbolic valuation
$\rho(j)$; i.e., we refer to variable
$x$, or constant $c$, at position $j+h$ of the symbolic model $\rho$.
Given a point $p=\triple{x}{j}{h}$ of $\rho$, we denote with $\var(p)$
the variable $x$, with $\sv(p)$ the symbolic valuation $j$ (with $\sv(p) \geq 0$), and with
$\shift(p)$ the position $h$ of $x$ within the $j$-th symbolic valuation (with $\mathit{shift}(p) \in \interval{\lfloor\phi\rfloor}{\lceil\phi\rceil}$); 
also, $x(j+h)$ is the value of variable $x$ in position $h$ of the $j$-th symbolic valuation of $\rho$.
Given a symbolic model $\rho$, we indicate by $P_\rho$ the set of points of $\rho$.

Different triples can refer to equivalent points.
For example, variable $x$ in position $2$ of symbolic valuation $4$ (i.e., $\triple{x}{4}{2}$) is the same as $x$ in position $1$ of adjacent symbolic valuation $5$ (i.e., $\triple{x}{5}{1}$), and also of $x$ in position $0$ of symbolic valuation $6$ (i.e., $\triple{x}{6}{0}$).
Figures \ref{figure-example-concistency-fwd} and \ref{figure-example-concistency-fwd-II} show examples of equivalent points.
Hence, we need to define an equivalence relation between triples, called \emph{local equivalence}.

\begin{definition} 
For all points $p_1= \triple{x}{j}{h}$, $p_2= \triple{x}{i}{m}$ in $P_\rho$, we say that $p_1$ is \emph{locally equivalent} to $p_2$ if
$j+h = i+m$, with $i,j \geq 0$ and $h,m \in \interval{\lfloor\phi\rfloor}{\lceil\phi\rceil}$.
\end{definition}

\begin{definition}\label{def:lf_lv_peq}
We define the relation $\locfwd \subseteq P_\rho \times P_\rho$. 
Given $p_1=\triple{x}{j}{h}$ and $p_2=\triple{y}{i}{m}$ of $P_\rho$, it is $p_1 \locfwd p_2$ if:
\begin{enumerate} 
\item
$i+m - (j+h) < -\lfloor\phi\rfloor + \lceil\phi\rceil +1$.

\item
$j+h \leq i+m$

\item
$x(j+h) \leq y(i+m)$ 


\end{enumerate}
 Similarly, relations $\locfwds , \locbwd , \locbwds, \approx\; \subseteq P_\rho \times P_\rho$ are defined as above by replacing $\leq$ with, respectively, $<, \geq, >, =$ in Condition 3.
\end{definition}

By Condition 1 of Definition~\ref{def:lf_lv_peq}, for each relation $\sim \in \set{\preccurlyeq, \prec, \approx, \succ, \succcurlyeq}$, 
$p_1\sim p_2$ may hold 
only if  the distance between $p_1$ and $p_2$ is smaller than the size 
$-\lfloor\phi\rfloor + \lceil\phi\rceil + 1$ of a symbolic valuation, i.e., $p_1$ and $p_2$ are ``local'', 
in the sense that they belong either to the same symbolic valuation (i.e., $j = i$) 
or to the common part of ``partially overlapping'' symbolic valuations (see Figures~\ref{figure-example-concistency-fwd} and \ref{figure-example-concistency-fwd-II} 
for examples of partially overlapping symbolic valuations). 
By Condition 2, each relation $\sim$ is a positional precedence, i.e., 
if $p_1 \sim p_2$ then $p_2$ cannot positionally precede $p_1$. 
Condition 3 is well defined on symbolic valuations, since it corresponds to having, in graph $G_\rho$, an arc between $p_1$ and $p_2$ that is labeled with $\sim$. 
The reflexive relations $\locfwd, \locbwd$ have an antisymmetric property, in the sense that 
if $p_1 \locfwd p_2$ and  $p_2 \locfwd p_1$, then $p_1 \approx p_2$ and $p_2 \approx p_1$ (analogously for $\locbwd$): if  $p_1=\triple{x}{j}{h}$ and $p_2=\triple{y}{i}{m}$,  
then $p_1$ and $p_2$ are at the same position $j + h = i + m$ and have the same value $x(j+h)=y(i+m)$.

Notice that the relations $\sim$ are not transitive, because of Condition 1: 
each relation $\sim$ is only ``locally'' transitive, in the sense that if $p_1 \sim p_2$ and $p_2 \sim p_3$, then 
$p_1\sim p_3$ if, and only if, Condition 1 holds for $p_1$ and $p_3$ (i.e., when also $p_1, p_3$ are ``local'', which in general may not be the case).


%
%
%

\begin{definition}
\label{def:locfwdrel}
We say that there is a \emph{local forward} (resp. \emph{local backward}) path from point $p_1$ to point $p_2$ if $p_1 \locfwd p_2$ (resp., $p_1 \locbwd p_2$); the path is called 
\emph{strict} if $p_1 \locfwds p_2$ (resp., $p_1 \locbwds p_2$).
\end{definition}



Obviously, given two points $p_1=\triple{x}{j}{h}$ and $p_2=\triple{y}{i}{m}$ of $P_\rho$ such that $|i+m - (j+h)| < -\lfloor\phi\rfloor + \lceil\phi\rceil +1$, it must be at least one 
of $p_1 \locfwd p_2$, $p_2 \locfwd p_1$, $p_1 \locbwd p_2$, $p_2 \locbwd p_1$; 
if it is both $p_1 \locfwd p_2$ and $p_1 \locbwd p_2$, then $p_1 \approx p_2$, hence $x(j+h) = y(i+m)$.

It is immediate to notice that the local equivalence is a congruence for all relations, e.g., 
if $p_1$ is locally equivalent to $p'_1$ and $p_2$ is locally equivalent to $p'_2$ then $p_1 \locfwd p_2 \iFF p_1' \locfwd p_2'$.
Figures \ref{figure-example-concistency-fwd} and \ref{figure-example-concistency-fwd-II} depict examples of this fact.

We now extend the relations of Definition~\ref{def:locfwdrel} to cope with non-overlapping symbolic valuations. 

\begin{definition}
\label{def:fwdrel}
Relation $\pfwd{\sim}\subseteq  P_\rho \times P_\rho$, for every $\sim\in\set{\preccurlyeq, \approx, \succcurlyeq}$, denotes the transitive closure of $\sim$. 
Relations $\fwds, \bwds\subseteq  P_\rho \times P_\rho$, are defined as follows, for all $p_1,p_2 \in P_\rho$: 

$p_1 \fwds p_2$  if there exist $p',p''\in P_\rho$ such that $p_1\fwd p' \locfwds p'' \fwd p_2$;

$p_1 \bwds p_2$  if there exist $p',p''\in P_\rho$ such that $p_1\bwd p' \locbwds p'' \bwd p_2$.
\end{definition}

%
%
\begin{figure}[tb]
\centering
\begin{tikzpicture}[
  dot/.style={circle,fill=black,minimum size=3pt,inner sep=0pt,
            outer sep=-1pt},
  sv/.style={rectangle,
                                    dashed,
                                    minimum size=1cm},
  ]
  \node[dot] at (1,1) (A) {};
  \node[dot] at (1,2) (B) {};
  \node[dot] at (2,1) (C) {};
  \node[dot] at (2,2) (D) {};
  \node[dot] at (3,1) (E) {};
  \node[dot] at (3,2) (F) {};

  \node[dot, color=gray] at (4,1) (E) {};
  \node[dot, color=gray] at (4,2) (F) {};

  \node[dot, color=gray] at (0,1) (G) {};
  \node[dot, color=gray] at (0,2) (H) {};
  \node[dot, color=gray] at (-1,1) (I) {};
  \node[dot, color=gray] at (-1,2) (L) {};

  \node at (4.3,2) (M) {\footnotesize $p_2$};
  \node at (1,1.3) (M) {\footnotesize $p_1, p_1'$};
  \draw (0.6,0.8) rectangle  (3.2,2.2);

  \draw[dotted] (-1.4,0.6) rectangle  (1.2,2.4);

  \node at (0,0.4) (tag2) {$i-2$};
  \node at (2,0.4) (tag3) {$i$};

  \node at (-2,1) {$y$};
  \node at (-2,2) {$x$};

  \path[-, > = stealth, out=50,in=220, dashed] (A) edge node[above]{\small$\sim$} (F);
\end{tikzpicture}
\caption{Adjacent and overlapping symbolic valuations $\rho(i)$ (solid line) and $\rho(i-2)$ (dotted line) of length 3 (with $-\lfloor\phi\rfloor = \lceil\phi\rceil = 1$), with $p_1=(y,i,-1)$ and $p_1' = (y,i-2,1)$ being locally equivalent. Both $p_1 \pfwd{\sim} p_2$ and $p_1' \pfwd{\sim} p_2$ hold.}
\label{figure-example-concistency-fwd}
\end{figure}
\begin{figure}[ht!]
\centering
\begin{tikzpicture}[
  dot/.style={circle,fill=black,minimum size=3pt,inner sep=0pt,
            outer sep=-1pt},
  sv/.style={rectangle,
                                    dashed,
                                    minimum size=1cm},
  ]
  \node[dot] at (1,1) (A) {};
  \node[dot] at (1,2) (B) {};
  \node[dot] at (2,1) (C) {};
  \node[dot] at (2,2) (D) {};
  \node[dot] at (3,1) (E) {};
  \node[dot] at (3,2) (F) {};

  \node[dot, color=gray] at (4,1) (G) {};
  \node[dot, color=gray] at (4,2) (H) {};
  \node[dot, color=gray] at (0,1) (I) {};
  \node[dot, color=gray] at (0,2) (L) {};

  \node at (3.3,1.7) (M) {\footnotesize $p_2, p_2'$};
  \node at (-0.3,1) (M) {\footnotesize $p_1$};
  \draw (0.6,0.8) rectangle  (3.2,2.2);

  \draw[dotted] (1.8,0.6) rectangle  (4.2,2.4);

  \node at (3,0.4) (tag2) {$i+1$};
  \node at (2,0.4) (tag3) {$i$};

  \node at (-1,1) {$y$};
  \node at (-1,2) {$x$};

  \path[-, > = stealth, out=50,in=220, dashed] (I) edge node[above]{\small$\sim$} (F);
\end{tikzpicture}
\caption{Adjacent and overlapping symbolic valuations $\rho(i)$ (solid line) and $\rho(i+1)$ (dotted line) of length 3 ($-\lfloor\phi\rfloor = \lceil\phi\rceil = 1$), with points $p_2=(x,i,1)$ and $p_2' = (x,i+1,0)$ being locally equivalent. Both $p_1 \pfwd{\sim} p_2$ and $p_1 \pfwd{\sim} p_2'$ hold.}
\label{figure-example-concistency-fwd-II}
\end{figure}

\begin{remark}
\label{rem:notless}
If $p_1 = \triple{x}{j}{h}$, $p_2 = \triple{y}{i}{m}$ and $p_1 \fwd p_2$, then 
it is $x(j+h) \leq y(i+m)$.
The other cases of $\pfwd{\sim}$ are similar. 
If $\sim$ is, respectively, $\prec, \approx, \succ, \succcurlyeq$, then relation between $x(j+h)$ and $y(i+m)$ is, respectively, $<, =, >, \geq$. 
If it is $p_1 \fwd p_2$, but not $p_1 \fwds p_2$, then along the path from $p_1$ to $p_2$ there are only arcs labeled with $\approx$, i.e. $p_1 \pfwd{\approx} p_2$, so $x(j+h) = y(i+m)$.
As a consequence, if it is $p_1 \fwd p_2$, but not $p_1 \fwds p_2$, then it is also $p_1 \bwd p_2$.
The dual properties hold for $\bwd$ and $\bwds$.
\end{remark}

Let $\rho= \alpha\beta^\omega \in SV(\phi)^\omega$ be an ultimately periodic symbolic model of $\phi$.
We need to introduce another notion of equivalence, which is useful for capturing properties of points of symbolic valuations in $\beta^\omega$, though it is defined in general.
More precisely, we consider two points $p,p' \in P_\rho$ as equivalent when they correspond to the same variable, in the same position of the symbolic valuation, but in symbolic valuations that are $i|\beta|$ positions apart, for some $i \geq 0$.
In fact, points in $\beta^\omega$ that are equivalent according to the definition below have the same properties concerning forward and backward paths.

\begin{definition}
\label{def:pequiv}
Two points $p,p'\in P_\rho$ are \emph{equivalent}, written $p
\equiv p'$, when $\var(p)=\var(p')$, $\sv(p')=\sv(p)+i|\beta|$ and
$\shift(p)=\shift(p')$, for some $i \in \Zed$.
\end{definition}

The main result of the section is Formula \eqref{our-nonex-C} on page \pageref{our-nonex-C}, which is based on a number of intermediate results that are presented in the following.
To test for the condition for the existence of arithmetic models of symbolic model $\rho = \alpha\beta^\omega$, 
one must represent infinite (possibly strict) forward and backward paths along $\rho$.
To this end, we devise a condition for the existence of infinite paths, 
resulting from iterating suffix $\beta$  infinitely many times.
Without loss of generality, in the following we consider ultimately periodic models $\rho = \alpha \beta^\omega$ in which $\alpha = \alpha' s$ and $\beta = \beta' s$, i.e., in which the last symbolic valuation of prefix $\alpha$ is the same as the last symbolic valuation of repeated suffix $\beta$.
We indicate by $k+1$ the length of $\alpha \beta$, and we number the symbolic valuations in $\alpha\beta$ starting from $0$, so that the last element in prefix $\alpha$ is in position $|\alpha| -1$, the first element in suffix $\beta$ is in position $|\alpha|$, and the last element of $\beta$ is in position $k$ (hence, $\rho(|\alpha|-1) = \rho(k) = s$, with $k = |\alpha\beta| -1$).
An infinite forward (resp. backward) path is represented as a cycle among
variables belonging to symbolic valuations $\rho(|\alpha|-1)$ and $\rho(k)$,
connected through relations $\fwd$ and $\fwds$ (resp. $\bwd$ and $\bwds$).
Intuitively, in $\rho$ there is an infinite (strict) forward path when there are two points $p,p'$ in $\alpha\beta$ -- with $p \neq p'$ -- such that $\sv(p) = |\alpha| -1$, $\sv(p') = k$, $p \equiv p'$, and $p \fwd p'$ ($p \fwds p'$).
Now, all results required to obtain Formula \eqref{our-nonex-C} equivalent to Property \ref{property-C} are provided.
%
%


We have the following property, which states that if in $\rho=\alpha\beta^\omega$ there is a finite forward path between two points $p,p''$ of the suffix $\beta^\omega$ with $p \equiv p''$, then there is also a finite forward path between $p$ and all points $p'$ between $p$ and $p''$ such that $p' \equiv p$.

\begin{lemma}
\label{lemma-equivpoints}
Let $\rho=\alpha\beta^\omega \in SV(\phi)^\omega$ be an ultimately periodic word, and $\beta = \beta' s' \beta''$ for some $\beta', \beta'' \in SV(\phi)^*, s' \in SV(\phi)$; let $i$ be the position of $s'$ in $\alpha\beta$ (so $\rho(i) = s'$).
Let $p_i,p_j$ any two points of $P_\rho$ such that $\sv(p_i) = i$, $\sv(p_j) = j$ and $p_i \equiv p_j$.
If $j > i + |\beta|$ and $p_i \pfwd{\sim} p_j$ (with $\sim \in \set{\preccurlyeq, \prec, \approx, \succcurlyeq, \succ}$), then it is also $p_i \pfwd{\sim} p'$, with $p \equiv p'$ and $\sv(p') = j - |\beta|$.
\end{lemma}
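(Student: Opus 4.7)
The plan is to construct a shorter $\sim$-path from $p_i$ to $p'$ out of the given $\sim$-path from $p_i$ to $p_j$, exploiting the periodicity of $\rho = \alpha\beta^\omega$ on its suffix. I would first unfold the given relation $p_i \pfwd{\sim} p_j$ as a finite sequence of local steps $p_i = q_0, q_1, \ldots, q_n = p_j$, each labeled by one of the local relations of Definition~\ref{def:lf_lv_peq}, whose composition matches the decomposition in Definition~\ref{def:fwdrel} (transitive closure for $\preccurlyeq, \approx, \succcurlyeq$; one distinguished strict link wedged between two non-strict paths for $\prec$ and $\succ$). The key structural property to exploit is that the ``period shift'' map $\psi^{-1}: (x,\ell,h) \mapsto (x, \ell - |\beta|, h)$, defined whenever $\ell \geq |\alpha|+|\beta|$, is a homomorphism of local relations: since $\rho(\ell) = \rho(\ell + |\beta|)$ for every $\ell \geq |\alpha|$, any local relation $q \sim q'$ between such points is preserved when both are shifted back by one period.

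I would then choose an index $r$ at which to cut the path. Because positions along a $\pfwd{\sim}$-path are non-decreasing (Condition 2 of Definition~\ref{def:lf_lv_peq}) and the total positional extent is $j - i \geq 2|\beta|$ (recall $p_i \equiv p_j$ forces $j-i$ to be a positive multiple of $|\beta|$), there is a first $r$ at which the position crosses the threshold $i + |\beta| + \shift(p_i)$. Replacing the tail $q_r, q_{r+1}, \ldots, q_n$ with $\psi^{-1}(q_r), \psi^{-1}(q_{r+1}), \ldots, \psi^{-1}(q_n) = p'$ yields a candidate sequence that starts at $p_i$ and ends at $p'$ with $\sv(p') = j - |\beta|$ and $p_i \equiv p'$; all local steps entirely inside the shifted tail remain valid by the homomorphism property above. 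For non-strict $\sim$, transitivity then closes the argument; for the strict cases $\prec$ and $\succ$, at least one strict local step survives either in the unshifted prefix or, by the homomorphism, among the shifted steps, so the witness required by Definition~\ref{def:fwdrel} remains.

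The main obstacle is establishing the validity of the single junction step $q_{r-1} \sim_{r-1} \psi^{-1}(q_r)$ that links the unshifted prefix to the shifted suffix. This step must satisfy locality (positions differing by less than $\lceil\phi\rceil - \lfloor\phi\rfloor + 1$) and positional precedence; the value relation itself is inherited from the original step via periodicity, since the symbolic valuation at $\sv(\psi^{-1}(q_r))$ coincides with that at $\sv(q_r)$ on the overlapping terms. A naive cut risks failing locality or precedence when $|\beta|$ exceeds the symbolic-valuation size. Handling this cleanly will likely require either a more refined choice of cut point --- selecting the crossing at a position within the same symbolic valuation as $q_{r-1}$ using the equivalence between locally equivalent points noted after Definition~\ref{def:locfwdrel} --- or a reformulation as a quotient-graph argument: viewing the path as a closed walk in $\rho$ modulo $\equiv$ whose winding number equals $(j-i)/|\beta|$, removing one revolution, and lifting back. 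Either route amounts to exhibiting a single ``loop'' in the path that corresponds to exactly one traversal of the period, and excising it.
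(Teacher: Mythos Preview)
Your splicing approach has a genuine gap at the junction step, and neither of your proposed fixes closes it. When you cut at the first index $r$ where the position crosses $i+|\beta|+\shift(p_i)$ and replace the tail by its $\psi^{-1}$-shift, the prefix ends near position $i+|\beta|$ while the shifted tail begins near position $i$: the junction step $q_{r-1}\to\psi^{-1}(q_r)$ jumps \emph{backwards} by roughly $|\beta|$, violating Condition~2 of Definition~\ref{def:lf_lv_peq} outright (not just locality). A ``refined cut point'' does not help, since any cut-and-shift at a single index produces a junction of positional offset $|\beta|$. The quotient-graph idea would work if you could find indices $r'<r$ with $q_r=\psi(q_{r'})$ exactly (same variable, same shift, $\sv$ differing by $|\beta|$), for then the splice $q_0,\ldots,q_{r'}=\psi^{-1}(q_r),\psi^{-1}(q_{r+1}),\ldots,\psi^{-1}(q_n)=p'$ has no junction defect. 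But the hypothesis gives only $j>i+|\beta|$, i.e.\ at least two periods, whereas a pigeonhole over the $|V|\cdot\lambda$ possible (variable, shift) pairs would need many more; the path can visit distinct points at consecutive period boundaries, so such $r',r$ need not exist.

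The paper's argument avoids splicing altogether by a \emph{value-based} case split, which your purely positional treatment omits. Since every local step spans less than one symbolic valuation, the path must contain points $p_3,p_1$ lying in the same symbolic valuation as $p'$ (at $\sv=j-|\beta|$). Maximality of symbolic valuations forces exactly one of $p'\locfwd p_1$ or $p'\locbwd p_1$. If $p'\locfwd p_1$, then $p'\fwd p_j$ along the remaining path; this is a one-period segment which, by periodicity of $\beta^\omega$, can be iterated starting from $p_i$ to reach $p'$. If $p'\locbwd p_1$, then from $p_3\fwd p_1$ and $p_3$ local to $p'$ one gets $p_3\fwd p'$, whence $p_i\fwd p_3\fwd p'$ directly. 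The strict cases refine this dichotomy by tracking where the strict edge lies. The essential idea you are missing is that when the cut-and-shift junction is not a valid local step, the \emph{value} of $p'$ relative to the path at that position determines which half of the path already witnesses $p_i\pfwd{\sim}p'$.
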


\begin{proof}
First of all, note that, since $p_i \equiv p_j$, it is $\rho(j-|\beta|) = \rho(j) = s'$.
\\
Let us consider the case $p_i \fwd p_j$.
Then, there exist three points $p_1,p_2,p_3$ such that:
\begin{enumerate}
\item either $p' \locfwd p_1$, or $p' \locbwd p_1$
\item $p_1 \fwd p_2$
\item $p_2 \locfwd p_j$
\item $p_i \fwd p_3$
\item $p_3 \fwd p_1$
\item either $p_3 \locfwd p'$, or $p_3 \locbwd p'$.
\end{enumerate}
\begin{figure}
\includegraphics[width=\columnwidth]{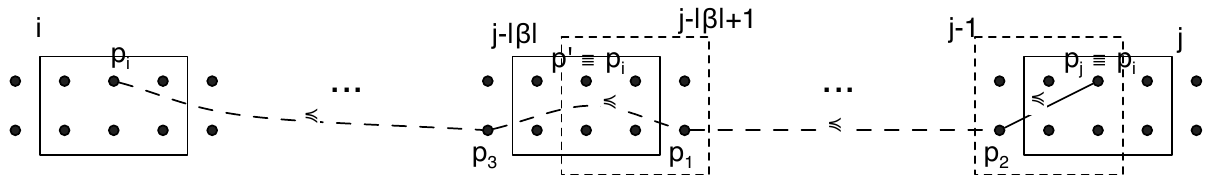}
\caption{Relations between symbolic valuations $i$ and $j$.}
\label{fig:LemmaLoopEquiv}
\end{figure}
Figure \ref{fig:LemmaLoopEquiv} exemplifies the conditions above.
We have two cases.
If $p' \locfwd p_1$, then, from conditions 2 and 3, and the definition of $\fwd$, we have $p' \fwd p_j$; since $p_i$, $p'$ and $p_j$ all belong to $\beta^\omega$ and are such that $p_i \equiv p' \equiv p_j$, then the same forward path between $p'$ and $p_j$ from which it descends $p' \fwd p_j$ can be iterated starting from $p_i$, because suffix $\beta^\omega$ is periodic.
Then, $p_i \fwd p'$.
If, instead, $p' \locbwd p_1$, then, by conditions 5 and 6, and definition of $\locbwd$ and $\fwd$, it is also $p_3 \fwd p'$; finally, by condition 4 and transitivity of $\fwd$, we have $p_i \fwd p'$.

The case $p_i \fwds p_j$ is similar, when one considers that, in addition to conditions 1-6, it must be $p_i \fwds p_3$, or $p_3 \fwds p_1$, or $p_1 \fwds p_2$, or $p_2 \locfwds p_j$.
If $p' \locfwd p_1$, then if it is $p_1 \fwds p_j$, it is also $p' \fwds p_j$, and the proof is as before.
If, instead, it is not $p_1 \fwds p_j$, then it must be $p' \fwds p_1$, otherwise from Remark \ref{rem:notless} it descends that the value of the variable in $p'$ is equal to the value in $p_j$, and in turn that the value of the variable in $p_i$ is equal to the value in $p_j$, thus contradicting $p_i \fwds p_j$.
If $p' \locbwd p_1$, then if $p_i \fwds p_1$ we have also $p_i \fwds p'$.
Otherwise, if it is not $p_i \fwds p_1$, then it must be $p_1 \fwds p_j$, and in this case it must also be $p' \locbwds p_1$ (hence also $p_i \fwds p'$), or the arc between $p_1$ and $p'$ is labeled with $=$, and we have that $p_i \fwd p'$, not $p_i \fwds p'$ (hence $p_i \pfwd{\approx} p'$ by Remark \ref{rem:notless}), and $p' \fwds p_j$, which yields a contradiction.

The proofs for cases $p_i \bwd p_j$, $p_i \bwds p_j$, and $p_i \pfwd{\approx} p_j$ are similar.
\end{proof}

We immediately have the following corollary, which states that a path looping through $p_i$ can be shortened to a single iteration.

\begin{corollary}
\label{cor:equivpoints}
Let $\rho=\alpha\beta^\omega \in SV(\phi)^\omega$, $p_i$ and $p_j$ as in Lemma \ref{lemma-equivpoints}.
Then it is also $p_i \pfwd{\sim} p'$, with $p \equiv p'$ and $\sv(p') = i + |\beta|$.
\end{corollary}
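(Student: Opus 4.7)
The plan is to derive the corollary by a straightforward induction, applying Lemma~\ref{lemma-equivpoints} repeatedly to peel off copies of $\beta$ from the forward/backward path. The key observation is that, because $p_i \equiv p_j$, Definition~\ref{def:pequiv} forces $j - i = n|\beta|$ for some integer $n$; combined with the hypothesis $j > i + |\beta|$ this gives $n \geq 2$. So the goal is to prove by induction on $n \geq 1$ that whenever $p_i \equiv q$ with $\sv(q) = i + n|\beta|$ and $p_i \pfwd{\sim} q$, then also $p_i \pfwd{\sim} p'$ with $p_i \equiv p'$ and $\sv(p') = i + |\beta|$.

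The base case is $n = 1$, where $q$ itself satisfies $\sv(q) = i + |\beta|$ and there is nothing to prove. For the inductive step with $n \geq 2$, note that $\sv(q) = i + n|\beta| > i + |\beta|$, and $q \equiv p_i$ and $p_i \pfwd{\sim} q$, so the hypotheses of Lemma~\ref{lemma-equivpoints} are met (with $p_j$ instantiated as $q$). The lemma then produces a point $q'$ with $p_i \equiv q'$, $\sv(q') = \sv(q) - |\beta| = i + (n-1)|\beta|$, and $p_i \pfwd{\sim} q'$. The induction hypothesis applied to $q'$ yields the desired $p'$ with $\sv(p') = i + |\beta|$ and $p_i \pfwd{\sim} p'$. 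Taking $q = p_j$ at the top level gives the corollary.

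The only mild subtlety is checking that Lemma~\ref{lemma-equivpoints} can actually be invoked at every step of the induction, i.e., that the strict inequality $\sv(q) > i + |\beta|$ holds throughout. This is immediate from the inductive setup since each step decreases $n$ by one and we stop as soon as $n = 1$. No new combinatorial or graph-theoretic argument is needed beyond what is already embedded in Lemma~\ref{lemma-equivpoints}, so there is no real obstacle; the corollary is essentially the iterated form of that lemma.
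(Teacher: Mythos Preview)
Your proposal is correct and matches the paper's intent: the paper states the corollary as an immediate consequence of Lemma~\ref{lemma-equivpoints} without giving a separate proof, and your induction simply makes explicit the repeated application of the lemma that ``immediately'' implies. There is nothing to add.
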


The following lemma shows that there is an infinite non-strict (resp. strict) forward path in $\rho = (\alpha' s)(\beta' s)^\omega$ if, and only if, there is an infinite non-strict (resp. strict) forward path that loops through symbolic valuation $s$.

\begin{lemma}
\label{lem:exofperfwdpath}
Let $\rho=\alpha\beta^\omega \in SV(\phi)^\omega$ be an ultimately periodic word, with $\alpha = \alpha' s$ and $\beta = \beta' s$.
In $\rho$ there is an infinite non-strict (resp. strict) forward path if, and only if, there is an infinite non-strict (resp. strict) forward path that contains a denumerable set of points $\{p_i\}_{i \in \Nat}$ of $P_\rho$ such that:
\begin{enumerate}
\item $\sv(p_0) = |\alpha| -1 = |\alpha'|$,
\item $p_i \equiv p_j$ and $\sv(p_i) < \sv(p_j)$ for all $i<j \in \Nat$,
\item $p_i \fwd p_{i+1}$ (resp. $p_i \fwds p_{i+1}$) for all $i \in \Nat$.
\end{enumerate}
\end{lemma}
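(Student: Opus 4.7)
The $(\Leftarrow)$ direction is immediate: concatenating the forward sub-paths witnessing $p_i \fwd p_{i+1}$ (or $p_i \fwds p_{i+1}$ in the strict case) yields an infinite forward (resp.\ strict forward) path in $\rho$, since $\fwd$ and $\fwds$ are themselves just chains of $\locfwd$/$\locfwds$ edges.

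For the $(\Rightarrow)$ direction, let $d:\Nat\to P_\rho$ be an infinite forward (resp.\ strict) path. I would first argue that the absolute positions along $d$ grow without bound, using Conditions 1--2 of Definition~\ref{def:lf_lv_peq}: they are non-decreasing with jumps bounded by the window size of a symbolic valuation, so the only alternative is for $d$ to become eventually constant at an $\approx$-self-loop, which is incompatible with infinitely many $\locfwds$ edges in the strict case and can be handled separately in the non-strict case by using the trivial $\approx$-self-loop at every vertex together with periodicity of $\rho$. In the generic case, $d$ visits every sufficiently large absolute position, and in particular has a point at each position $|\alpha|-1+c|\beta|$ for every large enough $c$; each such point carries a (variable, shift) tag drawn from a finite set, so pigeonhole yields an infinite subsequence $q_0,q_1,q_2,\ldots$ of $d$'s points sharing a common tag $(x_0,h_0)$ and hence all mutually equivalent in the sense of Definition~\ref{def:pequiv}.

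Because $q_n$ and $q_{n+1}$ both lie on the single path $d$ with $q_{n+1}$ later, the finite chain of $\locfwd$ edges of $d$ between them yields $q_n \fwd q_{n+1}$ by Definition~\ref{def:fwdrel}. In the strict case, $d$ has infinitely many $\locfwds$ edges; were only finitely many sub-paths $q_n \to q_{n+1}$ to contain a strict edge, the remaining strict edges would all lie in a finite prefix of $d$, a contradiction, so after thinning $\{q_n\}$ to retain only those indices whose next sub-path is strict, I obtain $q_n \fwds q_{n+1}$. To anchor the initial point at position $|\alpha|-1$, I would apply Corollary~\ref{cor:equivpoints} to the multi-period hop $q_0 \fwd q_1$ to obtain a single-period hop $q_0 \fwd q'$ with $q' \equiv q_0$ and $\sv(q')=\sv(q_0)+|\beta|$ (in the strict case, the same corollary delivers $q_0 \fwds q'$ since Lemma~\ref{lemma-equivpoints} preserves $\prec$). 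Since $\rho(|\alpha|-1)=s=\rho(|\alpha|-1+c|\beta|)$ for every $c\geq 0$ by the hypothesis $\alpha=\alpha's$, $\beta=\beta's$, and $\rho$ is $|\beta|$-periodic on positions $\geq|\alpha|$, the underlying sub-path can be shifted backward by $\sv(q_0)-(|\alpha|-1)$ positions to give $p_0 \fwd p_1$ with $p_0=(x_0,|\alpha|-1,h_0)$ and $p_1$ the copy of $p_0$ at position $|\alpha|-1+|\beta|$. Iterating this shift by successive multiples of $|\beta|$ then produces the required sequence $\{p_n\}_{n\in\Nat}$, each $p_n$ equivalent to $p_0$ with $\sv(p_n)=|\alpha|-1+n|\beta|$ and $p_n \fwd p_{n+1}$ (resp.~$\fwds$).

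The main obstacle I expect is the shifting step in the last paragraph: one must verify carefully that every $\locfwd$ (and, where relevant, $\locfwds$) edge of the extracted single-period sub-path is preserved when translated backward by multiples of $|\beta|$. This relies precisely on the fact that the symbolic valuations of $\rho$ agree exactly along the shift --- including at the boundary position $|\alpha|-1$, which is shared between the last symbolic valuation of $\alpha$ and the ``virtual'' first copy of $\beta$ by the hypothesis $\alpha=\alpha's$, $\beta=\beta's$ --- so that the atomic constraints carrying each edge survive translation. Once this shift is justified, iteration across the whole of $\beta^\omega$ is automatic by periodicity, and Corollary~\ref{cor:equivpoints} guarantees that strictness is preserved whenever it is needed.
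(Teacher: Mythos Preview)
Your proposal is correct and follows essentially the same route as the paper: extract from the given forward path a subsequence of points lying in the periodically recurring symbolic valuation $s$, apply pigeonhole on the finite set of (variable, shift) tags to obtain an infinite $\equiv$-class, and then translate that class back to start at position $|\alpha'|$ using the periodicity of $\beta^\omega$. The paper carries out exactly these three steps, with the same appeal to periodicity for the final translation; your additional explicit invocation of Corollary~\ref{cor:equivpoints} to normalize to single-period hops before shifting is a harmless refinement, and your phrase ``visits every sufficiently large absolute position'' should more precisely read ``has a point locally equivalent to one in every sufficiently far $s$-indexed symbolic valuation'' (which is how the paper states it), but the bounded-jump argument you give is exactly what justifies that claim.
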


\begin{proof}
Let us assume in $\rho$ there is an infinite non-strict forward path, and let $F = \{f_i\}_{i \in \Nat}$ be the points that it traverses (hence, it is $f_i \locfwd f_{i+1}$ for all $i$).
Note that $\sv(f_0)$ can be any, not necessarily $0$ or $|\alpha'|$.
Since suffix $\beta^\omega$ is periodic and each arc $\langle f_i, f_{i+1}\rangle$ in $F$ connects two points that, for Condition 1 of Definition \ref{def:lf_lv_peq}, dist at most $- \lfloor \phi \rfloor + \lceil \phi \rceil + 1$ from one another, then there must be a sequence of points $Q = \{q_i\}_{i \in \Nat}$ such that, for each $q_i \in Q$
\begin{itemize}
\item $\sv(q_{i+1}) > \sv(q_i) > |\alpha'|$
\item there is a point $f_j \in F$ such that $f_j$ is locally equivalent to $q_i$
\item $\rho(\sv(q_i)) = s$. 
\end{itemize}
In other words, $Q$ is made by points of $F$ (or locally equivalent ones) that belong to one of the instances of symbolic valuation $s$ in $\beta^\omega$.
For each $i \in \Nat$ it is $q_i \fwd q_{i+1}$.
Since the number of points in symbolic valuation $s$ is finite, there must be an element $q_{\bar{i}} \in Q$ such that an infinite number of points equivalent to $q_{\bar{i}}$ appear in $Q$.
In other words, there is a denumerable sequence $L = \{l_i\}_{i \in \Nat}$ such that
\begin{itemize}
\item $l_0 = q_{\bar{i}}$
\item for all $i$ it is $l_i \equiv q_{\bar{i}}$.
\end{itemize}
Again, for all $i$ it is $l_{i} \fwd l_{i+1}$ (also, it is $\sv(l_i) < \sv(l_{i+1})$).
Sequence $L$ is part of an infinite forward path that starts from $l_0$ and visits all $l_i$.
The desired sequence $\{p_i\}_{i \in \Nat}$ that satisfies conditions 1-3 is $L$ translated of $\sv(l_0) - |\alpha'|$, so that it starts from symbolic valuation in position $|\alpha'|$ (the translation is possible because of the periodicity of $\beta^\omega$).
Figure \ref{fig:LemmaTranslate} shows an example of translation.
\begin{figure}
\includegraphics[width=\columnwidth]{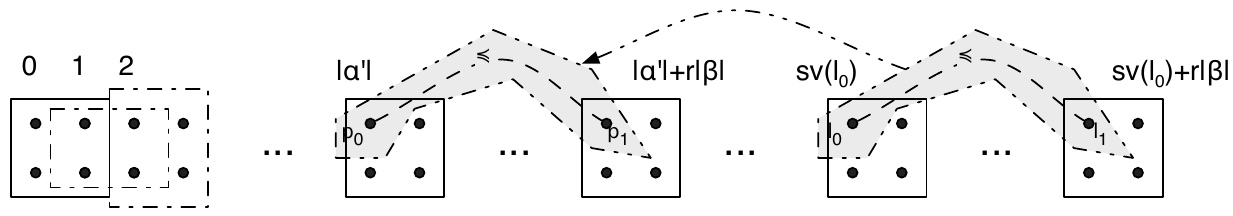}
\caption{Example of translation from $l_0$ to $p_0$.}
\label{fig:LemmaTranslate}
\end{figure}

The proof in case of strict infinite paths is similar.
\end{proof}

A similar lemma holds for backward paths.
We have the following result.

\begin{theorem}
\label{thm:LF}
Let $\rho=\alpha\beta^\omega \in SV(\phi)^\omega$ be an ultimately periodic word, with $\alpha = \alpha' s$ and $\beta = \beta' s$.
Then, there is a non-strict (resp. strict) infinite forward path in $\rho$ if, and only if,
there are two points $p,p'$ of $P_\rho$ such that $\sv(p) = |\alpha'|$, $\sv(p') = k$, $p \equiv p'$, and $p \fwd p'$ (resp. $p \fwds p'$).
\end{theorem}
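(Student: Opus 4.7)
}

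My plan is to establish the two directions separately, using Lemma~\ref{lem:exofperfwdpath} and Corollary~\ref{cor:equivpoints} as the principal tools; I will treat both the non-strict and the strict versions in parallel, as the arguments differ only in bookkeeping. Throughout, I will use the observation that, since $\alpha = \alpha' s$ and $\beta = \beta' s$, we have $k = |\alpha\beta|-1 = |\alpha'| + |\beta|$, so a point $p$ with $\sv(p) = |\alpha'|$ and a point $p'$ with $\sv(p') = k$ satisfying $p \equiv p'$ are simply two instances of the same ``logical'' point exactly one period apart.

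For the $(\Leftarrow)$ direction, I would use periodicity of $\beta^\omega$ to turn a single ``loop'' $p \fwd p'$ into an infinite path. Unfolding $p \fwd p'$ through Definition~\ref{def:fwdrel} yields a finite sequence $q_0 = p, q_1, \dots, q_m = p'$ with $q_i \locfwd q_{i+1}$. For each $n \geq 0$, let $q_i^{(n)}$ be the point obtained by shifting the symbolic-valuation index of $q_i$ by $n|\beta|$; by periodicity of $\beta^\omega$, each relation $q_i^{(n)} \locfwd q_{i+1}^{(n)}$ is preserved, and $q_m^{(n)}$ coincides with $q_0^{(n+1)}$. Concatenating these translated segments produces an infinite forward path. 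In the strict case, $p \fwds p'$ guarantees at least one $\locfwds$-arc inside the segment $q_0,\dots,q_m$, which reappears in every translated copy, yielding infinitely many strict arcs as required by the definition of a strict path.

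For the $(\Rightarrow)$ direction, I would first invoke Lemma~\ref{lem:exofperfwdpath} to extract a denumerable sequence $\{p_i\}_{i \in \Nat}$ with $\sv(p_0) = |\alpha'|$, $p_i \equiv p_j$ for all $i,j$, strictly increasing symbolic-valuation positions, and $p_i \fwd p_{i+1}$ (respectively $p_i \fwds p_{i+1}$) for every $i$. Transitivity of $\fwd$ (and of $\fwds$) then gives $p_0 \fwd p_j$ (resp.\ $p_0 \fwds p_j$) for every $j \geq 1$. Pick any $j$ with $\sv(p_j) > |\alpha'| + |\beta| = k$ (which exists because the $\sv(p_i)$ are strictly increasing and differ by multiples of $|\beta|$); if $\sv(p_1) = k$ already, set $p' = p_1$ and we are done, and otherwise Corollary~\ref{cor:equivpoints} produces a point $p'$ with $p_0 \equiv p'$, $\sv(p') = |\alpha'| + |\beta| = k$, and $p_0 \fwd p'$ (resp.\ $p_0 \fwds p'$). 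Taking $p = p_0$ concludes the proof.

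The only delicate point I expect is in the $(\Leftarrow)$ direction, namely verifying that the concatenation $q^{(0)} q^{(1)} q^{(2)} \cdots$ actually meets the formal definition of an infinite (strict) forward path, because one must check that at the ``junctions'' $q_m^{(n)} = q_0^{(n+1)}$ no information is lost and that Condition~1 of Definition~\ref{def:lf_lv_peq} continues to hold locally, while Condition~2 (positional monotonicity) is guaranteed by the translation. Once that is carefully checked, the rest of the argument is a straightforward assembly of Lemma~\ref{lem:exofperfwdpath} and Corollary~\ref{cor:equivpoints}.
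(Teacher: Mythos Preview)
Your proposal is correct and follows essentially the same route as the paper: for $(\Rightarrow)$ you invoke Lemma~\ref{lem:exofperfwdpath} to obtain the periodic sequence $\{p_i\}$ anchored at $|\alpha'|$ and then Corollary~\ref{cor:equivpoints} to land at $\sv(p')=k$, and for $(\Leftarrow)$ you unfold the transitive closure into a finite $\locfwd$-chain and iterate it by periodicity of $\beta^\omega$. Your version is slightly more explicit than the paper's about selecting a suitable $j$ before applying the corollary and about the junction points in the concatenated path, but the underlying argument is identical.
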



\begin{proof}
We consider the case for non-strict forward paths, the case for strict ones being similar.

Assume in $\rho$ there is an infinite non-strict forward path; then, by Lemma \ref{lem:exofperfwdpath} there is also an infinite non-strict forward path that contains a denumerable set of points $\{p_i\}_{i \in \Nat}$ that satisfies conditions 1-3 of the lemma.
Then, from Corollary \ref{cor:equivpoints} we immediately have $p_0 \fwd p'$, with $p' \equiv p_0$ and $\sv(p') = |\alpha'|+|\beta| = k$ (recall that $|\alpha\beta| = k+1$).

Conversely, assume that there are two points $p,p'$ such that $p=\triple{x}{|\alpha'|}{h}$, $p'=\triple{x}{k}{h}$, $p \equiv p'$, and $p \fwd p'$.
By definition of $p\fwd p'$, there exists a finite number of points $p^1, p^2, \dots$ 
such that $p\locfwd p^1 \locfwd  p^2 \dots \locfwd p'$.
This forward path can be iterated infinitely many times, since $p\equiv p'$ and the suffix $\beta$ is repeated infinitely often.
Therefore, point $p$ and points equivalent to $p$ satisfy conditions 1-3 of Lemma \ref{lem:exofperfwdpath}.
By the same lemma, then, in $\rho$ there is an infinite non-strict forward path.
\end{proof}



Analogously,
we can prove the following version of
Theorem \ref{thm:LF} in case of backward paths.

\begin{theorem}
\label{thm:LB}
Let $\rho=\alpha\beta^\omega \in SV(\phi)^\omega$ be an ultimately periodic word, with $\alpha = \alpha' s$ and $\beta = \beta' s$.
Then, there is a non-strict (resp. strict) infinite backward path in $\rho$ if, and only if,
there are two points $p,p'$ such that $\sv(p) = |\alpha'|$, $\sv(p') = k$, $p \equiv p'$, and $p \bwd p'$ (resp. $p \bwds p'$).
\end{theorem}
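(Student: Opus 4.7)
The plan is to mirror the proof of Theorem~\ref{thm:LF} by exploiting the symmetry between forward and backward relations. The definitions of $\locbwd$ and $\locbwds$ differ from $\locfwd$ and $\locfwds$ only in Condition 3 of Definition~\ref{def:lf_lv_peq} (with $\geq$ or $>$ replacing $\leq$ or $<$), while the positional-precedence Condition 2 is identical. Consequently, a backward path still progresses forward in position through $\rho$, so all combinatorial arguments about the periodicity of $\beta^\omega$ carry over verbatim; only the direction of the value ordering is reversed.

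First, I would state and prove backward analogs of Lemma~\ref{lemma-equivpoints} and Corollary~\ref{cor:equivpoints}: if $p_i \pbwd{\sim} p_j$ with $p_i \equiv p_j$ and $\sv(p_j) > \sv(p_i) + |\beta|$, then $p_i \pbwd{\sim} p'$ for every intermediate $p' \equiv p_i$ with $\sv(p') = \sv(p_j) - m |\beta|$, and in particular for $\sv(p') = \sv(p_i) + |\beta|$. The proof is obtained from the one of Lemma~\ref{lemma-equivpoints} by replacing $\fwd$ with $\bwd$ and $\fwds$ with $\bwds$ throughout; the argument relies on the periodicity of $\beta^\omega$ and on the same case analysis with local $\locbwd$/$\locfwd$ connections, which is symmetric under this replacement. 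Next I would establish the backward analog of Lemma~\ref{lem:exofperfwdpath}: an infinite non-strict (resp.\ strict) backward path exists in $\rho$ if and only if one exists that contains a denumerable set $\{p_i\}_{i \in \Nat}$ with $\sv(p_0) = |\alpha'|$, $p_i \equiv p_j$ and $\sv(p_i) < \sv(p_j)$ for $i<j$, and $p_i \bwd p_{i+1}$ (resp.\ $p_i \bwds p_{i+1}$). The proof is the same as for the forward case: since the symbolic valuation $s$ has only finitely many points and every arc of a backward path spans at most $-\lfloor \phi \rfloor + \lceil \phi \rceil + 1$ positions, any infinite backward path must, by pigeonhole, visit infinitely often points equivalent to some fixed point in an instance of $s$; translating by $\sv(l_0) - |\alpha'|$ then produces the required $\{p_i\}$.

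Finally, combining these tools proves the theorem in exactly the same structure as Theorem~\ref{thm:LF}. For the ``only if'' direction, an infinite (strict) backward path yields, via the backward Lemma~\ref{lem:exofperfwdpath} analog, a sequence $\{p_i\}$ starting at symbolic valuation $|\alpha'|$; applying the backward Corollary~\ref{cor:equivpoints} analog with $p = p_0$ gives $p \bwd p'$ (resp.\ $p \bwds p'$) for $p' \equiv p$ with $\sv(p') = |\alpha'| + |\beta| = k$. For the ``if'' direction, given $p \bwd p'$ with $p \equiv p'$, the underlying finite chain $p \locbwd p^1 \locbwd \dots \locbwd p'$ can be iterated infinitely using the periodicity of $\beta^\omega$, producing a denumerable sequence satisfying the conditions of the backward Lemma~\ref{lem:exofperfwdpath} analog, which then yields an infinite (strict) backward path. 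I do not expect any substantive obstacle, since the asymmetric parts of the argument (the directional conditions) are all on value ordering rather than on position; the only point to verify carefully is the backward analog of the delicate subcase of Lemma~\ref{lemma-equivpoints} concerning strict paths, where one must track where the strict arc lies along the path and rule out, via Remark~\ref{rem:notless}, the case in which all arcs would be labeled by $=$.
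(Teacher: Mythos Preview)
Your proposal is correct and matches the paper's approach exactly: the paper states that Theorem~\ref{thm:LB} is proved analogously to Theorem~\ref{thm:LF}, having already noted that the backward versions of Lemma~\ref{lemma-equivpoints} and Lemma~\ref{lem:exofperfwdpath} hold by the same arguments. Your observation that the positional-precedence condition is shared while only the value-ordering direction is reversed is precisely the symmetry the paper relies on.
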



Our condition for the non existence of an arithmetic model for symbolic model $\rho = \alpha' s (\beta' s)^\omega$ (with $|\alpha's\beta's| = k+1$) if formalized by Formula \eqref{our-nonex-C} below; it captures
Property \ref{property-C} and takes advantage of the previous Theorems \ref{thm:LF} and \ref{thm:LB}.

\begin{equation}
\label{our-nonex-C}
\begin{gathered}
\exists p_1 p_2 p_1' p_2'\left(
\begin{gathered}
p_1 \equiv p_2 \ \wedge \ p_1' \equiv p'_2 \ \wedge
\\
\sv(p_1)=\sv(p'_1)=|\alpha'| \ \wedge \ \sv(p_2)=\sv(p'_2)=k \ \wedge
\\
p_1 \fwd p_2 \ \wedge \ \ p'_1 \bwd p'_2    \ \wedge \ 
(p_1\fwds p_2 \ \vee \ \ p'_1\bwds p'_2)  \ \wedge \\
\ (p_1 \locfwds p_1' \vee {p_1'} \locbwds{p_1})
\end{gathered}
\right).
\end{gathered}
\end{equation}

In Formula \eqref{our-nonex-C} four conditions are defined, similar to those of Property \ref{property-C}.
Informally, Formula \eqref{our-nonex-C} says that:
\begin{enumerate}
\item there is an infinite forward path $f$ from $p_1$ (this derives from the fact that $p_1 \fwd p_2$, with $p_1 \equiv p_2$, $\sv(p_1)=|\alpha'|$, and $\sv(p_2) = k$);
\item there is an infinite backward path $b$ from $p_1'$ (from $p_1' \bwd p_2'$, with $p_1' \equiv p_2'$, where $\sv(p_1')=|\alpha'|$, and $\sv(p_2') = k$);
\item at least one between $f$ and $b$ is strict;
\item between $p_1$ and $p_1'$ there is an edge labeled with $<$.
\end{enumerate}
In particular, condition $4$ of Property \ref{property-C} is different from condition $4$ of Formula \eqref{our-nonex-C}.
In fact, the former one states that for each $i, j \in \N$, given a forward path $d$ and a backward path $e$,
whenever $d(i)$ and $e(j)$ belong to the same symbolic valuation (i.e., $|i-j| < -\lfloor\phi\rfloor + \lceil\phi\rceil +1$) there
is an edge labeled by $<$ from $d(i)$ to $e(j)$.
In other words, this means that point $p_d$ representing $d(i)$ and
point $p_e$ representing $e(j)$ are such that either $p_d \locfwds p_e$ or $p_e\locbwds p_d$.
The next theorem shows that the conditions are nevertheless equivalent when $\rho=\alpha\beta^\omega$.
In fact, whereas Property \ref{property-C} is defined for a general
$G_\rho$, Formula \eqref{our-nonex-C} is tailored to the finite representation
of ultimately periodic symbolic models $\rho=\alpha\beta^\omega$.

\begin{theorem}\label{thm:cond_equiv}
Over ultimately periodic symbolic models of the form $\alpha' s(\beta' s)^\omega$, with $\alpha,\beta\in SV(\phi)^*$ and $s \in SV(\phi)$, Property \ref{property-C} is equivalent to Formula \eqref{our-nonex-C}.
\end{theorem}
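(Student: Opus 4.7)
The statement equates the existence of the witnesses $u,v$ satisfying the four conditions inside Property~\ref{property-C} (the ``bad'' configuration, since Property~\ref{property-C} is written as the non-existence of such witnesses) with the existence of the four points of Formula~\eqref{our-nonex-C}; both characterise the non-existence of an arithmetic model for $\rho = \alpha' s (\beta' s)^\omega$. I would prove the two implications separately, translating point data into path data and vice versa by means of Theorems~\ref{thm:LF} and~\ref{thm:LB} together with the periodicity of $\beta^\omega$.

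For ``Formula~\eqref{our-nonex-C} $\Rightarrow$ the bad case of Property~\ref{property-C}'', take $u := p_1$ and $v := p_1'$, which both sit in $\rho(|\alpha'|)$. Using $p_1 \equiv p_2$, the finite local path witnessing $p_1 \fwd p_2$ can be iterated infinitely, each copy shifted by $|\beta|$ positions (sound because $\beta^\omega$ is periodic), yielding an infinite forward path $d$ rooted at $u$; symmetrically $e$ is built from $v$ via $p_1' \bwd p_2'$. At least one of $d,e$ is strict because of $p_1 \fwds p_2 \vee p_1' \bwds p_2'$. Condition~4 is the delicate part: every coincidence of $d(i)$ and $e(j)$ inside a common symbolic valuation matches, up to a shift by some multiple of $|\beta|$ (and hence, by periodicity, with the same symbolic constraints), a pair of points lying in $\rho(|\alpha'|)$; the assumption $p_1 \locfwds p_1' \vee p_1' \locbwds p_1$ supplies the $<$-edge for the representative of each periodic family of meetings, and periodicity of $\rho$ extends it to every such meeting.

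For the reverse implication, the infinite forward path $d$ starting at $u$ allows the application of Theorem~\ref{thm:LF} to extract $p_1,p_2$ with $\sv(p_1) = |\alpha'|$, $\sv(p_2) = k$, $p_1 \equiv p_2$, $p_1 \fwd p_2$ (and $p_1 \fwds p_2$ when $d$ is strict), and analogously Theorem~\ref{thm:LB} supplies $p_1',p_2'$ from $e$. The strictness hypothesis carries over, yielding $p_1 \fwds p_2 \vee p_1' \bwds p_2'$. To produce the required $<$-edge, observe that the construction in Lemma~\ref{lem:exofperfwdpath} makes $p_1$ $\equiv$-equivalent to some $d(i_0)$ on the original path and $p_1'$ $\equiv$-equivalent to some $e(j_0)$; choosing $i_0,j_0$ in a common iteration in which the two points land inside a single symbolic valuation (always possible, since both paths cross $s$ infinitely often) and applying condition~4 gives a $<$-edge that, by periodicity of $\rho$, lifts to an edge between $p_1$ and $p_1'$ at $\sv = |\alpha'|$, i.e.\ either $p_1 \locfwds p_1'$ or $p_1' \locbwds p_1$ depending on the relative shifts.

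The main obstacle is the forward direction: one must argue that the single local $<$-edge between $p_1$ and $p_1'$ really propagates to every meeting of the constructed $d$ and $e$, rather than only to their ``base'' meeting in $\rho(|\alpha'|)$. The cleanest argument partitions meetings of $d$ and $e$ into periodic families indexed by the residue modulo $|\beta|$ of their relative offset, and checks that each family has a unique representative inside $\rho(|\alpha'|)$ whose $<$-edge is either directly given by hypothesis or forced by the local monotonicity of the finite paths $p_1 \fwd p_2$ and $p_1' \bwd p_2'$ combined with the local window bound $-\lfloor\phi\rfloor + \lceil\phi\rceil + 1$ from Condition~1 of Definition~\ref{def:lf_lv_peq}; periodicity of $\rho$ then transfers the $<$-edge to every member of the family.
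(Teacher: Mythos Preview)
Your overall decomposition matches the paper's: both directions go through Theorems~\ref{thm:LF} and~\ref{thm:LB}, and the reverse direction (Property~\ref{property-C}'s bad case $\Rightarrow$ Formula~\eqref{our-nonex-C}) is handled the same way, including the appeal to the construction inside Lemma~\ref{lem:exofperfwdpath} to see that $p_1,p_1'$ are $\equiv$-equivalent to points on the original paths.

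For the forward direction, your treatment of condition~4 is more complicated than necessary and contains a confusion. You claim that every meeting of $d$ and $e$ reduces, modulo a shift by a multiple of $|\beta|$, to a pair of points in $\rho(|\alpha'|)$; but a meeting at position $|\alpha'| + m$ with $0 < m < |\beta|$ has no $\equiv$-representative in the single symbolic valuation $\rho(|\alpha'|)$---its periodic family consists of meetings at positions $|\alpha'| + m + n|\beta|$, none of which is $|\alpha'|$. What you actually need is a representative in the first \emph{iteration} $\alpha' s\beta' s$, not in the single valuation $\rho(|\alpha'|)$, and for that representative you still owe an argument producing the $<$-edge.

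The paper avoids the family decomposition entirely. It first observes that $\equiv$ is a congruence for $\locfwds$ and $\locbwds$, so from $p_1 \locfwds p_1'$ (or $p_1' \locbwds p_1$) one immediately gets $p_2 \locfwds p_2'$ (or $p_2' \locbwds p_2$). Then, for any meeting $(u,v)$ with $\sv(u)=\sv(v)$ lying on the constructed paths within $\alpha' s\beta' s$, one has $u \fwd p_2$ and $v \bwd p_2'$; chaining $u \fwd p_2 \locfwds p_2'$ with $v \bwd p_2'$ forces the direct $<$-edge between $u$ and $v$, since the symbolic valuation containing both is maximally consistent and cannot record anything but $<$. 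Periodicity then handles meetings in later iterations. Your ``local monotonicity'' remark is pointing toward this chain argument, but the clean step you are missing is the transfer of the $<$-edge from $(p_1,p_1')$ to $(p_2,p_2')$ via the congruence, which is precisely what makes the chaining through $p_2,p_2'$ work for \emph{every} meeting in the first iteration rather than only the one at $|\alpha'|$.
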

\begin{proof}
Let $\rho=\alpha' s(\beta' s)^\omega$ be an infinite symbolic model and assume that Formula \eqref{our-nonex-C} holds in $\alpha' s \beta' s$.
Therefore, by Theorems \ref{thm:LF} and \ref{thm:LB}, there exists a pair of points $p_1$ and $p_1'$, such that $\sv(p_1)=\sv(p_1')=|\alpha'|$, visited respectively by an infinite
forward path and an infinite backward path, where at least one of the two is strict (because $p_1\fwds p_2 \vee p'_1\bwds p'_2$ holds).
Since $p_1 \locfwds p_1' \vee p_1' \locbwds p_1$ holds, and $\equiv$ is a congruence for $\locfwds, \locbwds$, then also $p_2 \locfwds p_2'$ or $p_2' \locbwds p_2$ hold.
Now, consider any two points $u$ and $v$ in $\alpha' s \beta' s$, such that $\sv(u)=\sv(v)$ and $u$ (resp. $v$) belongs to the infinite strict forward (resp. backward) path from $p_1$ (resp. $p_1'$).
Then, it is $u \fwd p_2$, $v \bwd p_2'$, and $p_2 \locfwds p_2'$ or $p_2' \locbwds p_2$.
Hence, it is also $u \locfwds v$ or $v \locbwds u$, i.e., between $u$ and $v$ there is an edge labeled with $<$.

Conversely, assume Property \ref{property-C} holds; then, by Theorems \ref{thm:LF} and \ref{thm:LB} there are points $p_1,p_1',p_2,p_2'$ such that $\sv(p_1)=\sv(p_1')=|\alpha'|$, $\sv(p_2)=\sv(p_2')=k$, $p_1 \equiv p_2$, $p_1' \equiv p_2'$, $p_1 \fwd p_2$, $p_1' \bwd p_2'$, and $p_1 \fwd p_2 \vee p_1' \bwd p_2'$ hold.
From the proof of Theorem \ref{thm:LF}, point $p_1$ is equivalent to some point in the original forward path; similarly for point $p_1'$.
Then, since $p_1$ and $p_1'$ belong to the same symbolic valuation, by condition 4 of Property \ref{property-C}, they are connected through an edge labeled with $<$, i.e., $p_1 \locfwds p_1'$ or $p_1' \locbwds p_1$ hold.
\end{proof}

The next theorem extends Proposition
\ref{proposition-BSP-to-SAT-I} to constraint system IPC$^*$, which does
not benefit from the completion property.

\begin{proposition}\label{proposition-BSP-to-SAT-II}
Let $\phi\in$ CLTLB$(\D)$ and $\D$ be IPC$^*$.
Formula $\phi$ is $k$-satisfiable and Formula (\ref{our-nonex-C})
does \emph{not} hold if, and only if, thre exists a model $\sigma$ such that $\sigma,0 \models \phi$.
\end{proposition}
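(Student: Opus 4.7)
The plan is to combine the existing automata-theoretic characterization of satisfiability for CLTL($\D$) with constraint system IPC$^*$ (Proposition~\ref{prop-sat-CLTL}, Lemma~\ref{lemma-condC}) with the bounded representation of ultimately periodic symbolic models established in Sections~\ref{section-bsp} and~\ref{section-correctness}, pivoting through the equivalence between Property~\ref{property-C} and Formula~\eqref{our-nonex-C} proved in Theorem~\ref{thm:cond_equiv}. Throughout, I will use Property~\ref{theorem-removeY} (via $\shiftleft$) to reduce to the $\aY$-free fragment on which the results of Section~\ref{section-symbolic} are stated.

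For the direction ``$\phi$ is $k$-satisfiable and Formula~\eqref{our-nonex-C} does not hold $\Rightarrow$ $\phi$ is satisfiable'', I would first unfold the definition of $k$-satisfiability to obtain an ultimately periodic symbolic model $\rho=\alpha\beta^\omega$ with $|\alpha\beta|=k+1$ (in the form $\alpha's(\beta's)^\omega$) and a $k$-bounded $\sigma_k$ with $\sigma_k,0\models_k\alpha\beta$. Lemma~\ref{lemma-B} ensures that $\rho$ is locally consistent, and Theorem~\ref{theorem-eq-BPS-run} shows that $\shiftleft(\rho)$ is an accepting ultimately periodic run of $\A_s\times\A_\ell$ for $\shiftleft(\phi)$. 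Since Formula~\eqref{our-nonex-C} does not hold on $\rho$, Theorem~\ref{thm:cond_equiv} yields that $\shiftleft(\rho)$ satisfies the non-existence condition captured by Property~\ref{property-C}. Lemma~\ref{lemma-condC} then delivers an arithmetic model $\sigma'$ with $\sigma',0\models \shiftleft(\rho)$, and Proposition~\ref{prop-arith-model} upgrades this to $\sigma',0\models \shiftleft(\phi)$. Finally, Proposition~\ref{prop-removeY} (together with a translation by $\lfloor\phi\rfloor$ positions, so as to undo the shift performed by $\shiftleft$) produces a $\sigma$ with $\sigma,0\models\phi$.

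For the converse, assuming $\sigma,0\models\phi$, I would apply Proposition~\ref{prop-removeY} to obtain an arithmetic model of $\shiftleft(\phi)$ and invoke Lemma~\ref{lemma-A} pointwise to induce a unique locally consistent symbolic model $\rho$ such that $\sigma\models\rho$ and $\rho,0\symodels\shiftleft(\phi)$ (via Proposition~\ref{prop-arith-model}). By Proposition~\ref{prop-sat-CLTL}, $\rho\in\Lng(\A_{\shiftleft(\phi)})=\Lng(\A_s\times\A_\ell\times\A_C)$; since this language is non-empty and $\A_{\shiftleft(\phi)}$ is a B\"uchi automaton, the standard ultimate-periodicity argument supplies an ultimately periodic accepting word $\rho'=\alpha\beta^\omega$ with $|\alpha\beta|=k+1$ for some $k\in\Nat$, which one may further assume to be of the form $\alpha's(\beta's)^\omega$. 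By Lemma~\ref{lemma-condC} the periodic $\rho'$ admits an arithmetic model, hence satisfies Property~\ref{property-C}; Theorem~\ref{thm:cond_equiv} then shows Formula~\eqref{our-nonex-C} does not hold on $\rho'$. Truncating any arithmetic model of $\rho'$ to positions $\lfloor\phi\rfloor,\ldots,k+\lceil\phi\rceil$ yields the $k$-bounded $\sigma_k$ required for $k$-satisfiability, after transferring back from $\shiftleft(\phi)$ to $\phi$ via Corollary~\ref{corollary-shift-models}.

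The main obstacle will be the bookkeeping of the $\shiftleft$ translation: one must verify that Formula~\eqref{our-nonex-C} failing on $\rho$ is equivalent to its failing on $\shiftleft(\rho)$ (which follows from the invariance of local forward/backward relations between points under $\shiftleft$), and that the existence of an arithmetic model transfers coherently across the shift. A related subtlety is that Lemma~\ref{lemma-condC} is stated only for $\aY$-free formulae, so every invocation of the Property~\ref{property-C}/Formula~\eqref{our-nonex-C} pairing must be routed through the $\shiftleft$-transformed side; once this plumbing is in place, the remaining steps are direct applications of the results already proved.
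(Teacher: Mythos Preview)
Your proposal is correct and follows essentially the same route as the paper's own proof: both directions hinge on Theorem~\ref{thm:cond_equiv} to pass between Formula~\eqref{our-nonex-C} and Property~\ref{property-C}, on Lemma~\ref{lemma-condC} to obtain (or certify) an arithmetic model, and on the ultimate-periodicity of B\"uchi acceptance for the converse, with the $\shiftleft$ translation mediating between CLTLB and the $\aY$-free fragment. The paper's argument is terser---in the forward direction it routes through $|\phi|_k$ via Theorems~\ref{theorem-eq-encod-run}--\ref{theorem-eq-BPS-run} rather than working straight from the definition of $k$-satisfiability, and in the converse it simply defers to the proof of Proposition~\ref{proposition-BSP-to-SAT-I}---but the substance is the same, and your more explicit handling of the $\shiftleft$ bookkeeping is a reasonable precaution.
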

\begin{proof}
By Theorems \ref{theorem-eq-encod-run}, \ref{theorem-eq-enc-BSP}, and \ref{theorem-eq-BPS-run},
 $\phi$ is $k$-satisfiable if, and only if, formula $|\phi|_k$ is satisfiable;
in addition, when formula $|\phi|_k$ is satisfiable, it induces a model $\sigma_k$ and a sequence
$\alpha\beta$ of symbolic valuations of length $k$ representing an infinite sequence $\rho=\alpha\beta^\omega$ of symbolic valuations such that $\rho\symodels\phi$.
Since Formula (\ref{our-nonex-C}) does not hold, then by Theorem \ref{thm:cond_equiv} Property \ref{property-C} does not hold, hence, by Lemma \ref{lemma-condC}, $\rho$ admits a model $\sigma$ such that $\sigma,0\models\phi$.

Conversely, if formula $\phi$ is satisfiable, then automaton $\A_\phi$
recognizes models which satisfy condition $C$. Then,
a symbolic model $\alpha\beta^\omega\in \Lng(\A_\phi)$ and a model $\sigma_k,0\models_k\alpha\beta$ can be obtained as in the proof of
Proposition \ref{proposition-BSP-to-SAT-I}.
\end{proof}

\subsubsection*{Bounded Encoding of Formula \eqref{our-nonex-C}}
The encoding shown afterwards represents, by means of a finite representation, infinite -- strict and non strict -- paths over infinite symbolic models.
As before, we consider models $\rho = \alpha \beta^\omega$ where $\alpha = \alpha' s$ and $\beta = \beta' s$, and we consider the finite sequence of symbolic valuations $\alpha' s \beta' s$, of length $k+1$.
We indicate by $P_{\alpha\beta} \subset P_\rho$ the set of points of finite path $\alpha' s\beta' s$ (for all $p \in P_{\alpha\beta}$, it is $\sv(p) \in \interval{0}{k}$).
We use the points of $P_{\alpha\beta}$ to capture properties of $P_\rho$.
To encode the previous formulae into QF-EU$\D$ formulae, where
$\D$ is a suitable constraint system embedding $\Nat$ and having the
successor function plus order $<$, we rearrange the formulae above by
splitting information, which is now encapsulated in the notion of point, on variables and positions over the model.
Predicate $\Predlocfwds_{x,y}:\Nat^3\rightarrow \set{true,false}$ for all pairs $x,y \in V
\cup \const(\phi)$ (resp. $\Predlocfwd_{x,y}$) encodes relation $p_1\locfwds p_2$ (resp. $p_1\locfwd p_2$)
where $p_1=\triple{x}{j}{h}$ and $p_2=\triple{y}{j}{m}$.
\setlength{\extrarowheight}{1.5pt}
\[
\begin{array}{c|c}
0 \leq j \leq k \text{ and }h \leq m & 0 \leq j \leq k \text{ and }h > m\\
\hline
\Predlocfwds_{x,y}(j,h,m) \Leftrightarrow \sigma_k(j+h,x) < \sigma_k(j+m,y) & \neg\Predlocfwds_{x,y}(j,h,m)\\
\Predlocfwd_{x,y}(j,h,m) \Leftrightarrow \sigma_k(j+h,x) \leq \sigma_k(j+m,y) & \neg\Predlocfwd_{x,y}(j,h,m)
\end{array}
\]
for all $h,m \in \interval{\lfloor\phi\rfloor}{\lceil\phi\rceil}$.
The value of $\sigma_k(0+h,x)$ equals the value of
term $\alpha=\aY^{|h|} x$, for $h \in \interval{\lfloor\phi\rfloor}{-1}$, or
of term $\alpha=\aX^h x$, for $h \in \interval{0}{\lceil\phi\rceil}$.
For example, $\sigma_k(0+h,x)$ is $\bm{\alpha}(0)$, and $\sigma_k(k+h,x)$ is $\bm{\alpha}(k)$ (see
$|ArithConstraints|_k$ in Section \ref{section-encoding}).
Constants are implicitly included in the model.
For instance, if $5
\in \const(\phi)$ and $x \in V$  we have the following formulae
$\Predlocfwds_{x,5}(j,h,m) \iFF \sigma_k(j+h,x) < 5$
and $\Predlocfwds_{5,x}(j,h,m) \iFF 5 < \sigma_k(j+m,x) $.
When $x,y \in \const(\phi)$ then $\Predlocfwds_{x,y} \iFF x < y$ and $\Predlocfwd_{x,y}
\iFF x \leq y$ for all $0 \leq j \leq k$ and $h \leq m$; $\neg\Predlocfwds_{x,y}$ and $\neg\Predlocfwds_{x,y}$ for all $0 \leq j \leq k$ and $h > m$.

Relation $\fwds$ (resp. $\fwd$) is encoded by uninterpreted predicates $\Predfwds_{x,y}:\Nat^4
\rightarrow \set{true,false}$ (resp. $\Predfwd_{x,y}:\Nat^4
\rightarrow \set{true,false}$) for all pairs of variables $x,y \in
V\cup \const(\phi)$.
To build in practice $\fwds$ (resp., $\fwd$) through $\Predfwds$ (resp. $\Predfwd$), over points of the symbolic model $\alpha' s\beta' s$, we construct the transitive closure of $\Predfwds$ (resp. $\Predfwd$) explicitly.
Starting from $\rho(0)$, we propagate the information about relations $\locfwds$ and $\locfwd$  that are represented by $\Predlocfwds$ and $\Predlocfwd$ among
all points representing variables of model $\rho$.
In fact, it is immediate to show that $p_1 \fwds p_2$ holds if, and only if, there is a point $p$ such that either $p_1 \locfwds p$ and $p \fwd p_2$ or $p_1 \locfwd p$ and $p \fwds p_2$ (note that $p$ cannot be locally equivalent to both $p_1$ and $p_2$, but it can be locally equivalent to one of them).
Similarly for the other relations.
Figure \ref{figure-example-definition-F} provides a graphical representation for $\fwds$.
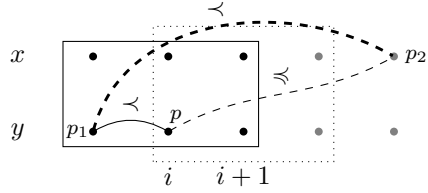
\begin{figure}[ht!]
\centering
\begin{tikzpicture}[
  dot/.style={circle,fill=black,minimum size=3pt,inner sep=0pt,
            outer sep=-1pt},
  sv/.style={rectangle,
                                    dashed,
                                    minimum size=1cm},
  ]
  \node[dot] at (1,1) (A) {};
  \node[dot] at (1,2) (B) {};
  \node[dot] at (2,1) (C) {};
  \node[dot] at (2,2) (D) {};
  \node[dot] at (3,1) (E) {};
  \node[dot] at (3,2) (F) {};
  \node[dot, color=gray] at (4,1) (G) {};
  \node[dot, color=gray] at (4,2) (H) {};
  \node[dot, color=gray] at (5,1) (I) {};
  \node[dot, color=gray] at (5,2) (L) {};
  \node at (5.3,2) (M) {\footnotesize $p_2$};
  \node at (0.8,1) (M) {\footnotesize $p_1$};
  \node at (2.1,1.2) (M) {\footnotesize $p$};
  \draw (0.6,0.8) rectangle  (3.2,2.2);

  \draw[dotted] (1.8,0.6) rectangle  (4.2,2.4);

  \node at (2,0.4) (tag3) {$i$};
  \node at (3,0.4) (tag4) {$i+1$};

  \node at (0,1) {$y$};
  \node at (0,2) {$x$};

  \path[-, > = stealth, out=75,in=150, dashed, line width=1.1pt] (A) edge node[above]{\small$\prec$} (L);
  \path[-, > = stealth, out=30,in=150] (A) edge node[above]{\small$\prec$}(C);
  \path[-, > = stealth, out=30,in=210, dashed] (C) edge node[above]{\small$\preccurlyeq$}(L);

\end{tikzpicture}
\caption{Adjacent symbolic valuations $\rho(i)$ (solid line) and $\rho(i+1)$ (dotted line) not covering both points $p_1 = (y,i,-1)$ and $p_2 = (x,j,h)$ (with $j>i$ and $-1 \leq h \leq 1$) of the model, with $p_1 \locfwds p$, $p \fwd p_2$ and $p_1 \fwds p_2$.}
\label{figure-example-definition-F}
\end{figure}
Formulae defining $\Predfwds_{x,y}$ and $\Predfwd_{x,y}$ are the following:
\begin{align}
\label{eq:FLTdef}
\Predfwds_{x,y}(j,h,i,m) & \Leftrightarrow
\left\lbrace
\begin{gathered}
\bigvee_{z \in V}
\bigvee_{u=\lfloor\phi\rfloor}^{\lceil\phi\rceil} \Predlocfwds_{x,z}(j,h,u) \wedge
\Predfwd_{z,y}(j,u,i,m) \vee \\
\bigvee_{z \in V}
\bigvee_{u=\lfloor\phi\rfloor}^{\lceil\phi\rceil} \Predlocfwd_{x,z}(j,h,u) \wedge \Predfwds_{z,y}(j,u,i,m)
\end{gathered}\right. \\
\label{eq:FLEQdef}
\Predfwd_{x,y}(j,h,i,m) & \Leftrightarrow
\bigvee_{z \in V}
\bigvee_{u=\lfloor\phi\rfloor}^{\lceil\phi\rceil}  \Predlocfwd_{x,z}(j,h,u) \wedge
\Predfwd_{z,y}(j,u,i,m)
\end{align}
for all $j,i
\in \interval{0}{k}$ with $j < i$ and for all $h,m \in
\interval{\lfloor\phi\rfloor}{\lceil\phi\rceil}$ such that 
$i+m - (j+h) > -\lfloor\phi\rfloor+\lceil\phi\rceil$, $h =
\lfloor\phi\rfloor$, $(x=z) \Rightarrow (h\neq u)$
and for all pairs $x,y \in V \cup \const(\phi)$.
When $j = i \in \interval{0}{k}$ and $h \leq m$, with $h,m \in
\interval{\lfloor\phi\rfloor}{\lceil\phi\rceil}$:
\[
\begin{aligned}
\Predfwds_{x,y}(j,h,j,m) &\Leftrightarrow \Predlocfwds_{x,y}(j,h,m) \\
\Predfwd_{x,y}(j,h,j,m) &\Leftrightarrow \Predlocfwd_{x,y}(j,h,m)
\end{aligned}
\]
When $j+h > i+m$:
\[
\begin{aligned}
\neg\Predfwds_{x,y}(j,h,i,m) \\
\neg\Predfwd_{x,y}(j,h,i,m)
\end{aligned}
\]
Figure \ref{figure-definition-F} shows how predicate $\Predfwds_{x,x}(i,0,j,1)$ is defined as
conjunction of local relation $\Predlocfwds_{x,y}(i,0,1)$ and of $\Predfwd_{y,x}(i,1,j,1)$.
\begin{figure}[ht!]
\centering
\begin{tikzpicture}[
  dot/.style={circle,fill=black,minimum size=3pt,inner sep=0pt,
            outer sep=-1pt},
  sv/.style={rectangle,
                                    dashed,
                                    minimum size=1cm},
  ]
  \node at (1.5,2.4) {$i$};
  \node[dot] at (1,1) (A) {};
  \node[dot] at (1,2) (B) {};
  \node[dot] at (2,1) (C) {};
  \node[dot] at (2,2) (D) {};

  \draw (0.8,0.8) rectangle  (2.2,2.2);

  \node at (6.5,2.4) {$j$};
  \node[dot] at (6,1) (E) {};
  \node[dot] at (6,2) (F) {};
  \node[dot] at (7,1) (G) {};
  \node[dot] at (7,2) (H) {};

  \draw (5.8,0.8) rectangle  (7.2,2.2);

  \path[-, > = stealth', in=110,out=-10] (B) edge (C);
  \path[-, > = stealth', in=210,out=30, dashed] (C) edge (H);
  \path[-, > = stealth', in=170,out=30, dashed, line width=1.5pt] (B) edge (H);

  \node at (2.7,1.7) {$\Predlocfwds_{x,y}(i,0,1)$};
  \node at (0,1) {$y$};
  \node at (0,2) {$x$};

  \node at (4.4,1.2) {$\Predfwd_{y,x}(i,1,j,1)$};
  \node at (5,2.8) {$\Predfwds_{x,x}(i,0,j,1)$};
\end{tikzpicture}
\caption{Definition of $\Predfwds$ by local relations $\Predlocfwds$.}
\label{figure-definition-F}
\end{figure}

The following formula $|CongruenceConstraints|_k$ defines congruence classes of locally equivalent points for relations $\fwds, \fwd$ captured by predicates $\Predfwds$ and $\Predfwd$.
In fact, observe that, since from $p_1 \locfwd p_2$ we obtain $p'_1 \locfwd p'_2$, for all $p_1'$ (resp. $p_2'$) that is locally equivalent to $p_1$  (resp. $p_2'$), then, in general, the congruence extends to $\fwd$; i.e., from $p_1 \fwd p_2$ we obtain $p'_1 \fwd p'_2$ for all $p_1',p_2'$ locally equivalent to $p_1,p_2$.
An analogous argument holds for $\fwds$, $\bwd$ and $\bwds$.
\[
\begin{array}{c|c|c}
i \in\interval{1}{k}& m \in\interval{\lfloor\phi\rfloor}{\lceil\phi\rceil} & j \\
\hline
\Predfwds_{x,y}(j,h,i,m) \Leftrightarrow \Predfwds_{x,y}(j+1,h-1,i,m) & h\in
\interval{\lfloor\phi\rfloor+1}{\lceil\phi\rceil} &
\interval{0}{i-1}\\
\Predfwds_{x,y}(j,h,i,m) \Leftrightarrow \Predfwds_{x,y}(j-1,h+1,i,m) & h\in
\interval{\lfloor\phi\rfloor}{\lceil\phi\rceil-1} & \interval{1}{i}
\end{array}
 \]
\[
\begin{array}{c|c|c}
j\in\interval{0}{k-1} & h\in
\interval{\lfloor\phi\rfloor}{\lceil\phi\rceil} & i\\
\hline
\Predfwds_{x,y}(j,h,i,m) \Leftrightarrow \Predfwds_{x,y}(j,h,i+1,m-1) & m \in
\interval{\lfloor\phi\rfloor+1}{\lceil\phi\rceil} & i \in \interval{j}{k-1}\\
\Predfwds_{x,y}(j,h,i,m) \Leftrightarrow \Predfwds_{x,y}(j,h,i-1,m+1) & m\in
\interval{\lfloor\phi\rfloor}{\lceil\phi\rceil-1} & i \in \interval{j+1}{k}.
\end{array}
\]

Predicates $\Predlocbwds_{x,y}, \Predlocbwd_{x,y}$ for local backward paths $\locbwds, \locbwd$, predicates $\Predbwds_{x,y}, \Predbwd_{x,y}$ for backward paths $\bwds, \bwd$ and congruence among points are defined similarly.
For brevity, we only show the definition of $\Predlocbwds_{x,y}$ and $\Predlocbwd_{x,y}$, the others are straightforward.
\[
\begin{array}{c|c}
0 \leq j \leq k \text{ and }h \leq m & 0 \leq j \leq k \text{ and }h > m\\
\hline
\Predlocbwds_{x,y}(j,h,m) \Leftrightarrow \sigma_k(j+h,x) > \sigma_k(j+m,y) & \neg\Predlocbwds_{x,y}(j,h,m)\\
\Predlocbwd_{x,y}(j,h,m) \Leftrightarrow \sigma_k(j+h,x) \geq \sigma_k(j+m,y) & \neg\Predlocbwd_{x,y}(j,h,m)
\end{array}
\]
for all $h,m \in \interval{\lfloor\phi\rfloor}{\lceil\phi\rceil}$.
When both $x,y \in const(\phi)$ then $\Predlocbwds_{x,y}(j,h,m) \iFF x > y$ and $\Predlocbwd_{x,y}(j,h,m)
\iFF x \geq y$ for all $0 \leq j \leq k$ and $h \leq m$; $\neg \Predlocbwds_{x,y}(j,h,m)$ and $\neg \Predlocbwd_{x,y}(j,h,m)$ for all $0 \leq j \leq k$ and $h > m$.



Finally, the condition of existence defined by Formula \eqref{our-nonex-C} is encoded by the following QF-EU$\D$
formula.
The condition is parametric with respect to a pair of variables $x,x' \in V \cup \const(\phi)$.
The condition is meaningful only if $x \neq x'$ and if either $x \notin \const(\phi)$ or $x'\notin
\const(\phi)$.
In fact, a constant value never generates a strict (forward or
backward) path; therefore, two constants can not satisfy the condition
of non-existence of an arithmetical model.
Formula $C_{x,x'}$ below captures the existence in $\rho(|\alpha'|)$ of a strict relation $<$
between two points, one of a forward and one of backward path, which involve variables $x$ and $x'$.
Variable $\bm{loop}$ has already been introduced in Section \ref{qf-euf-encoding} and defines the position where, in $\alpha\beta$, suffix $\beta$ starts (as already explained $|\alpha'| = \bm{loop}-1$).
\[
C_{x,x'} := \bigvee_{h,h' \in \interval{\lfloor\phi\rfloor}{\lceil\phi\rceil} }
\left(
\begin{gathered}
\left(
  \begin{gathered}
  \Predfwd_{x,x}(\bm{loop}-1,h,k,h) \wedge
  \Predbwds_{x',x'}(\bm{loop}-1,h',k,h')
  \\ \vee \\
  \Predfwds_{x,x}(\bm{loop}-1,h,k,h) \wedge
  \Predbwd_{x',x'}(\bm{loop}-1,h',k,h')
  \end{gathered}
\right)
\\ \wedge \\
\Predlocfwds_{x,x'}(\bm{loop}-1,h,h') \vee \Predlocbwds_{x',x}(\bm{loop}-1,h',h)   \\
\end{gathered}
\right).
\]

In Formula $C_{x,x'}$, we use explicitly points that were symbolically represented in Formula \eqref{our-nonex-C}: $p_1 = (x,loop-1,h)$, $p'_1 = (x',loop-1,h')$, $p_2 = (x,k,h)$, $p'_2 =(x',k,h')$.
It is immediate to see that formula $\Predlocfwds_{x,x'}(\bm{loop}-1,h,h') \vee \Predlocbwds_{x',x}(\bm{loop}-1,h',h)$ encodes $p_1 \locfwds p_1' \vee p_1' \locbwds p_1$ of Formula \eqref{our-nonex-C} and formula $\Predfwd_{x,x}(\bm{loop}-1,h,k,h) \wedge \Predbwds_{x',x'}(\bm{loop}-1,h',k,h')$ , encodes $p_1 \fwd p_2 \wedge p_1' \fwd p_2' \wedge p_1 \fwds p_2$ (similarly for formula $\Predfwds_{x,x}(\bm{loop}-1,h,k,h) \wedge \Predbwd_{x',x'}(\bm{loop}-1,h',k,h')$).

%
%
The existence condition of an arithmetical model is captured by
the formula:
\begin{equation}\label{qf-euf-condition-C}
\bigwedge_{\footnotesize \begin{array}{c}x,x' \in V \cup \const(\phi)\\ x\neq x', x \notin \const(\phi) \lor x' \notin \const(\phi)\end{array}} \neg  C_{x,x'}
\end{equation}

Given a CLTLB$(\textrm{IPC}^*)$ formula $\phi$, 
the satisfiability of $\phi$ is reduced to the satisfiability of the following QF-EU($\D$) formula:
\begin{equation}\label{qf-euf-encoding}
|\phi|_k \land \eqref{qf-euf-condition-C}.
\end{equation}
If Formula \eqref{qf-euf-encoding} is unsatisfiable, then either $\phi$ does not admit symbolic models, or none of its symbolic models admits arithmetic models.
Conversely, if Formula \eqref{qf-euf-encoding} is satisfiable, then there is a symbolic model $\rho$ of $\phi$ for which condition \eqref{qf-euf-condition-C} holds, 
hence $\rho$ admits an arithmetic model and $\phi$ is satisfiable.

\subsection{Simplifying the condition of existence of arithmetical models}
\label{sec:simplifications}

In this section, we relax the condition of existence of an arithmetical
model $\sigma$ for sequences of symbolic valuations of CLTLB(IPC$^*$) formulae.
%
%
In fact, Property \ref{property-C} is stronger than necessary in those cases in which not all variables appearing in a formula $\phi$ are compared against each other.
Consider for example the following formula
\begin{equation}
\label{eq:xinc-ydec}
\G(x < \aX x \wedge \neg(y < \aX y))
\end{equation}
which enforces strict increasing monotonicity for variable $x$ and decreasing monotonicity for variable $y$.
Figure \ref{figure-xinc-ydec} shows a symbolic model for Formula \eqref{eq:xinc-ydec} which does not
admit arithmetic model, as it does not satisfy Property \ref{property-C} (in fact, the strict forward path that visits all points $\{\triple{x}{i}{0}\}_{i \in \Nat}$ and the strict backward path that visits all points $\{\triple{y}{i}{0}\}_{i \in \Nat}$ are such that, for all $i$, $\triple{x}{i}{0} \locfwds \triple{y}{i}{0}$).
\begin{figure}[!htb]
\centering
\includegraphics[scale=0.7]{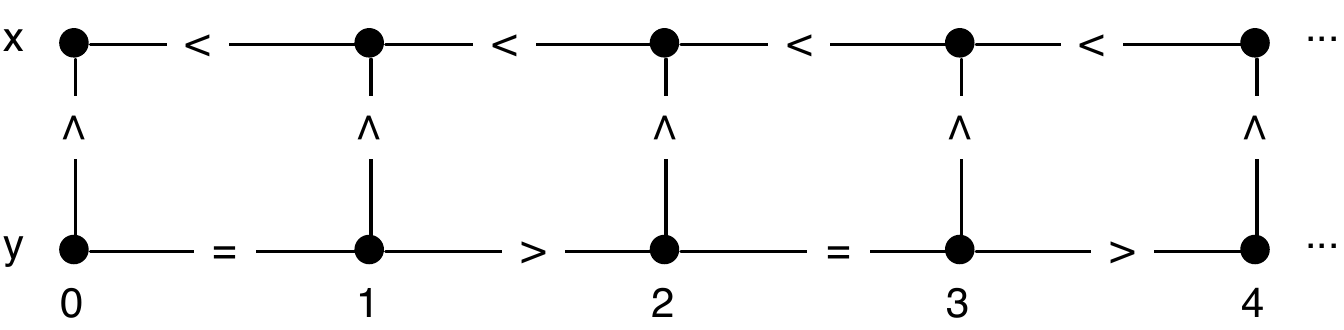}
\caption{A symbolic model for Formula \eqref{eq:xinc-ydec} that does not admit an arithmetical model.}
\label{figure-xinc-ydec}
\end{figure}
%
However, in Formula \eqref{eq:xinc-ydec} $x$ and $y$ are not compared, neither directly, nor indirectly, so
if we disregard the relations between them in the symbolic model of Figure \ref{figure-xinc-ydec},
and produce an assignment of the variables that only respects the relations between variables that are actually compared in
the formula (i.e., $x$ with itself, and $y$ with itself) we obtain an arithmetic model for Formula \eqref{eq:xinc-ydec}.
Figure \ref{figure-xinc-ydec-weak} shows a ``weaker'' version of the symbolic model of Figure \ref{figure-xinc-ydec}, one that
is more concise to encode into QF-EU($\D$) formulae than the maximally consistent one, as it does not contain any comparison between unrelated terms.
\begin{figure}[h!]
\centering
\includegraphics[scale=0.7]{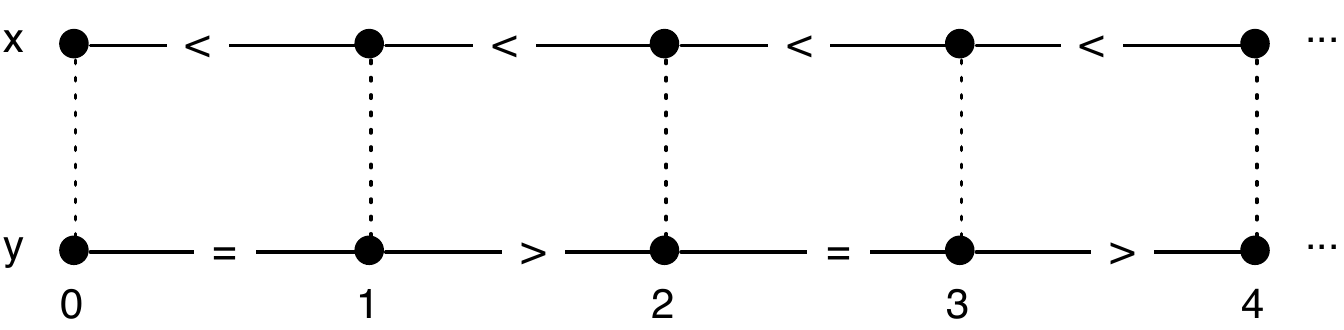}
\caption{A weak symbolic model for Formula \eqref{eq:xinc-ydec}.}
\label{figure-xinc-ydec-weak}
\end{figure}

To characterize sequences of symbolic valuations which do not
take into account relations among variables that are not compared with each other in
a formula $\phi$, we first remark that $\phi$ induces a finite partition
$\set{V_1,\dots, V_h}$ of set $V$ such that $x,y \in V_i$ if and only if
there is an IPC$^*$ constraint $R(\aX^i x, \aX^j y)$ occurring in $\phi$,
for some $i,j \in \Zed$ (where we write $\aX^{-n}$, with $n > 0$, instead of $\aY^n$).
Then, we introduce the notions of \emph{weak symbolic valuation} and
of \emph{sequence of weak symbolic valuations}.

\begin{definition}
\label{def:weaksv}
Given a symbolic valuation $sv \in SV(\phi)$, its \emph{weak}
version $\weak{sv}$ is obtained by removing from $sv$ all relations $R(\aX^i x,
\aX^j y)$ where $x \in V_l$ and $y \in V_t$ with $l\neq t$.
We similarly define the weak version $\weak{\rho}$ of a sequence $\rho$ of symbolic valuations.
\end{definition}

Given a CLTLB(IPC$^*$) formula $\phi$, we indicate with $SV_{w}(\phi)$ the set of all its weak symbolic valuations.
A weak symbolic model $\weak{\rho} \in SV_w(\phi)^\omega$ of $\phi$ is a
sequence of weak symbolic valuations such that $\weak{\rho},0\symodels \phi$.
Given $\rho \in SV(\phi)^\omega$ and its weak version $\weak{\rho}$, $G_{\weak{\rho}}$ is the subgraph of $G_\rho$ ontained by removing all arcs between points $p = \triple{x}{j}{h}$, $p' = \triple{y}{i}{m}$ such that $x \in V_l$, $y \in V_t$, and $l \neq t$.

The next lemma shows that focusing on weak symbolic valuations is enough to determine whether symbolic models for $\phi$ exist or not.

\begin{lemma}\label{lemma-weak-symbolic}
Let $\phi$ be a CLTLB(IPC$^*$) formula.
Given $\rho \in SV(\phi)^\omega$ such that $\rho, 0 \symodels \phi$, it is also $\weak{\rho}, 0 \symodels \phi$.
Conversely, given  a sequence $\nu \in SV_w(\phi)$ of weak symbolic valuations, if $\nu, 0 \symodels \phi$, then for any $\rho \in SV(\phi)$ such that $\weak{\rho} = \nu$ it is also $\rho, 0 \symodels \phi$.
\end{lemma}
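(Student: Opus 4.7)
The proof will go by structural induction on $\phi$, leveraging the fact that $\phi$ only ``sees'' within-class atomic relations, never cross-class ones. The key preliminary observation is the following: by the construction of the partition $\{V_1,\dots,V_h\}$, every atomic subformula $R(\alpha_1,\ldots,\alpha_n)$ occurring in $\phi$ has all its a.t.t.'s built on variables belonging to a single partition class $V_l$; otherwise, by definition of the partition, those variables would have been placed in the same class in the first place. Consequently, Definition~\ref{def:weaksv} never removes $R(\alpha_1,\ldots,\alpha_n)$ (nor its negation) when passing from $sv$ to $\weak{sv}$, so $R\in sv$ iff $R\in\weak{sv}$ for every atomic subformula of $\phi$.

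The base case of the induction then reduces to checking atomic relations. For $R$ an atomic subformula of $\phi$ and position $i$: if $R\in\rho(i)$, then $R\in\weak{\rho}(i)$, and since any $\D$-valuation satisfying $\weak{\rho}(i)$ trivially satisfies $R$ (which is one of its conjuncts), we get $\weak{\rho},i\symodels R$. Conversely, if $\weak{\rho},i\symodels R$, maximality of $\rho(i)$ forces either $R\in\rho(i)$ or $\neg R\in\rho(i)$; the latter, combined with $\neg R\in\weak{\rho}(i)$, would contradict $\weak{\rho}(i)\symodels R$, hence $R\in\rho(i)$ and $\rho,i\symodels R$. Negated atoms are handled symmetrically. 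Exactly the same chain of equivalences establishes the second half of the lemma, since any $\rho\in SV(\phi)^\omega$ with $\weak{\rho}=\nu$ agrees with $\nu$ on every within-class atom.

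The inductive step is routine: the semantics of Boolean connectives and of the temporal operators $\X,\Y,\U,\Snc$ (and their derived duals) is defined purely in terms of satisfaction of subformulae at various positions, so the induction hypothesis transfers the equivalence $\rho,j\symodels\psi \iff \weak{\rho},j\symodels\psi$ from each strict subformula $\psi$ of $\phi$ to $\phi$ itself, uniformly in $j$.

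The one point that requires care is the loss of maximal consistency in $\weak{sv}$: the $\symodels$ relation quantifies over \emph{all} $\D$-valuations satisfying the set, and $\weak{sv}$ admits strictly more such valuations than $sv$, so a priori $\weak{sv}\symodels R$ could fail even when $sv\symodels R$. The argument above sidesteps this difficulty by exploiting the fact that within-class atoms of $\phi$ are \emph{explicitly present} in $\weak{sv}$ (as members, not merely as consequences), which trivializes the universal quantification over valuations in precisely the cases that matter for evaluating $\phi$.
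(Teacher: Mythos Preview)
Your proposal is correct and follows essentially the same approach as the paper: structural induction on $\phi$, with the inductive step dismissed as routine and the base case handled by the observation that any atomic constraint actually occurring in $\phi$ has all its variables in a single partition class $V_l$, hence survives the passage from $sv$ to $\weak{sv}$. Your write-up is in fact more careful than the paper's, which asserts the biconditional ``$\nu,i\symodels R$ iff $R\in\nu(i)$'' without comment; your explicit discussion of why membership (rather than mere entailment) is what matters, and why the weakened set still decides each within-class atom, fills exactly the gap a reader might otherwise stumble on.
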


\begin{proof}
Assume that $\rho \symodels \phi$.
We only need to focus on the base case, as the inductive one is trivial.
For all $i$ and $R(\alpha_1,\alpha_2)$ occurring in $\phi$, $\rho,i\symodels R(\alpha_1,\alpha_2)$ if, and only if, $R(\alpha_1,\alpha_2) \in \rho(i)$.
Since $R(\alpha_1,\alpha_2)$ occurs in $\phi$ then, by Definition \ref{def:weaksv}, it is also $R(\alpha_1,\alpha_2) \in \weak{\rho}(i)$, hence $\weak{\rho},i\symodels R(\alpha_1,\alpha_2)$.

The converse case is similar.
If $\nu \in SV_w(\phi)$ is such that $\nu, 0 \symodels \phi$, then for all $i$ and $R(\alpha_1,\alpha_2)$ that occurs in $\phi$ it is $\nu,i\symodels R(\alpha_1,\alpha_2)$ if, and only if, $R(\alpha_1,\alpha_2) \in \nu(i)$; in addition, for any $\rho$ such that $\weak{\rho} = \nu$ we have $R(\alpha_1,\alpha_2) \in \rho(i)$ if, and only if, $R(\alpha_1,\alpha_2) \in \nu(i)$.
Finally, $\nu,i\symodels R(\alpha_1,\alpha_2)$ implies $\rho,i\symodels R(\alpha_1,\alpha_2)$.
\end{proof}

We have the following variant of Lemma \ref{lemma-condC}, which defines a condition of existence of arithmetical models for symbolic ones that is checked on their weak countersparts.

\begin{lemma}
\label{prop:weak-C}
Let $\phi$ be a CLTLB(IPC$^*$) formula.
Given an ultimately periodic, locally consistent sequence $\rho \in SV(\phi)^\omega$ of symbolic valuations, if there is $\sigma : \Zed \times V \to D$ such that $\sigma, 0 \models \rho$, then Property \ref{property-C} holds for graph $G_{\weak{\rho}}$.
Conversely, if $\nu \in SV_w(\phi)^\omega$ is an ultimately periodic, locally consistent sequence of weak symbolic valuations such that Property \ref{property-C} holds for graph $G_{\nu}$, then there are $\sigma$, $\rho$ such that $\weak{\rho} = \nu$ and $\sigma, 0 \models \rho$.
\end{lemma}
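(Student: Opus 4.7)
The plan is to leverage Lemma \ref{lemma-condC} together with the independence between the partition classes $V_1, \dots, V_h$ induced by $\phi$. The key observation is that $\weak{\cdot}$ removes precisely the arcs between variables that appear in no common relation of $\phi$, and these arcs can only \emph{enrich} $G_\rho$ over $G_{\weak{\rho}}$; the solvability of the ``restricted'' problem then decomposes class by class.

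For the forward direction, I would apply Lemma \ref{lemma-condC} directly to $\rho$: since $\sigma,0\models \rho$ and $\rho$ is ultimately periodic and locally consistent, Property \ref{property-C} holds in $G_\rho$. Now $G_{\weak{\rho}}$ differs from $G_\rho$ only in that arcs between vertices $(x,j),(y,i)$ with $x\in V_l$, $y\in V_t$, $l\neq t$ have been removed; hence $G_{\weak{\rho}}$ is a subgraph of $G_\rho$. Any forward path, backward path, or $<$-labeled arc witnessing a violation of Property \ref{property-C} in $G_{\weak{\rho}}$ would also exist in $G_\rho$, contradicting Property \ref{property-C} there; so Property \ref{property-C} holds in $G_{\weak{\rho}}$.

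For the converse direction, I would solve each class independently and then glue the pieces. For each $l$, define $\nu_l$ by keeping, at every position $j$, only the atomic constraints of $\nu(j)$ that involve variables in $V_l$ alone. Periodicity is inherited from $\nu$; local consistency holds because a relation retained in $\nu_l(j)$ mentions only $V_l$-variables, so its locally consistent successor is also retained in $\nu_l(j+1)$; and $\nu_l$ is maximally consistent with respect to atomic constraints built over $V_l$-variables only. Since $G_{\nu_l}$ is a subgraph of $G_\nu$, Property \ref{property-C} also holds in $G_{\nu_l}$, so Lemma \ref{lemma-condC} yields an arithmetic model $\sigma_l$ with $\sigma_l,0\models \nu_l$.

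I would then stitch the $\sigma_l$ into a single $\sigma$ by setting $\sigma(j,x)=\sigma_l(j,x)$ whenever $x\in V_l$; this is well-defined because $V=V_1\cup\dots\cup V_h$ is a partition. By value-determinedness of IPC$^*$ (Lemma \ref{lemma-A}), $\sigma$ induces a unique sequence $\rho$ of maximally consistent symbolic valuations such that $\sigma,0\models \rho$. To conclude $\weak{\rho}=\nu$, note that the intra-$V_l$ atomic relations in $\rho(j)$ depend only on $\sigma$ restricted to $V_l$, namely on $\sigma_l$, so they coincide with those of $\nu_l(j)$, which by construction are precisely the intra-$V_l$ constraints of $\nu(j)$; taking the union over $l$ yields $\weak{\rho}(j)=\nu(j)$. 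The main obstacle I anticipate is the bookkeeping required to confirm that each $\nu_l$ is a bona fide ultimately periodic, locally consistent, maximally consistent sequence of symbolic valuations over the reduced variable set $V_l$, so that Lemma \ref{lemma-condC} may legitimately be invoked on it; once that is settled, both the subgraph monotonicity and the gluing step follow directly from the definition of $\weak{\cdot}$.
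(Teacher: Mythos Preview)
Your proposal is correct and follows essentially the same approach as the paper's proof: for the forward direction, apply Lemma~\ref{lemma-condC} to $\rho$ and use that $G_{\weak{\rho}}$ is a subgraph of $G_\rho$; for the converse, decompose $\nu$ into the per-class sequences $\nu_l$, apply Lemma~\ref{lemma-condC} to each, take the union of the resulting $\sigma_l$, and recover $\rho$ via Lemma~\ref{lemma-A}. Your write-up is in fact more explicit than the paper's in justifying $\weak{\rho}=\nu$ and in flagging the bookkeeping needed to view each $\nu_l$ as a legitimate input to Lemma~\ref{lemma-condC}.
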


\begin{proof}
If there is $\sigma$ such that $\sigma, 0 \models \rho$ then, by Lemma \ref{lemma-condC}, Property \ref{property-C} holds for $G_\rho$.
Since $G_{\weak{\rho}}$ is a subgraph of $G_\rho$, a fortiori Property \ref{property-C} holds for $G_{\weak{\rho}}$.

Conversely, if Property \ref{property-C} holds for $G_\nu$, then each set of variables $V_i$, with $i \in \set{1..h}$, in which $V$ is partitioned induces an ultimately periodic sequence $\nu_{V_i}$ of symbolic valuations that only include constraints on $V_i$, such that its graph $G_{\nu_{V_i}}$ is not connected to any other graph $G_{\nu_{V_j}}$, for $j \neq i$.
Then, Lemma \ref{lemma-condC} can be applied to $\nu_{V_i}$, which then admits an arithmetic model $\sigma_{V_i} : \Zed \times V_i \to D$.
By definition, each $\sigma_{V_i}$ assigns a different set of variables, so the complete arithmetic model $\sigma$ is simply the union of all $\sigma_{V_i}$.
By Lemma \ref{lemma-A}, $\sigma$ induces a sequence of symbolic valuations $\rho$, and $\sigma, 0 \models \rho$, $\weak{\rho} = \nu$ by construction.
\end{proof}

Thanks to Lemmata \ref{lemma-weak-symbolic} and \ref{prop:weak-C}, in Formula \eqref{our-nonex-C} and in the corresponding QF-EU($\D$) encoding of Formula \eqref{qf-euf-condition-C} we can focus only on relations between points that belong to the same set $V_i$.

\section{Complexity and Completeness}
\label{sec:ComplAndCompl}

\subsection*{Complexity}\label{subsection-complexity}

In the following we provide an estimation of the size of the formulae
constituting the encoding of Section \ref{section-encoding}, including,
where they are needed, the constraints of Section \ref{subsection-A_C}.

The encoding of Section \ref{section-encoding} is linear in the size of the
formula $\phi$ (and of the bound $k$).
In fact, if $m$ is the total number of subformulae and $n$ is the total number of
temporal operators $\U$ and $\R$ occurring in $\phi$, 
the QF-EU$\D$ encoding
requires $n+1$ integer variables (one each for $\bm{loop}$ and
the $\bm{j_\psi}$'s) and $m$ unary 
predicates (one for each subformula in $cl(\phi)$).

The total size of the formulae in Section
\ref{subsection-A_C} is polynomial in bound $k$, in the cardinality of the set of
variables and constants, and in the size of symbolic valuations.
In fact, the encoding of the condition for the existence of an
arithmetical model requires a QF-EU$\triple{\Nat}{<}{=}$ formula of
size quadratic in the length $k$, cubic in the number $|V|$ of variables,
and double quadratic in the size of symbolic valuations.

Let $\lambda$ be the size $\lambda=\lceil\phi\rceil - \lfloor\phi\rfloor+1$
of symbolic valuations and $V'$ be the set $V \cup \const(\phi)$.
The total number of non-trivial predicates $\Predlocfwd_{x,y},\Predlocfwds_{x,y}$
(resp. $\Predlocbwd_{x,y},\Predlocbwds_{x,y}$), i.e., those where $h\leq
m$, is defined by the following parametric formula (where $a,b$ are the sets to which $x,y$ belong, respectively):
\begin{align*}
N(a,b) & = (k+1)\sum_{i=1}^\lambda |a|\cdot \left( (\lambda-i) + (|b|-1)\cdot(\lambda-i+1) \right) \\
       & = (k+1)\left(|a||b| \frac{\lambda(\lambda+1)}{2} - |a|\lambda\right).
\end{align*}
Each predicate has fixed dimension and the number of non-trivial ones results from the sum of the following three cases:
\begin{itemize}
\item $x,y \in V$, which is $N(V,V)$
\item $x \in V$, $y \in \const(\phi)$, which is $N(V,\const(\phi))$
\item $x \in \const(\phi)$, $y \in V$, which is $N(\const(\phi),V)$.
\end{itemize}
that is bounded by $N_{local}=N(V',V')\leq (k+1)|V'|^2 \lambda^2$.

To compute the size of formulae defining $\Predfwd_{x,y},\Predfwds_{x,y}$
(resp. $\Predbwd_{x,y},\Predbwds_{x,y}$) we first determine the number of pairs of points
for which $\Predfwd_{x,y}(j,h,i,m)$ is not trivially false.
The following function $N_{p,p'}$
\begin{align*}
N_{p,p'} & = |V'|\sum_{i=\lfloor\phi\rfloor}^{k+\lceil\phi\rceil}|V'|(k+\lceil\phi\rceil-i)
= |V'|^2\sum_{i=0}^{k+\lambda-1}i = |V'|^2\frac{(k+\lambda-1)(k+\lambda)}{2}  \\
         & \leq |V'|^2(k+\lambda)^2
\end{align*}
corresponds to the number of pairs of points $p,p'$ that generate non-trivial predicates $\Predfwd_{x,y}$, $\Predfwds_{x,y}$
(resp. $\Predbwd_{x,y}$, $\Predbwds_{x,y}$) because their position is such that
$\sv(p_1)+\shift(p_1) \leq \sv(p_2)+\shift(p_2)$ (resp.
$\sv(p_1)+\shift(p_1) \geq \sv(p_2)+\shift(p_2)$).
We compute the size of (non-trivial) formulae \eqref{eq:FLTdef}-\eqref{eq:FLEQdef} defining $\Predfwds_{x,y},\Predfwd_{x,y}$ (and
$\Predbwds_{x,y},\Predbwd_{x,y}$) by counting the number of subformulae involved
in their definition.
We consider only the case for $\Predfwds_{x,y}$ because the others have the same
(worst) complexity.
Each Formula \eqref{eq:FLTdef} involves, in the worst case (i.e., for points that do not belong to the
same symbolic valuation), $|V|-1$ variables $z\in V$ with respect to
$\lambda$ different positions $u$.
Then, an instance of \eqref{eq:FLTdef} requires at most $(|V|-1)\lambda$ disjuncts.
The upper bound for the total size of all formulae defining
predicates $\Predfwd_{x,y},\Predfwds_{x,y}$ (resp. $\Predbwd_{x,y},\Predbwds_{x,y}$) is
$$
N_{far}=N_{p,p'}2(|V|-1)\lambda \leq \lambda|V||V'|^2(k+\lambda)^2\leq \lambda|V'|^3(k+\lambda)^2.
$$

The analysis of formulae $|CongruenceConstraints|_k$ shows that each
point belongs to $\lambda$ symbolic valuations (e.g., if $\lceil\phi\rceil = 0$, $\lfloor\phi\rfloor = -1$,
then $\lambda = 2$, and points $(x,4,1)$ and $(x,5,0)$ correspond to the same element),
and for all pairs
$p_1,p_2$ we define the consistency of the definition of predicate $\Predfwds_{x,y}$ among the $\lambda$ points corresponding to
$p_1$ and the $\lambda$ points corresponding to $p_2$.
Therefore, we need at most
$$
N_{CC}=4|V'|^2\sum_{i=1}^{k}\lambda^2 i \leq 4\lambda^2|V'|^2k^2
$$
constraints $|CongruenceConstraints|_k$, where each constraint has
fixed dimension.

Finally, predicate $C_{x,x'}$ appears in Formula \eqref{qf-euf-condition-C}
once for each of the $|V'||V|\lambda^2$ pairs of points ${x,x'}$.
In addition, each instance of $C_{x,x'}$ has $\lambda^2$ disjuncts, one for each possible pair $h,h' \in \interval{\lfloor\phi\rfloor}{\lceil\phi\rceil}$.
Therefore, the total size of Formula \eqref{qf-euf-condition-C} is $N_C=|V||V'|\lambda^4$.

Finally, the complete set of formulae that we require to capture the existence
condition of arithmetical models over discrete domains has the following total size:
\[
\begin{gathered}
4N_{local} + 4N_{far} + 4N_{CC}+N_C \leq \\
4(k+1)|V'|^2 \lambda^2 + 4\lambda|V'|^3(k+\lambda)^2+16\lambda^2|V'|^2k^2 + |V||V'|\lambda^4.
\end{gathered}
\]
In conclusion, for a given formula $\phi$, the parameters $\lambda$ and $|V'|$ are fixed, hence the size is $\mathcal{O}(k^2)$.

\subsection*{Completeness}\label{section-completeness}

Completeness has been studied in depth for Bounded Model Checking.
Given a state-transition system $M$, a temporal logic property $\phi$ and a bound $k>0$, 
BMC looks for a
witness of length $k$ for $\neg \phi$.
If no witness exists then length $k$ may be increased and BMC may be reapplied.
In principle, the process terminates when a witness is found or when $k$ reaches a
value, the \emph{completeness threshold} (see Definition \ref{completeness}), which guarantees that if
no counterexample has been found so far, then no counterexample disproving
property $\phi$ exists in the model.
For LTL it is shown that a completeness threshold always exists;
\cite{CKOS04} shows a procedure to estimate an over-approximation of the value,
 by satisfying a formula representing
the existence of an accepting run of the product automaton $M\times B_{\neg
  \phi}$, where $B_{\neg  \phi}$ is the B\"uchi automaton for $\neg\phi$ and $M$ is the system to be verified.

In \cite{BFRS11} we have already given a positive answer to the
problem of whether
there exists a completeness threshold for the satisfiability problem for
CLTLB($\D$), provided that ultimately periodic symbolic models of the form
$\alpha\beta^\omega $ of CLTLB($\D$) formulae admit an arithmetic model.
By the results of Section \ref{subsec:completion} this occurs when the constraint system $\D$
has the completion property, or when condition $C$ holds.
%
%
In \cite{BFRS11} we used a mixed automata- and logic-based approach to show
how completeness can be achieved for the satisfiability problem.
In that approach automata
$\A_C$ and $\A_\ell$ described in Section \ref{section-symbolic} are represented through CLTLB($\D$) formulae $\phi_{\A_C}$ and
$\phi_{\A_\ell}$, respectively, described below.
More precisely, formula $\phi_{\A_C}$ captures the runs of  automaton $\A_C$, and similarly for $\phi_{\A_\ell}$ and $\A_\ell$.
Then, checking the satisfiability for $\phi$ is reduced to studying a finite amount of  $k$-satisfiability problems of formula $\phi \wedge \phi_{\A_C} \wedge \phi_{\A_\ell} $ for increasing values of $k$.
Automaton $\A_\ell$ recognizes sequences of locally consistent symbolic valuations, so its runs are the models of formula
$
\phi_{\A_\ell} :=  \G(\bigvee_1^m sv_i).
$
Since the bounded representation of formulae (see Section \ref{section-encoding}) is not
contradictory (i.e., two consecutive symbolic valuations are satisfiable
when they are locally consistent),
 the previous formula exactly represents words of $\Lng(\A_\ell)$.
Formula $\phi_{\A_C}$, instead, is derived from automaton $\A_C$, by means of
the translation in \cite{SC85}.
Automaton $\A_C$ is built by complementing
automaton $\A_{\neg C}$ \cite{S88}, recognizing the
complement language of $\Lng(\A_C)$, which is
obtained according to the procedure proposed in \cite{DD07}. 
%
%
Finally, to check the satisfiability of $\phi$ we verify whether formula $\phi \; \wedge
\; \phi_{\A_C} \wedge \ \phi_{\A_\ell}$ 
is $k$-satisfiable, with $k \in \N$.
%
The existence of a finite completeness threshold for the procedure above is a consequence of
the existence of automaton $\A_\phi$ (see Section \ref{section-symbolic}) recognizing symbolic models of
$\phi$, and of Lemma \ref{lemma-condC} and Proposition \ref{prop-completion}.
Let $rd(\A_\phi)$ be the recurrence diameter of $\A_\phi$, i.e., the longest loop-free path in the automaton that starts from an initial state \cite{KS03}.
Then, if formula $\phi \; \wedge\; \phi_{\A_C} \wedge \ \phi_{\A_\ell}$ is not $k$-satisfiable for all $k \in [1, rd(\A_\phi)+1]$, then there
is no ultimately periodic symbolic model $\rho$ such that both $\rho,0
\symodels \phi$ and there exists an arithmetic model $\sigma$ with $\sigma,0 \models \rho$.
Hence, formula $\phi$ is unsatisfiable.
Otherwise, we have found an ultimately periodic symbolic model $\rho$
of length $k>0$ which admits an arithmetic  model $\sigma$.
From the $k$-bounded solution, we have a symbolic model $\rho =
\alpha\beta^\omega$ and its bounded arithmetic model $\sigma_k$.
The infinite model $\sigma$ is built from $\sigma_k$ by iterating infinitely
many times the sequence of symbolic valuations in $\beta$.
Therefore, the completeness bound for BSP of CLTLB($\D$)
formulae is defined by the recurrence diameter of $\A_{\phi}$.

Thanks to the results of the previous sections, we can simplify the method
presented in \cite{BFRS11}.
We avoid the construction of automaton $\A_{\neg C}$ through Safra's method and the construction of
set $SV(\phi)$.
In particular, we take advantage of the definition of $k$-bounded
models of $\phi$.
By Lemma \ref{lemma-B},
a finite sequence $\sigma_k$  of $\D$-valuations
induces a unique locally consistent
sequence of symbolic valuations $\rho$, such that $\sigma_k,i\models
\rho(i)$, for all $i \in \interval{0}{k}$.
Therefore, we do not need to precompute set $SV(\phi)$ of symbolic valuations and formula $\phi_{\A_\ell}$ is no longer needed to obtain
a finite locally consistent  sequence of symbolic valuations.
If $\phi$ is a formula of  CLTLB$(\D)$ and $\D$ has
the completion property, we can simply solve $k$-satisfiability problems for
$\phi$ instead of $\phi \wedge \phi_{\A_\ell}$;
when $\D$ does not have the completion property,
Formula \eqref{our-nonex-C} allows us to avoid the construction of
$\A_C$.
In the first case, by Theorems \ref{theorem-eq-encod-run} -- \ref{theorem-eq-BPS-run} and Proposition \ref{proposition-BSP-to-SAT-I} $|\phi|_k$ is
satisfiable if, and only if, there is an ultimately periodic run $\alpha\beta^\omega$ which
is recognized by automaton $\A_s\times\A_\ell$.
In the second case, Proposition
\ref{proposition-BSP-to-SAT-II} guarantees that $|\phi|_k$ is
satisfiable and Formula \eqref{our-nonex-C} does not hold if, and only if, $\phi$ is satisfiable.
Therefore, model $\alpha\beta^\omega$ obtained by solving the
$k$-satisfiability problem belongs to the language
recognized by automaton $\A_s\times\A_\ell$ and also to the one recognized by $\A_C$.

The completeness property still holds without the explicit representation of automata
$\A_\ell$ and $\A_C$ in the formula we check for satisfiability.
Since the role of Formula \eqref{our-nonex-C} is to filter, by eliminating edges in the automaton, some of the symbolic models of $\phi$ which, in turn, by Theorems \ref{theorem-eq-encod-run} -- \ref{theorem-eq-BPS-run} correspond to the runs of automaton $\A_s\times\A_\ell$, the completeness threshold for our decision procedure can be over-approximated by the recurrence diameter of $\A_s\times\A_\ell$, which is at most exponential in the size of $\phi$.
Since the number of control states of automaton $\A_s$ is at most $\mathcal{O}(2^{|\phi|})$,
a rough estimation for the completeness threshold is given by the value
$|SV(\phi)|\cdot 2^{|\phi|}$.
The number of symbolic valuations $|SV(\phi)|$ is, in the worst case, exponential in the
size of formula $\phi$ \cite{DD07}.

\section{Applications of k-bounded satisfiability}\label{sec-applications}

The decision procedure described in this paper has been implemented in our bounded
satisfiability checker \zot{} ({\small \url{http://zot.googlecode.com}}).
The $ae^2Zot$ plug-in of \zot{} 
solves $k$-satisfiability for CLTLB over
Quantifier-Free Presburger arithmetic ($\QFP$), of 
which IPC$^*$ is a fragment, but it also supports the constraint system $\triple{\Real}{<}{=}$.
Even if constraint systems like IPC$^*$, or fragments thereof, do not provide a
counting mechanism (provided, for instance, through the addition
of functions like $+$ in $\QFP$), they can still be used to represent an abstraction of a richer
transition system.
In fact, functions like addition, or in general
relations over counters which embed a counting mechanism, make the
satisfiability problem of CLTLB undecidable (see 
\cite[Section 9.3]{DD07}).

We next examplify the use of the CLTLB
logic to specify and verify systems behavior, thus highlighting the applicability of the approach.

We use CLTLB over $\triple{D}{<}{=}$ to specify a sorting process of a sequence of fixed length $N$ 
of values in $D$.
Let $\mathbf{v} \in D^N$ be the (initial) vector that we want to sort
and $\mathbf{a} \in D^N$ be the vector during each step of sorting.
We write $\mathbf{v}(i)$ for the $i$-th component of $\mathbf{v}$, $1 \leq i \leq N$.
Notice that we will use the notation $\mathbf{a}(i)$,
which, strictly speaking, is not a CLTLB term;
however, since the length of the array is fixed, we can use $N$
variables $a_i$ to represent the elements of $\mathbf{a}$, one for each $\mathbf{a}(i)$.
Then, in the following, if $\mathbf{a}(i)$ is replaced with $a_i$, one obtains CLTLB$\triple{D}{<}{=}$ formulae.
We define a set of formulae representing a sorting process which swaps
unsorted pairs of values at some nondeterministically chosen position in the vector
(we report here only the most relevant formulae).
A variable $p \in \interval{0}{N-1}$ stores the position of elements which are
a candidate pair for swapping; i.e., $p=i$ means that 
$\mathbf{a}(i)$ is swapped with 
$\mathbf{a}(i+1)$, while $p=0$ means that no elements are swapped (0 is not a position of the vector). 
A nondeterministic algorithm can swap arbitrarily two elements in
$\interval{1}{N}$; then, the only constraint on variable $p$ is that it is $0 \leq p < N$, i.e.:
$\G(p < N \wedge p \geq 0)$.
An unsorted pair of values is indexed by a nonzero value of $p$:
$$
\G\left(\bigwedge_{i \in \interval{1}{N-1}} p=i \Rightarrow \mathbf{a}(i)>\mathbf{a}(i+1)\right).
$$
A swap between two adjacent positions of $\mathbf{a}$ is formalized
by the following formula:
$$
\G\left(\bigwedge_{i \in \interval{1}{N-1}} p = i \Rightarrow
  \aX\mathbf{a}(i) = \mathbf{a}(i+1)  \wedge \aX\mathbf{a}(i+1) = \mathbf{a}(i)\right).
$$
Vector $\mathbf{a}$ is unchanged when no pairs are candidate for
swapping: $\G(p = 0 \Rightarrow \bigwedge_{i \in
  \interval{1}{N}}(\mathbf{a}(i) = \aX\mathbf{a}(i)))$.
Various properties of the algorithm have been verified through the $ae^2Zot$ plugin of the \zot{} tool, e.g., whether there exists a way to sort array $\mathbf{a}$ within $k$ steps (with $k$ the verification bound), which is formalized by the following formula:
$$
\F\left(\bigwedge_{i \in \interval{1}{N-1}} (\mathbf{a}(i) \leq \mathbf{a}(i+1)) \wedge
  \bigwedge_{ i \in \interval{1}{N}}\bigvee_{j \in \interval{1}{N}} (\mathbf{a}(i) = \mathbf{v}(j))\right).
$$

\section{Related works}
\label{section-related-works}

For some constraint system $\D$ more expressive
than IPC$^*$, the future fragment
CLTL($\D$) can encode runs of Minsky machines, a class of Turing-equivalent
two-counter automata.
Minsky machines are finite state automata endowed with two nonnegative integer
counters $c_1,c_2$ which can be
either incremented or decremented by $1$ and tested against $0$ over
transitions.
For example, to represent increment and decrement instructions 
the grammar of formulae $\xi$ of IPC$^*$ can be enriched with
formulae of the form $x < y + d$, where $d
\in D$ and $x,y$ are variables (these correspond to difference logic -- $\DL$ -- constraints).
Hereafter, we write $\cltl{a}{b}(\D)$ to denote the language of CLTL
formulae such that the cardinality of $V$ is $a$ and 
$\lceil\phi\rceil$ is $b$ (while $\lfloor\phi\rfloor$ is of course 0).

The first undecidability result for the satisfiability of CLTL
is given by Comon and Cortier \cite[Theorem 3]{CC00} who show that halting runs of a Minsky
machine can be encoded into $\cltl{3}{1}(\DL)$ formulae
where one auxiliary counter encodes control states of the system
labeling instructions.
Therefore, the satisfiability problem for $\cltl{3}{1}(\DL)$ is
$\Sigma_1^1$-hard.
The authors suggest a way to regain decidability by means of a syntactic
restriction on formulae including the $\U$ temporal operator.
The ``flat'' fragment of $\cltl{\omega}{1}(\DL)$ consists of CLTL
formulae such that subformula $\phi$ of $\phi\U\psi$ is $\top$,
$\perp$ or a conjunction $\zeta_1\wedge\dots \wedge \zeta_m$ where $\zeta_i
\in \DL$.
The fragment has a nice
correspondence with a special class of counter system (flat relational
counter system) with B\"uchi
acceptance condition, for which
the emptiness problem is decidable.
%
Satisfiability is undecidable also in the case of $\cltl{1}{2}(\DL)$
and $\cltl{2}{1}(\DL)$.
In fact, even though $\cltl{1}{2}(\DL)$ has only one variable,
it is expressive enough to encode runs of
Minsky machines:
models of $\cltl{1}{2}(\DL)$ formulae can represent
counter $c_1$ at even positions and counter $c_2$ at odd positions.
The recurrence problem for
nondeterministic Minsky machines, which is 
$\Sigma_1^1$-hard \cite{Alur&Henzinger94},
can be reduced to the satisfiability problem for
$\cltl{1}{2}(\DL)$, which then results $\Sigma_1^1$-hard.
From the previous undecidability results, the satisfiability problem for the CLTL
language over two integer variables
$\cltl{2}{1}(\DL)$ is $\Sigma_1^1$-hard.
In fact, formulae of $\cltl{1}{2}(\DL)$ can be syntactically translated to
formulae of $\cltl{2}{1}(\DL)$ by means of a map
$f$ such
that $\phi$ belonging to $\cltl{1}{2}(\DL)$ is satisfiable if, and
only if, $f(\phi)$ belonging to $\cltl{2}{1}(\DL)$ is satisfiable.
%
Both the languages $\cltl{1}{2}(\DL)$ and $\cltl{2}{1}(\DL)$ are also
$\Sigma_1^1$-complete by reducing
the $\Sigma_1^1$-hard model-checking problem to satisfiability.


The satisfiability (and model-checking) problem for CLTL over
structure $\triple{D}{<}{=}$ with $D\in\{\N,
\Zed, \Q, \Real\}$ is studied in \cite{DD07}, and for
$\IPC^*$ in \cite{DG07}.
Decidability of the satisfiability problem for the above cases is
shown by means of an automata-based approach similar to the standard case
for LTL.
Satisfiability for $\cltl{\omega}{\omega}(\IPC^*)$ and
$\cltl{\omega}{\omega}(<,=)$ over $\Nat, \Zed, \Q, \Real$ is obtained by
Demri and Gascon in \cite{DG05} by reducing it to the emptiness
of B\"uchi automata.
Given a CLTL formula $\phi$, it is possible to define an automaton
$\A_\phi$ such that $\phi$ is satisfiable if, and only if, $\Lng(\A_\phi)$ is not
empty.
Since the emptiness of $\Lng(\A_\phi)$ in the considered structures is
decidable with $\pspace$ upper bound (in the
dimension of $\phi$), then the satisfiability problem is also decidable with
the same complexity.
We remark that the notion of symbolic valuation in that work is different from the one we adopted in Definition~\ref{def-sv}.
Since the procedure is purely symbolic, constraints representing equality relation $x=d$ and constraints of the form $x\equiv_c d$, with $d,c\in D$, are explicitly considered, as no arithmetical model $\sigma$ is available.
A symbolic valuation is defined there as a triple $\langle S_1, S_2, S_3\rangle$ where $S_1$
is a maximally consistent set of $\D$-constraints over $\terms(\phi)$ and $\const(\phi)$; $S_2$ is
a set of constraints of the form $x=d$, and $S_3$ is a set of constraints $x\equiv_K c$, where constant $K$ is the least common multiple of constants occurring in constraints $x \equiv_c y$ and $x \equiv_c y + d$.

Sch\"{u}le and Schneider \cite{Schuele&Schneider07} provide a general
algorithm to decide bounded 
$\LTL(L)$ model-checking problems of infinite state systems where
$L$ is a general underlying logic.
An $\LTL(L)$ formula $\phi$ is translated into
an equivalent B\"uchi automaton $\A_\phi$ which is symbolically 
represented by means of a structure defining its transition relation and
acceptance condition.
Then, the $\LTL(L)$ model-checking problem is reduced to the $\mu$-calculus
model-checking problem modulo $L$, i.e., a verification of a fixpoint
problem for a given Kripke structure with respect to symbolic
representations of $\A_\phi$ and the underlying language $L$.
Whenever properties are neither proved nor disproved over finite
computations, their truth value  
can not be defined.
For this reason, the authors adopt a three-valued logic to evaluate formulae
whose components may have undefined
value.
Bounded model-checking is performed essentially by computing approximate
fixpoint sets of the desired formula and by checking whether the initial condition 
is a subset of such set of states.
The work of \cite{Schuele&Schneider07} is based on previous results presented in
\cite{Schule&Schneider04}, which defines
a hierarchy of B\"uchi automata (and, therefore, temporal
formulae) for which infinite state bounded model-checking is complete.
The specification language of \cite{Schule&Schneider04} is the
quantifier-free fragment of Presburger 
LTL, $\LTL(\PA)$, with past-time temporal modalities.
The bounded model-checking problem is defined with respect to Kripke
structures $\triple{S}{I}{R}$ and it
is solved by means of a reduction to the satisfiability of Presburger formulae.
In general, acceptance conditions of B\"uchi automata, requiring that
some states are visited infinitely often,
can not be handled immediately by bounded
approaches which do not 
consider ultimately periodic models used, for instance, in the bounded model-checking
approach of Biere et al. \cite{BCCZ99} or in the encoding of B\"uchi
automata of de Moura et al. \cite{dMRS02}.
Therefore, Sch\"{u}le and Schneider follow a different approach, tailored
to bounded verification, and
focus on the analysis of some classes of $\LTL$ 
formulae, denoted TL$_\F$ and TL$_\G$,  such that the corresponding
B\"uchi automaton has a simpler
accepting condition which does not involve
infinite computations.
TL$_\F$ and TL$_\G$ are the sets of $\LTL$ 
formulae such that each occurrence of a
weak/strong temporal operator is negative/positive and
positive/negative, respectively.
$\LTL$ formulae are then represented symbolically by an automaton which
is built using the method proposed by Clarke
et al. in \cite{CGH94} rather than using the Vardi-Wolper 
construction \cite{vw}.

Reducing the model-checking problem to Presburger satisfiability is a rather
standard approach when dealing with infinite-state systems.
Demri et al. in  \cite{DFGvD10} show how to solve the $\LTL(\PA)$ model-checking
problem for the class of 
{\em admissible} counter systems, which are finite state automata endowed
with variables over $\Zed$ whose transitions are labeled by Presburger
formulae.
In \cite{DFGvD10} the authors study the decidability of the model-checking
problem for admissible counter systems
with respect to the first-order CTL$^*$ language over Presburger formulae.

Hodkinson et al. study decidable fragments of first-order temporal logic in
\cite{HWZ00}.
Although some axiomatizations of first-order temporal logic are known,
various incompleteness results induce the authors to study useful
fragments with expressiveness between that of propositional and of first-order temporal logic.
Hodkinson et al. are interested in studying the satisfiability problem
and they do not consider the model-checking problem, which requires a formalism
defining the interpretation of first-order variables over time.
In other words, variables do not vary over time and their temporal
behavior is not relevant.
The languages investigated by the authors are obtained by
restricting both the first-order part and the temporal part.

Bultan et al. present a symbolic model checker for
analyzing programs with unbounded integer domains \cite{BGP99}.
Programs are defined by an event-action language where atomic events
are expressed by Presburger formulae over programs variables $V$.
Semantics of programs is defined in terms of infinite transition
systems
where the states are determined by the values of variables.
The specification language is a CTL-like temporal logic enriched with
Presburger-definable constraints over $V$.
Solving the CTL model-checking problem involves the computation of
least fixpoints over sets of programs states:
the abstract interpretation of Cousot and Cousot \cite{CC77} 
provides a method to compute approximation of
fixpoints.
Model-checking is done conservatively: the 
approximation technique admits false negatives, i.e., the solver may 
indicate that a property does not hold when it actually does.
Programs are analyzed symbolically by means of symbolic execution
techniques
and they are represented by means of Presburger-definable transition systems
where Presburger formulae represent
symbolically the transition relation and the set of program states.
Then, the state space is partitioned to reduce the complexity of
verification and to obtain decidability for some classes of temporal properties, such as
reachability ones.
Experimental results, based on the standard Bakery algorithm
and the Ticket mutual-exclusion algorithm, show the effectiveness of the
method when 
verification involves a mutual exclusion requirement. 


\section{Conclusions and further developments} \label{section-conclusions}

In this paper, we provide a procedure for deciding the satisfiability problem for CLTLB over some suitable constraint systems.
The main advantage of our approach is that it allowed us to implement the first effective tool 
based on SMT-solvers for those logics.
On one side, this method illustrates a new way to solve verification problems of formalisms dealing with variables ranging over infinite domains and having  an inherent notion of discrete time as that of LTL.
Instead of building an automaton for proving the satisfiability of a formula (which would be unfeasible in practice), we devise a direct method to construct one of its accepting runs which define a model for the formula.
On the other hand, our framework constitutes a foundation for defining extensions to handle different temporal formalisms.
In \cite{BRS13} we use the same approach presented in this paper to allow for the use of variables whose behaviour is restricted to clocks \cite{AD94} into CLTLB$\triple{\Real}{<}{=}$.
A clock is a nonnegative variable accumulating the time elapsed since the position where it was reset to 0 and that can be used to measure time between two discrete positions.
When dealing with clocks, it is common to consider a uniform progression of time; the time elapsing is unique for all the clocks that are updated by the same value at each position of the discrete model.
In \cite{BRS13} we prove the decidability and the complexity of the satisfiability problem for the CLTLB logic endowed with a finite set of clocks, and we provide an effective implementation to solve it through SMT-solvers which extends the one presented in this work.

In \cite{BRS13-GandALF} we devise a reduction from MITL formulae interpreted over continuous time to CLTLB formulae with clocks.
Since the reduction guarantees the equisatisfiability between the MITL formula and the resulting translation into CLTLB formulae, the satisfiability problem for the former logic can actually be solved. 



\bibliographystyle{model2-names}
\bibliography{bibliografia}

\end{document}